\renewcommand{\phi}{\varphi}
\newif\iffull\fullfalse
\newif\ifapp\apptrue
\renewcommand{\section}{\@startsection{section}{1}{\z@}{-10\p@ \@plus -4\p@ \@minus -4\p@}{5\p@ \@plus 4\p@ \@minus 4\p@}{\normalfont\bfseries\boldmath\rightskip=\z@ \@plus 8em\pretolerance=10000 }}
  \renewcommand\subsection{\@startsection{subsection}{2}{\z@}{-2\p@ \@plus -4\p@ \@minus -4\p@}{-0.5em \@plus -0.22em \@minus -0.1em}{\normalfont\normalsize\bfseries}}
\renewcommand{\paragraph}[1]{\noindent {\bf #1}}
\newcommand{\keywd}[1]{\mathtt{#1}}
\newcommand{\myread}[1]{!{#1}}
\newcommand{\wcpo}[1]{$\omega$-cpo}
\newcommand{\wcpos}[1]{$\omega$-cpos}
\newcommand{\myref}[1]{\keywd{ref}(#1)}
\newcommand{\sq}[4]{\tensor*[^{#1}_{#2}]{\Diamond}{^{#3}_{#4}}}
\newcommand{\sqsol}{\Diamond}
\newcommand{\intt}[1]{\textit{int}(#1)}
\newcommand{\effto}[1]{\stackrel{#1}{\to}}
\newcommand{\reff}[1]{\textit{loc}(#1)}
\newcommand{\funn}[1]{\textit{fun}(#1)}
\newcommand{\partfun}{\rightharpoondown}
\newcommand{\mtrue}{\keywd{true}}
\newcommand{\mfalse}{\keywd{false}}
\newcommand{\und}[1]{\underline{#1}}
\newcommand{\inttype}{\keywd{int}}
\newcommand{\unittype}{\keywd{unit}}
\newcommand{\unitval}{\keywd{()}}
\newcommand{\labs}{\mathbb{L}}
\newcommand{\dom}[1]{\mathrm{dom}({#1})}
\newcommand{\cod}[1]{\mathrm{cod}({#1})}
\newcommand{\pause}{\vspace{1.5ex}}
\newcommand{\gap}{\quad\quad}
\newcommand{\ba}{\begin{array}}
\newcommand{\ea}{\end{array}}
\newcommand{\squelch}[1]{}
\newcommand{\vfix}[3]{\keywd{rec}\:{#1}\:{#2} = {#3}}
\newcommand{\letin}[2]{\keywd{let}\:{#1}\!\Leftarrow\!{#2}\:\keywd{in}\:}
\newcommand{\opletrec}[3]{\keywd{let\ rec}}
\newcommand{\assign}[2]{{#1}:={#2}}
\newcommand{\rdsin}[1]{\mathrm{rds}({#1})}
\newcommand{\wrsin}[1]{\mathrm{wrs}({#1})}
\newcommand{\alsin}[1]{\mathrm{als}({#1})}
\newcommand{\myif}[3]{\keywd{if}\ #1\ \keywd{then}\ #2\
  \keywd{else}\ #3}
\newcommand{\eff}{\varepsilon}
\newcommand{\reads}{\mathrm{rds}}
\newcommand{\writes}{\mathrm{wrs}}
\newcommand{\allocs}{\mathrm{als}}
\newcommand{\fullonly}[1]{}
\newlength{\lruleZeroname}
\newcommand{\regs}[1]{\mathrm{regs}({#1})}
\newcommand{\regid}{\ensuremath{\mathsf{r}}}
\newcommand{\Regids}{\ensuremath{\mathit{Regs}}}
\newcommand{\Locs}{\ensuremath{\labs}}
\newcommand{\new}{\ensuremath{\mathit{new}}}
\newcommand{\Stores}{\ensuremath{\mathbb{H}}}
\newcommand{\sem}[1]{\ensuremath{\llbracket {#1} \rrbracket}}
\newcommand{\semV}[1]{\ensuremath{\llceil {#1} \rrceil
}}
\newcommand{\nwrs}{{\ensuremath{\mathit{nwrs}}}}
\newcommand{\elEffs}{\mathcal{E}}
\newcommand{\aEff}[1]{\ensuremath{\mathit{al}_{#1}}}
\newcommand{\rEff}[1]{\ensuremath{\mathit{rd}_{#1}}}
\newcommand{\wEff}[1]{\ensuremath{\mathit{wr}_{#1}}}
\newcounter{Examplecount}
\renewenvironment{example}
{
\stepcounter{theorem} {\bf \noindent Example
\arabic{section}.\arabic{theorem} }}
{\qed}
\renewenvironment{proof}{\vspace{-1mm} \noindent {\bf Proof}\quad}{\qed}
\newcommand{\loc}{\mathfrak{l}}
\newcommand{\cloc}{\ensuremath{\mathsf{l}}\xspace}
\newcommand\w{\ensuremath{\mathsf{w}}\xspace}
\newcommand\q{\ensuremath{\mathsf{q}}\xspace}
\newcommand{\world}{\ensuremath{\mathbf{W}}\xspace}
\newcommand\heap{\ensuremath{\mathsf{h}}\xspace}
\newcommand{\Values}{\mathbb{V}}
\newcommand{\Comps}{\mathbb{C}}
\newcommand{\Astores}{\mathfrak{S}}
\newcommand\val{\ensuremath{\mathsf{v}}\xspace}
\newcommand\vval{\ensuremath{v}\xspace}
\newcommand\cval{\ensuremath{\mathsf{c}}\xspace}
\newcommand\ccval{\ensuremath{c}\xspace}
\newcommand\Std{\ensuremath{\mathit{Std}}\xspace}
\newcommand\Rscr{\ensuremath{\mathcal{R}}\xspace}
\newcommand{\ety}[2]{{#1}\mathrel{\&}{#2}}
\newcommand{\valty}[1]{#1}
\title{Abstract Effects and Proof-Relevant Logical Relations}
\author{Nick Benton\iffull\inst{1}\fi \and Martin Hofmann\iffull\inst{2}\fi  \and Vivek
Nigam\iffull\inst{3}\fi}
\institute{Microsoft Research, LMU Munich and Federal University of Para\'iba}
\institute{\iffull
Microsoft Research, UK, \email{nick@microsoft.com}
\and\fi
LMU, Munich, Germany, 
\email{hofmann@ifi.lmu.de}
\iffull\and
Federal University of Para\'iba, Brazil, \email{vivek.nigam@gmail.com}\fi
}
\begin{document}
\maketitle
\begin{abstract}
  We introduce a novel variant of logical relations that maps types 
 not merely to partial equivalence relations on values, as is
  commonly done, but rather to a proof-relevant generalisation
  thereof, namely setoids. 
\squelch{
A setoid is like a category all of whose
  morphisms are isomorphisms (a groupoid) with the exception that no
  equations between these morphisms are required to hold. 
}
  The objects of a setoid establish that values inhabit semantic
  types, whilst its morphisms
  are understood as proofs of semantic equivalence.

  The transition to proof-relevance solves two well-known
  problems caused by the use of existential quantification over future
  worlds in traditional Kripke logical relations: failure of
  admissibility, and spurious functional dependencies.

  We illustrate the novel format with two applications: a direct-style
  validation of Pitts and Stark's equivalences for ``new'' and a
  denotational semantics for a region-based effect system that
  supports type abstraction in the sense that only externally visible
  effects need to be tracked; non-observable internal modifications,
  such as the reorganisation of a search tree or lazy initialisation,
  can count as `pure' or `read only'. This `fictional purity' allows
  clients of a module soundly to validate more effect-based program
  equivalences than would be possible with traditional effect systems.
\end{abstract}
\squelch{
\category{D.3.3}{Programming Languages}{Language Constructs and Features -- Dynamic storage management}
\category{F.3.2}{Logic and Meanings of Programs}{Semantics of Programming Languages -- Denotational semantics, Program analysis}
\category{F.3.2}{Logic and Meanings of Programs}{Studies of Program Constructs -- Type structure}
\terms
Languages, Theory
\keywords
Type and effect systems, region analysis, logical relations, parametricity, program transformation
}

\iffull \else \vspace{-4mm} \fi
\section{Introduction}
\label{sec:intro}
The last decade has witnessed significant progress in modelling and
reasoning about the tricky combination of effects and higher-order
language features (first-class functions, modules, classes).  The
object of study may be ML-, Java-, or assembly-like, but the common
source of trickiness is the way effectful operations may be
\emph{partially} encapsulated behind higher-order
abstractions. Problems in semantics and verification of effectful
languages are often addressed using a range of common techniques that
includes separation and Kripke logical relations (KLRs). The
particular problem motivating the development of the proof-relevant
form of KLR introduced here is that of giving a semantics to effect
systems that accounts for partial encapsulation, though the general
construction is more broadly applicable. As we will see, direct
semantic reasoning in our model (as opposed to generic reasoning based
on refined types) also allows many of the trickiest known equivalences
concerning encapsulated store to be proved.

\squelch{One is to devise models and reasoning principles for establishing
contextual (in)equivalences
\cite{DBLP:conf/mfcs/PittsS93,DBLP:journals/corr/abs-1103-0510}. A
second is to establish equivalence between high-level and low-level
code fragments, e.g. for compiler correctness
\cite{DBLP:conf/icfp/BentonH09,DBLP:conf/popl/HurD11}. A third is to
define Hoare-style logics for showing programs satisfy 
assertions \cite{DBLP:journals/corr/abs-1109-3031}. A fourth, which
we address here, is to study type systems and analyses that
can characterise particular \emph{classes} of behaviour (such as
purity) and be used to justify equivalences more generically.
}

Effect systems \cite{DBLP:conf/lfp/GiffordL86} refine conventional
types by tracking upper bounds on the side-effects of expressions.  A
series of papers, by ourselves and others
\cite{DBLP:conf/popl/KammarP12,DBLP:conf/ppdp/BentonKBH09,DBLP:conf/ppdp/BentonKBH07,DBLP:conf/aplas/BentonKHB06,DBLP:conf/icfp/ThamsborgB11},
have explored the semantics of effect systems for mutable state,
addressing not merely the correctness of analyses, but also the
soundness of effect-dependent optimizations and refactorings. An
example is the commutation of stateful computations $M$ and $N$,
subject to the condition that the sets of storage locations
potentially written by $M$ and $N$ are disjoint, and that neither
potentially reads a location that the other writes.  \squelch{If one
  gives expressions refined monadic types of the form $T_\eff \tau$,
  meaning `computations returning results of type $\tau$ with
  side-effects bounded by $\eff$', for a suitable choice of effect
  annotations $\eff$, then effect-dependent equivalences can be
  formalized in terms of conditions on annotations. For example
  \infrule {\Gamma\vdash M:T_\eff\tau \andalso \Gamma\vdash
    N:T_{\eff'}\tau' \andalso \Gamma,x:\tau,y:\tau' \vdash
    P:T_{\eff''}\tau''\\ \wrsin{\eff}\cap
    \wrsin{\eff'}=\rdsin{\eff}\cap \wrsin{\eff'} = \rdsin{\eff'}\cap
    \wrsin{\eff}=\emptyset} {\Gamma
    \vdash \begin{array}{l}\letin{x}{M}{\letin{y}{N}{P}}\\ =
      \letin{y}{N}{\letin{x}{M}{P}}\end{array} :
    T_{\eff\cup\eff'\cup\eff''}\tau''}

\noindent
where $\rdsin{\eff}$ and $\wrsin{\eff}$ are the sets of possibly-read
and possibly-written variables (or regions) in $\eff$.} Our primary
interest is not syntactic rules for type assignment, but rather
semantic interpretations of effect-refined types that can justify such
equivalences.  Types provide a common interface language that can be
used in modular reasoning about rewrites; types can be assigned to
particular terms by a mixture of more or less sophisticated inference
systems, or by deeper semantic reasoning. \squelch{Clearly, such a
  separation of concerns requires that the interpretation of types be
  independent of any inference system.}

A key notion in compositional reasoning about state is that of
\emph{separation}: invariants depending upon mutually disjoint parts
of the store. Intuitively, if each function with direct access to a
part preserves the corresponding invariant, then all the invariants
will be preserved by any composition of functions. Disjointness is
naively understood in terms of sets of locations.  A memory
allocator, for example, guarantees that its own private
datastructures, memory belonging to clients, and any
freshly-allocated block inhabit mutually disjoint sets of
locations. Since the introduction of fractional permissions,
separation logics often go beyond this simple model, introducing
resources that are combined with a separating conjunction, but which
are not literally interpreted as predicates on disjoint
locations. Research on `domain-specific'
\cite{DBLP:conf/tldi/KrishnaswamiBA10}, `fictional'
\cite{DBLP:conf/vstte/Dinsdale-YoungGW10,DBLP:conf/esop/JensenB12},
`subjective' \cite{leywild:scsl}, or `superficial'
\cite{krishnaswami:superficially} separation \squelch{logics and type theories}
aims to let custom notions of separable resource be used
and combined modularly. This paper presents a semantics for effect
systems supporting fictional, or `abstract', notions of both effects
and separation.

We previously interpreted effect-refined types for stateful
computations as binary relations, defined via preservation of
particular sets of store relations. This already provides some
abstraction. For example, a function that reads a reference, but whose
result is independent of the value read can soundly be counted as pure
(contrasting with models that instrument the concrete semantics). Our
models also validated the masking rule, allowing certain
non-observable effects not to appear in annotations. But here we go
further, generalizing the interpretation of regions to partial
equivalence relations (PERs). This allows, for example, a lookup
function for a set ADT to be assigned a read-but-not-write effect,
even if the concrete implementation involves non-observable writes to
rebalance an internal datastructure. Roughly, there is a PER that
relates two heaps iff they contain well-formed datastructures
representing the same mathematical set, and the ADT operations respect
this PER: looking up equal values in related heaps yields equal
booleans, adding equal values in related heaps yields new related
heaps, and so on. A mutating operation need only be annotated with a
write effect if the updated heap is potentially in a different
equivalence class from the original one. In fact, we further improve
previous treatments of write effects, via a `guarantee' condition that
explicitly captures allowable local updates. Surprisingly, this allows
the update and remove operations for our set ADT to be flagged with
\emph{just} a write effect, despite the fact that the final state of
the set depends on the initial one, exploiting the idempotence of the
updates and validating many more useful program transformations.

Moving to PERs also allows us to revisit the notion of separation,
permitting distinct abstract locations, or regions, to refer to PERs
whose footprints overlap, albeit non-observably, in memory. A module
may, for example, implement two distinct logical references using a
single physical location containing a coding (e.g. $2^i3^j$) of a pair
$(i,j)$ of integers. Or a resource allocator can keep logically
separated tokens tracking each allocated resource, acting as
permissions for deallocation, in a shared datastructure such as a
bitmap or linked list (a well-known problem in modular separation
\cite{krishnaswami:superficially}). The innovation here is a notion of independence of PERs,
capturing the situation where intersection of PERs yields a cartesian
product of quotients of the heap.

\begin{wrapfigure}{r}{.5\textwidth}
\vspace{-3mm}
\[\begin{array}{@{}l}
(T_\varepsilon Q)_\w = QPER(\{(f,f')\mid \heap,\heap'\models\w\Rightarrow\\
\ \forall R\in{\cal R}_\varepsilon(\w). \heap R \heap' \Rightarrow \heap_1 R \heap'_1 \wedge\\
\ \exists \w_1. (\w_1(r)\not= \emptyset\Rightarrow r\in \mathrm{als}(\varepsilon))\wedge  \heap_1,\heap_1'\models \w\otimes\w_1 \wedge\\
\ \heap_1 \sim_{\w_1} \heap'_1 \wedge (v,v')\in Q_{\w\otimes\w_1}\\
\ \mbox{where $(\heap_1,v)=f \heap$ and $(\heap'_1,v')= f' \heap'$\})}
\end{array}
\vspace{-2mm}
\]
\vspace{-2mm}
\caption{Earlier Kripke logical relation, extract}
\label{fig:oldklr}
\vspace{-9mm}
\end{wrapfigure}
The ideas sketched above are intuitively rather compelling, but
formally integrating them into the form of KLR we had previously used
for effect systems turns out to be remarkably
hard. Figure~\ref{fig:oldklr} shows a (tweaked) extract from an earlier
paper \cite{DBLP:conf/ppdp/BentonKBH07}. Here a world $\w$ is just a finite partial bijection between
locations, with region-coloured links; $\heap,\heap'\models \w$
simply means that for each link $(\cloc,\cloc')\in \w $,
$\cloc\in\dom{\heap}$
and $\cloc'\in\dom{\heap'}$.
Two
computations $f,f':\Stores \rightharpoondown
\Stores\times\Values$, where $\Stores,\Values$ are
sets of heaps and values, respectively, are in the relation
$(T_\eff Q)_\w$, where $\eff$ is an effect and the relation $Q$ interprets a
result type, if they preserve all heap relations $R$ in a set
depending on $\varepsilon$ and $\w$, and there exists \emph{some}
disjoint world extension $\w_1$ such that the new heaps are equal on
the domain of $\w_1$, and the result values are $Q$-related at the
extended world $\w\otimes \w_1$. 


The problematic part
is the existential quantification over world extensions -- the
$\exists \w_1$ on the third line -- allowing for the computations
to allocate fresh locations.
This pattern of
quantification occurs in many accounts of generativity, but
the dependence of $\w_1$ on both $\heap$ and $\heap'$ creates serious problems
if one generalizes from bijections to PERs
and tries to prove equivalences.
Roughly, one has to consider varying the initial heap in
which one computation, say $f'$, is started; the
existential then produces a \emph{different} extension $\w_2$
that is
not at all related, even on the side of $f$ where the
heap stays the same, to the $\w_1$ with which one started. The
case of bijections, where $\heap_1$ depends only on
$\heap$ (not on $\heap'$), allows one to deduce sufficient information
about the domain of $\w_1$ from the clause $\heap_1,\heap_1'\models \w\otimes\w_1$,
but this breaks down in the more abstract setting.

To fix this problem, we here
take the rather novel step of replacing the existential
quantifier in the logical relation by appropriate Skolem functions,
explicitly enforcing the correct dependencies. In the language of
type theory, this amounts to replacing an existential  with
a $\Sigma$-type. A statement like $(f,f')\in T_\eff\sem{A}$ is no
longer just a proposition, but we rather have a ``set of
proofs'' $T_\eff\sem{A}(f,f')$ which in particular contains
the aforementioned Skolem functions. We use an explicit version of the 
exact-completion \cite{DBLP:conf/mfps/CarboniFS87,DBLP:conf/lics/BirkedalCRS98} akin to and motivated by ``setoid'' or groupoid interpretations of type theory 
\cite{DBLP:conf/lics/HofmannS94,DBLP:journals/jfp/BartheCP03,DBLP:conf/cpp/Voevodsky11} to make these ideas both rigorous and more
general.

Passing from relations to proof-relevant setoids also solves other
problems. Existential quantification fails to preserve admissibility
of relations, needed to deal with general recursion, and also fails to
preserve `PERness'. The `$QPER(\cdot)$' operation in
Figure~\ref{fig:oldklr} explicitly applies an admissible and (variant)
PER closure operation; this works technically, but is very awkward to
use. We do not need such a closure here. Step indexing
\cite{DBLP:conf/esop/Ahmed06,DBLP:conf/icfp/ThamsborgB11} and the use
of continuations \cite{DBLP:conf/mfcs/PittsS93} can also deal with
admissibility.  However, step-indexing is inherently operational,
whilst continuations lose sufficient abstraction to break some program
equivalences, including commuting computations. Our third way, using
setoids, is pleasantly direct. Finally, allocation
effects are handled differently from reading and writing by the
relation in Figure~\ref{fig:oldklr}, being wired into the
quantification rather than treated more abstractly by relation
preservation. Our setoid-based formulation uses uniform machinery to
treat all effects.

We start by reviewing some preliminary definitions on syntax and
semantics of programs in Section~\ref{sec:semantics}.
Section~\ref{sec:setoids} introduces setoids, which is the setting in
which we specify in Section~\ref{test} the typed semantics and
introduce the notion of abstract effects. In
Section~\ref{sec:logical-relations} we describe proof-relevant logical
relations, prove the fundamental theorem and define observational
equivalence. Section~\ref{sec:application} demonstrates a number of
program equivalences that can be shown by using proof-revelant logical
relations. We conclude and discuss future work in
Section~\ref{sec:conclusions}.

\ifapp
\paragraph{Note:} We have elided many proofs, details of constructions
and examples. This longer version of the paper includes some of this
material in an appendix.
\else
\paragraph{Note:} We have elided many proofs, details of constructions
and examples. A longer version of the paper includes some of this
material in an appendix, and is available from the first author's
homepage.
\fi

  \squelch{A long version, possibly formalised
  will contain these details which are not hard to reconstruct
  oneself, but tedious and space-consuming to typeset.}

\section{Syntax and Semantics}
\label{sec:semantics}
We will interpret effect-refined types over a somewhat generic,
untyped denotational model for stateful computations in the category
of predomains ($\omega$-cpos).  We also introduce a meta-language
\cite{moggi1991notions}, providing concrete syntax for functions in
the model. We omit the standard details of interpreting CBV
programming languages via such a metalanguage, or proofs of adequacy,
relating the operationally induced observational (in)equivalence to
(in)equality in the model.


\paragraph{Denotational model} 
\iffull
A \emph{predomain} is an $\omega$-cpo, i.e. a partial order with suprema
of ascending chains. A \emph{domain} is a predomain with a least element,
$\bot$. \squelch{We use predomains and continuous functions, rather than sets
and functions, so as to be able to interpret recursive definitions.}
Recall that $f:A\rightarrow A'$ is \emph{continuous} if it is monotone
$x\leq
y \Rightarrow f(x)\leq f(y)$ and preserves suprema of ascending
chains, i.e., $f(\sup_i x_i)=sup_i f(x_i)$. Any set is a predomain with the
discrete
order. If $X$ is a set and $A$ a predomain then
any $f:X\rightarrow A$ is continuous. A subset $U$ of a
predomain $A$ is \emph{admissible} if whenever $(a_i)_i$ is an ascending
chain in $A$ such that $a_i\in U$ for all $i$, then $\sup_i a_i\in U$,
too. If $f:X\times A\rightarrow A$ is continuous and $A$ is a domain
then one defines $f^\dagger(x)=\sup_if_x^i(\bot)$ with
$f_x(a)=f(x,a)$. One has, $f(x,f^\dagger(x))=f^\dagger(x)$ and if
$U\subseteq A$ is admissible and $f:X\times U\rightarrow U$ then
$f^\dagger:X\rightarrow U$, too. We denote a partial (continuous)
function from set (predomain) $A$ to set (predomain) $B$ by $f:A
\partfun B$.  
\fi
We assume predomains $\Values$ and $\Stores$ modelling values and heaps,
respectively. As much of the metatheory does not rely on the finer
details of how these predomains are defined, we axiomatise the
properties we use.  Firstly, we assume the existence of a set of
(concrete) locations $\labs$ and for each $\heap\in\Stores$ a finite
set $\dom\heap\subseteq\labs$. We also assume a constant
$\emptyset\in\Stores$, the empty heap.  If $
\heap \in \Stores, \cloc \in \dom{\heap}$, then $\heap(\cloc)
\in \Values$. If $v\in
\Values, \heap \in \Stores, \cloc\in\dom{\heap}$ then $\heap[\cloc{\mapsto}
v]\in\Stores$; finally $\new(\heap,v)$ yields a pair $(\cloc,\heap')$
where $\cloc\in\Locs$ and $\heap'\in\Stores$. These three operations
are continuous, in particular, $\heap\leq\heap'\Rightarrow
\dom\heap\subseteq \dom{\heap'}$ and the following axioms hold:
$\dom{\emptyset}=\emptyset$,
$\dom{\heap[\cloc{\mapsto}v]}=\dom{\heap}$,
$(\heap[\cloc{\mapsto}v])(\cloc') =$ $\textit{if }\cloc=\cloc'\textit{
  then }v\textit{ else } \heap(\cloc')$, and if
$\new(\heap,v)=(\cloc,\heap')$ then
$\dom{\heap'}=\dom{\heap}\cup\{\cloc\}$ and $\cloc\not\in\dom{\heap}$
and $\heap'(\cloc)=v$.  Given $\Values$ this abstract datatype can be
implemented in a number of ways, e.g., as finite maps.
We define the domain of computations $\Comps$ to be 
partial continuous functions from $\Stores$ to $\Stores\times
\Values$, the bottom element being the everywhere undefined function.

We assume that $\Values$ embeds tuples of values, i.e., if
$v_1,\dots,v_n\in\Values$ then $(v_1,\dots,v_n)\in\Values$ and it is
possible to tell whether a value is of that form and in this case to
retrieve the components. We also assume that $\Values$ embeds
continuous functions $f:\Values\rightarrow \Comps$, i.e., if $f$ is
such a function then $\funn f\in\Values$ and, finally, locations are
also values, i.e.\ if $\cloc\in\labs$ then $\reff\cloc\in\Values$ and
one can tell whether a value is a location or a function. A canonical
example of
such a $\Values$ is the least solution to the predomain equation 
with $\Comps=\Stores\partfun\Stores\times \Values$ and
\(
\Values \simeq \intt{\mathbb{Z}} + \funn{\Values\rightarrow\Comps} +
\reff{\labs} + \Values^*.
\)

\paragraph{Syntax}
The syntax of untyped values and computations is:
\iffull
\[
 \begin{array}{lcl}
v & ::= & x \mid () \mid c \mid (v_1,v_2) \mid v.1 \mid v.2 \mid
\vfix{f}{x}{t} \\
t &::=& v \mid \letin{x}{t_1}{t_2}\mid v_1\,v_2 \mid
\myif{v}{t_1}{t_2} \mid
\myread{v}\mid \assign{v_1}{v_2}\mid \myref{v}
 \end{array}
\]
Here, $x$ ranges over variables and $c$ over constant
\else

\(
 \begin{array}{lcl}
v & ::= & x \mid () \mid c \mid (v_1,v_2) \mid v.1 \mid v.2 \mid
\vfix{f}{x}{t} \\
t &::=& v \mid \letin{x}{t_1}{t_2}\mid v_1\,v_2 \mid
\myif{v}{t_1}{t_2} \mid
\myread{v}\mid \assign{v_1}{v_2}\mid \myref{v}
 \end{array}
\)

\noindent
Here, $x$ ranges over variables and $c$ over constant
\fi
symbols, each of which has an associated interpretation
$\semV{c}\in\Values$; these include numerals $\underline{n}$ with
$\semV{\underline{n}} = \intt{n}$, arithmetic operations and so on.
$\vfix{f}{x}{t}$ defines a recursive function with body $e$ and
recursive calls made via $f$; we use $\lambda x.t$ as syntactic sugar in
the
case when $f\not\in fv(t)$. Finally, $\myread{v}$ (reading) returns
the contents of location $v$, $\assign{v_1}{v_2}$ (writing) updates
location $v_1$ with value $v_2$, and $\myref{v}$ (allocating) returns a
fresh location intialised with $v$. The metatheory is simplified by
using ``let-normal form'', in which the only elimination for
computations is let, though we sometimes nest computations as
shorthand for let-expanded versions in examples.

\paragraph{Semantics}
The untyped semantics of values $\semV{v}\in\Values\rightarrow\Values$
and terms $\semV{t}\in\Values\rightarrow\Comps$ are defined by an
entirely standard mutual induction, using least fixed points to
interpret recursive functions, projection from tuples for variables
and so on.
\squelch{
are given by
the recursive clauses in Figure~\ref{seme}; note the overloading of
semantic brackets for constants, values and computations. 
The notation
$\eta(x)$ stands for the $i$-th projection from $\eta\in\Values$
if $x$ is $x_i$ and $\eta[x{\mapsto}v]$ (functionally) updates the
$i$-th slot in $\eta$ when $x=x_i$.
}

\paragraph{Types}
Types are given by the grammar:
\iffull
\[
\tau ::= \unittype \mid \inttype\mid A\mid\tau_1\times\tau_2\mid 
\tau_1\effto\eff \tau_2
\]
where $A$ ranges over semantically defined basic types (see
\else
$
\tau ::= \unittype \mid \inttype\mid A\mid\tau_1\times\tau_2\mid 
\tau_1\effto\eff \tau_2,
$
where $A$ ranges over semantically defined basic types (see
\fi
Def.~\ref{defte}). These contain reference types possibly
annotated with regions and abstract types like lists, sets, and even
objects, again possibly refined by regions. The metavariable $\eff$
represents an \emph{effect}, that is a subset of some fixed set of
elementary effects about which we say more later. The core typing
rules for values and computations are shown in
Figure~\ref{tyres}. We do not bake in type rules for
constants and effectful operations but, for a given semantic interpretation
of types, we will be able to justify adding further rules for these
primitives and, more importantly, for more complex expressions
involving them. (The rules given here incorporate
subeffecting; we expect our semantics to extend
to more general subtyping.)


\paragraph{Equations} Figure~\ref{eqth} outlines a core equational theory for
the metalanguage.  The full theory includes congruence rules for all
constructs (like that given for \texttt{rec}), all the usual beta and
eta laws and commuting conversions for conditionals as well as for
\texttt{let}.
We give
a semantic interpretation of typed equality judgements which is sound
for observational equivalence. 
As with typings, further equations involving effectful computations may be
justified semantically in a particular model and added to the theory.
The core theory then allows one to
deduce new semantic equalities from already proven ones. 
The equations are typed: a derivation
$\mathcal{D}$ of $\Gamma\vdash t=t':\ety{\tau}{\eff}$ is canonically
associated with typing derivations $\mathcal{D}.1$ and $\mathcal{D}.2$
of $\Gamma\vdash t:\ety{\tau}{\eff}$ and $\Gamma\vdash t':
\ety{\tau}{\eff}$,
respectively (but note we can semantically justify extending the
type rules). The interpretation of
$\mathcal{D}$ will be a proof object certifying that the 
interpretations of $\mathcal{D}.1$ and $\mathcal{D}.2$ are
semantically equal which then implies (Theorem~\ref{obseq}) typed
observational equivalence of $t$ and $t'$.

\squelch{
\begin{figure*}[tph]
\iffull\[
\begin{array}{rcl}
\semV{x}\eta &=& \eta(x)\\ 
\semV{c}\eta &=& \semV{c}\\
\semV{(v_1,v_2)}\eta &=& (\semV{v_1}\eta,\semV{v_2}\eta)\\
\semV{v.i}\eta &=& d_i\ \mbox{if $i=1,2$, $\semV{v}\eta = (d_1,d_2)$}\\
\semV{\vfix{f}{x}t}\eta &=& \funn
{g^\dagger\,\eta}{\mbox{, where
  $g(\eta,u)=\lambda d.\semV{t}\eta[f{\mapsto}\funn{u},x{\mapsto}d]$}}
\end{array}
\]
\[
\begin{array}{rcll}
\semV{v}\eta\ \heap &=&(\heap,\semV{v}\eta)\\
\semV{\myif{v}{t_2}{t_3}}\eta\heap &=& 
\semV{t_2}\eta\heap & \mbox{if
  $\semV{v}\eta =\intt z$, $z\neq 0$}\\
\semV{\myif{x}{t_2}{t_3}}\eta &=& 
\semV{t_3}\eta\heap & \mbox{if $\semV{v}\eta=\intt 0$}
\\
\semV{\letin{x}{t_1}{t_2}}\eta\ \heap,\; &=&
\bot\mbox{, when $\semV{t_1}\eta\ \heap=\bot$}\\
\semV{\letin{x}{t_1}{t_2}}\eta\ \heap&=& 
\semV{t_2}\eta[x{\mapsto}u]\ \heap_1
\mbox{when $\semV{t_1}\eta\ \heap=(\heap_1,u)$}
\\
\semV{\myread{v}}\eta\ \heap &=& (\heap,\heap(\cloc))\mbox{, when
$\semV{v}\eta=\reff \cloc$}\\
\semV{\assign{v_1}{v_2}}\eta\ \heap &=&
(\heap[\cloc{\mapsto}\semV{v_2}\eta],\intt 0)\mbox{, if
$\semV{v_1}\eta=\reff
\cloc$}\\
\semV{\myref{v}}\eta\ \heap &=& \textit{new}(\heap,\semV{v}\eta)\\
\semV{v}\eta &=&\intt 0\mbox{, otherwise}\\
\semV{t}\eta\ \heap&=& (\heap,\intt 0)\mbox{, otherwise}
\end{array}
\]
\else
\vspace{-7mm}
\[
\begin{array}{l}
\semV{x}\eta = \eta(x), 
\semV{c}\eta = \semV{c}, 
\semV{(v_1,v_2)}\eta = (\semV{v_1}\eta,\semV{v_2}\eta),
\semV{v.i}\eta = d_i\ \mbox{if $i=1,2$, $\semV{v}\eta = (d_1,d_2)$}\\
\semV{\vfix{f}{x}t}\eta = \funn
{g^\dagger\,\eta}{\mbox{, where
  $g(\eta,u)=\lambda d.\semV{t}\eta[f{\mapsto}\funn{u},x{\mapsto}d]$}}
\end{array}
\]
\[
\begin{array}{l}
\semV{\myif{v}{t_2}{t_3}}\eta\heap =
\semV{t_2}\eta\heap\,   \mbox{if
  $\semV{v}\eta =\intt z$, $z\neq 0$},
\semV{\myif{x}{t_2}{t_3}}\eta =
\semV{t_3}\eta\heap\,  \mbox{if $\semV{v}\eta=\intt 0$}
\\
\semV{\letin{x}{t_1}{t_2}}\eta\ \heap,\; =
\bot\mbox{, when $\semV{t_1}\eta\ \heap=\bot$}, 
\semV{\letin{x}{t_1}{t_2}}\eta\ \heap= 
\semV{t_2}\eta[x{\mapsto}u]\ \heap_1
\mbox{when $\semV{t_1}\eta\ \heap=(\heap_1,u)$},\\
\semV{\myread{v}}\eta\ \heap = (\heap,\heap(\cloc))\mbox{, when
$\semV{v}\eta=\reff \cloc$}, 
\semV{\assign{v_1}{v_2}}\eta\ \heap =
(\heap[\cloc{\mapsto}\semV{v_2}\eta],\intt 0)\mbox{, if
$\semV{v_1}\eta=\reff
\cloc$},\\
\semV{\myref{v}}\eta\ \heap = \textit{new}(\heap,\semV{v}\eta)
\semV{v}\eta\ \heap =(\heap,\semV{v}\eta), 
\semV{v}\eta =\intt 0\mbox{, otherwise}, 
\semV{t}\eta\ \heap= (\heap,\intt 0)\mbox{, otherwise}
\end{array}
\]
\fi
\caption{Semantics of the untyped meta language \label{seme}}
\vspace{-8mm}
\end{figure*}
}

\begin{figure*}[tph]
\vspace{-3mm}
\[
\infer{\Gamma\vdash \underline{n} : \valty{\inttype}}{}
\quad 
\infer{\Gamma,x:\tau\vdash x : \valty{\tau}}{}
\quad
\infer{\Gamma\vdash v:\ety{\tau}{\emptyset}}{\Gamma\vdash v:\valty{\tau}}
\quad
\infer{\Gamma\vdash e:\ety{\tau}{\eff_2}}{\Gamma\vdash
e:\ety{\tau}{\eff_1} & \eff_1
\subseteq  \eff_2}
\quad 
\infer{\Gamma\vdash v.i:\valty{\tau_i}}{\Gamma\vdash
v:\valty{\tau_1\times\tau_2}}
\]
\[
\infer{\Gamma\vdash v_1\ v_2 : \ety{\tau_2}{\eff}}
{\Gamma\vdash v_1:\valty{\tau_1\effto{\eff}\tau_2} & \Gamma\vdash
v_2:\valty{\tau_1}}
\quad
\infer{\Gamma \vdash \unitval : \unittype}{}
\qquad
 \infer{\Gamma\vdash\myif{v}{e_1}{e_2} : \ety{\tau}{\eff}}
{\Gamma\vdash v:\valty{\inttype} & 
    \Gamma\vdash e_1:\ety{\tau}{\eff} & 
    \Gamma\vdash e_2:\ety{\tau}{\eff}}
\]
\[
\infer{\Gamma\vdash
(v_1,v_2):\valty{\tau_1\times\tau_2}}
{\Gamma\vdash v_1:\valty{\tau_1} & 
\Gamma\vdash v_2:\valty{\tau_2}}
\quad
\infer{\Gamma\vdash \letin{x}{e_1}{e_2}:\ety{\tau_2}{\eff}}
{\Gamma\vdash e_1:\ety{\tau_1}{\eff} & 
  \Gamma,x{:}\tau_1\vdash e_2:\ety{\tau_2}{\eff}}
\qquad 
\infer{\Gamma\vdash \vfix{f}{x}{e} : \valty{\tau_1\effto{\eff} \tau_2}}
{\Gamma,f{:}\tau_1\effto\eff \tau_2,x{:}\tau_1\vdash
e:\ety{\tau_2}{\eff}}
\]
\caption{Core rules for effect typing\label{tyres}\label{efte}}
\vspace{-5mm}
\end{figure*}
\begin{figure*}[tph]
\vspace{-3mm}
\[
\infer{\Gamma\vdash t=t:\ety{\tau}{\eff}}
{\Gamma\vdash t:\ety{\tau}{\eff}} 
\quad
\infer{\Gamma\vdash t'=t:\ety{\tau}{\eff}}
{\Gamma\vdash t=t':\ety{\tau}{\eff}} 
\quad 
\infer{\Gamma\vdash t=t'':\ety{\tau}{\eff}}
{\Gamma\vdash t=t':\ety{\tau}{\eff} & \Gamma\vdash
t'=t'':\ety{\tau}{\eff}}
\quad 
\infer{\Gamma\vdash v=v':\ety{\tau}{\emptyset}}
{\Gamma\vdash v=v':\valty{\tau}}
\]
\iffull \vspace{2pt} \fi
\[
\infer{\Gamma\vdash (v_1,v_2).i = v_i :\valty{\tau_i}}
{\Gamma\vdash v_1:\valty{\tau_1} & \Gamma\vdash
v_2:\valty{\tau_2}}
\quad
\infer{\Gamma\vdash
(\vfix{f}{x}{t})=(\vfix{f}{x}{t'}):\valty{\tau_1\effto\eff
\tau_2}}
{\Gamma,f:\tau_1\effto\eff \tau_2, x{:}\tau_1\vdash
t=t':\ety{\tau_2}{\eff}}
\quad
\infer{\Gamma\vdash v=(v.1,v.2):\valty{\tau_1\times\tau_2}}
{\Gamma\vdash v:\valty{\tau_1\times\tau_2}}
\]
\iffull \vspace{2pt} \fi
\[ 
\infer{\Gamma\vdash\letin{x}{v}{t} = t[v/x] : \ety{\tau_2}{\eff}}
{\Gamma\vdash v:\ety{\tau_1}{\eff} & \Gamma,x:\tau_1\vdash
t:\ety{\tau_2}{\eff}}
\quad
\infer{\Gamma\vdash (\vfix{f}{x}{t})\,v = t[v/x, (\vfix{f}{x}{t})/f]:
\ety{\tau_2}{\eff}}
{\Gamma,f:\tau_1\effto\eff \tau_2, x{:}\tau_1\vdash
t:\ety{\tau_2}{\eff} & \Gamma\vdash v:\tau_1}
\]
\iffull \vspace{2pt} \fi
\[ 
\infer{\Gamma\vdash\letin{x}{(\letin{y}{t_1}{t_2})}{t_3} =
\letin{y}{t_1}{\letin{x}{t_2}{t_3}} : \ety{\tau_3}{\eff}}
{\Gamma\vdash t_1:\ety{\tau_1}{\eff} & \Gamma\vdash
t_2:\ety{\tau_2}{\eff} & \Gamma,x:\tau_2,y:\tau_1\vdash
t_3:\ety{\tau_3}{\eff}}
\]
\caption{Basic equational theory (extract)\label{eqth}}
\iffull
\else
\vspace{-2mm}
\fi
\end{figure*}
\subsection{Some example programs}\label{examples}

~\newline

\paragraph{Dummy allocation}
Define $\textit{dummy}$ as 
$
\sem{\lambda f.\lambda x.\letin{d}{\myref{0}}{f\ x}},
$ 
so $\textit{dummy}(f)$ behaves like $f$ but makes an
allocation whose result is discarded.
We will be able to show that $\textit{dummy}(f)$ displays no more
abstract effects than $f$, so that whatever program transformation
$f$ can participate in, $\textit{dummy}(f)$ can as well.

\paragraph{Memoisation}
Let \textit{memo} be the memoizing functional
\iffull
\[
\begin{array}{l}\llbracket\lambda f.
\letin{x}{\myref{\underline{0}}}{
\letin{y}{\myref{f\ \underline{0}}}{}}\\
\quad \lambda a.\myif{\textit{eq}\ a\ \myread{x}}{\myread{y}}
{\letin{r}{f\ a}{\assign{x}{a};\assign{y}{r};r}}
\rrbracket
\end{array}\]
where $t_1;t_2=\letin{\_}{t_1}{t_2}$ is sequential composition and
$\textit{eq}$ is an integer equality constant.
\else

\(
\begin{array}{l}\llbracket\lambda f.
\letin{x}{\myref{\underline{0}}}{
\letin{y}{\myref{f\ \underline{0}}}{}}\\
\quad \lambda a.\myif{\textit{eq}\ a\ \myread{x}}{\myread{y}}
{\letin{r}{f\ a}{\assign{x}{a};\assign{y}{r};r}}
\rrbracket
\end{array}
\)

\noindent
where $t_1;t_2=\letin{\_}{t_1}{t_2}$ is sequential composition and
$\textit{eq}$ is an integer equality constant.
\fi
We can justify the typing
$\textit{memo}:(\textit{int}\effto{\emptyset}\textit{int})\effto{\emptyset}(\textit{int}\effto{\emptyset}\textit{int})$, saying that if $f$ is observationally pure, $\textit{memo}\, f$, is too, and so can 
participate in any program equivalence relying
on purity. This was not justified by our previous
model~\cite{DBLP:conf/ppdp/BentonKBH07}.

\paragraph{Set factory}
The next, more complicated, example is a program that can 
create and manipulate sets implemented as linked lists.

If $\cloc\in\labs$ and $\heap\in\Stores$ and $U$ is a finite set of
integers and $P$ is a finite subset of $\labs$ define
$S(\cloc,\heap,U,P)$ to mean that in $\heap$ location $\cloc$ points to
a linked list of integer values occupying at most the locations in $P$
(the ``footprint'') and so that the set of these integer values is
$U$. So, for example, if $\heap(\cloc)=\reff {\cloc_1}$ and
$\heap(\cloc_1)=(\intt 1,\reff{\cloc_2})$ and $\heap(\cloc_2)=(\intt
1,\intt 0)$ then $S(\cloc,\heap,\{1\}, \{\cloc_1,\cloc_2\})$ holds. 

For each location $\cloc$ define functions $\textit{mem}_\cloc$,
$\textit{add}_\cloc$, $\textit{rem}_\cloc$ so that
$\textit{mem}_\cloc(\intt i)$ checks whether $i$ occurs in the list
pointed to by $\cloc$, returning $\intt 1$ iff yes, and---for the fun
of it---removes all duplicates in that list and relocates some of its
nodes.  Thus, in particular, if $\textit{mem}_\cloc(\intt
i)(\heap)=(\heap_1,v)$ then if $S(\cloc,\heap,U,P)$ one has
$S(\cloc,\heap_1,U,P')$ for some $P'$ where $P'\subseteq
P\cup(\dom{\heap_1}\setminus\dom{\heap})$ and $v=\intt 1$ iff $i\in
U$.

The function $\textit{add}_\cloc$ adds its integer argument to the
set, and $\textit{rem}_\cloc$ removes it, each possibly making
``optimizations'' similar to $\textit{mem}_\cloc$.

Now consider a function $\textit{setfactory}$ returning upon each call
a fresh location $\cloc$ and a the tuple of functions
$(\textit{mem}_\cloc,\textit{add}_\cloc,\textit{rem}_\cloc)$.
We will be able to justify the following
 semantic typing for $\textit{setfactory}$: 
\iffull
\[
\textit{setfactory} : \forall\regid. 
\ety{(\textit{int}\effto{\rEff\regid}\textit{int})\times
(\textit{int}\effto{\wEff\regid}\textit{unit})\times
(\textit{int}\effto{\wEff\regid}\textit{unit})} {\aEff\regid}
\]
which  expresses   that  $\textit{setfactory}()$  allocates   in  some
\else

\(
\textit{setfactory} : \forall\regid. 
\ety{(\textit{int}\effto{\rEff\regid}\textit{int})\times
(\textit{int}\effto{\wEff\regid}\textit{unit})\times
(\textit{int}\effto{\wEff\regid}\textit{unit})} {\aEff\regid}
\)

\noindent
which  expresses   that  $\textit{setfactory}()$  allocates   in  some
\fi
(possibly fresh) region $\regid$ and returns operations that only read
$\regid$ (the  first one)  or write  in $\regid$ (the  second and
third one)  even though, physically, all three  functions read, write,
and allocate.

Thus, these functions can participate in corresponding
effect-dependent program equivalences, in particular, two successive
$\textit{mem}$ operations may be swapped and duplicated; identical
updates may even be contracted.
\iffull\footnote{We
could
also consider a more object-oriented variant that works with
a basic type $\textit{set}_\regid$ accepted as argument by the
operations.}
\fi

\paragraph{Interleaved Dummy allocation}
Consider the following example, which looks similar to the Dummy example
above, but where the dummy allocation happens after a proper allocation:
\iffull
\[
\begin{array}{l}
 e_1 = \letin{p}{\myref{0}}\letin{d}{\myref{0}} e; !p ~\textrm{ and }~
 e_2 = \letin{p}{\myref{0}} e; !p.
\end{array} 
\]
Here $d$ is not free in $e$, but $p$ may be free.
\else

\(
\hspace{-4mm}
\begin{array}{l}
 e_1 = \letin{p}{\myref{0}}\letin{d}{\myref{0}} e; !p ~\textrm{ and }~
 e_2 = \letin{p}{\myref{0}} e; !p.
\end{array}
\)

\noindent
Here $d$ is not free in $e$, but $p$ may be free.
\fi
This simple difference leads to many problems when 
attempting to prove
their equivalence. We sketch them below to also motivate
our technical solution introduced formally in the following Sections. 

As normally done the
evolution of the heaps can be formally captured by using Kripke models,
where, intuitively, a world contains the set of locations allocated by
programs.
Whenever there is an allocation,
we advance from the current world $\w$ to a world $\w_1$, which contains
some fresh locations. However, we do not have control over this
evolution. In our example, assume that the programs above
start at the same world $\w$. The allocation of the proper location,
$p$,
in $e_1$ and in $e_2$ will yield two different extensions $\w \to \w_1$ and
$\w \to \w_1'$, where some concrete locations, $\cloc_1$ and $\cloc_2$, are
allocated respectively. In fact, $\w_1$ and $\w_1'$ may even contain other
locations that are not used by the computations. For proving the
equivalence between these programs, we need
a way to capture that $\cloc_1$  and $\cloc_2$ are equivalent, without
requiring to identify the other locations not used by computations.

\begin{wrapfigure}{r}{.25\textwidth}
 \vspace{-12mm}
\begin{displaymath}
\vcenter{\xymatrix@C=1pc@R=0.5pc{ 
& \overline{\w}\\
\w_1\ar[ru]^x & & \w_1'\ar[lu]_{x'}\\
& \underline{\w}\ar[ru]_{u'} \ar[lu]^u
}}
\end{displaymath}
\caption{Pullback square.}
\label{fig:pullback}
\vspace{-9mm}
\end{wrapfigure}
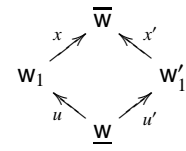
Our solution is to use \emph{pullback squares} as proofs. Their shape is
depicted in Figure~\ref{fig:pullback}.
where $\underline{\w}$ and $\overline{\w}$ are called, respectively, the
\emph{low point} and \emph{apex} of the square.  It helps to interpret
$\overline{\w} $ as a superset of $\w_1 \cup \w_1'$, that is, a world
containing all the locations mentioned in $\w_1$ and $\w_1'$, even the
locations not used by computations, while $\underline{\w} = \w_1
\cap \w_2$ (modulo renaming of location names) is a world containing only
the locations that need to be identified. Intuitively, the low point
is the part of the proof showing that resulting heaps of computations are
equivalent. This is formalized by Definition~\ref{teffde}. In the
example above,
the low point is a world where $\cloc_1$ and $\cloc_2$ are
shown to be equivalent. The remaining locations in $\w_1$ and $\w_1'$ that
are not used by computations may be ignored, that is, not be contained in
$\underline{\w}$. 
The apex, $\overline{\w}$, on the other hand, is the part of the proof 
showing that the corresponding \emph{values} resulting from computations,
$!p$ in the example above, are indeed equivalent (see again
Definition~\ref{teffde}). 

\section{Setoids}
\label{sec:setoids}
We define the \emph{category of setoids} as the exact completion of
the category of predomains, see
\cite{DBLP:conf/mfps/CarboniFS87,DBLP:conf/lics/BirkedalCRS98}. We
give here an elementary description using the language of dependent
types.  A \emph{setoid} $A$ consists of a predomain $|A|$ and for any
two $x,y\in |A|$ a set $A(x,y)$ of ``proofs'' (that $x$ and $y$ are
equal). The set of triples $\{(x,y,p) \mid p\in A(x,y)\}$ must itself
be a predomain and the first and second projections must be
continuous. Furthermore, there are continuous functions $r_A:\Pi x\in
|A|.A(x,x)$ and $s_A:\Pi x,y\in |A|.A(x,y)\rightarrow A(y,x)$ and
$t_A:\Pi x,y,z.A(x,y)\times A(y,z)\rightarrow A(x,z)$.
\iffull
We should explain what continuity of a dependent function like
$t(-,-)$ is: if $(x_i)_i$ and $(y_i)_i$ and $(z_i)_i$ are ascending
chains in $A$ with suprema $x,y,z$ and $p_i\in A(x_i,y_i)$ and $q_i\in
A(y_i,z_i)$ are proofs such that $(x_i,y_i,p_i)_i$ and $(y_i,z_i,q_i)_i$
are ascending chains, too, with suprema $(x,y,p)$ and $(y,z,q)$ then
$(x_i,z_i,t(p_i,q_i))$ is an ascending chain of proofs (by
monotonicity of $t(-,-)$) and its supremum is $(x,z,t(p,q))$.
\fi
\iffull
 Formally,
such dependent functions can be reduced to non-dependent ones using
pullbacks, that is $t$ would be a function defined on the pullback of
the second and first projections from $\{(x,y,p)\mid p\in A(x,y)\}$ to
$|A|$, but we find the dependent notation to be much more readable.
\fi
If $p\in A(x,y)$ we may write $p:x\sim y$ or simply $x\sim y$. We also
omit $|-|$ wherever appropriate. We remark that ``setoids'' also appear in
constructive mathematics and formal proof, see e.g.,
\cite{DBLP:journals/jfp/BartheCP03}, but the proof-relevant nature of
equality proofs is not exploited there and everything is based on sets
(types) rather than predomains. 
A morphism from setoid $A$ to setoid $B$ is an equivalence class of  pairs $f=(f_0,f_1)$ of
continuous functions where $f_0:|A|\rightarrow |B|$ and $f_1:\Pi
x,y\in|A|.A(x,y)\rightarrow B(f_0(x),f_0(y))$. Two such pairs
$f,g:A\rightarrow B$ are \emph{identified} if there exists a
continuous function $\mu:\Pi a\in|A|.B(f(a),g(a))$. 
\begin{proposition}
The category of setoids is cartesian closed; moreover, if $D$ is a setoid such that $|D|$ has a least element $\bot$ and there is also a least proof $\bot\in D(\bot,\bot)$ then there is a morphism of setoids $Y:[D\rightarrow D]\rightarrow D$ satisfying the usual fixpoint equations. 
\end{proposition}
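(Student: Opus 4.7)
The plan is to lift each piece of structure pointwise from predomains to setoids, and then to obtain the fixpoint morphism by iterating on carriers and on proofs simultaneously.

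For the cartesian structure, the terminal setoid $\mathbf{1}$ has a one-point carrier and singleton proof sets. Binary products are componentwise: $|A\times B|=|A|\times|B|$ and $(A\times B)((x,y),(x',y'))=A(x,x')\times B(y,y')$, with $r,s,t$ inherited from $A$ and $B$. For the exponential $[A\to B]$, I would take $|[A\to B]|$ to be the predomain of pairs $(f_0,f_1)$ with $f_0\colon|A|\to|B|$ continuous and $f_1\colon\Pi x,y\in|A|.\,A(x,y)\to B(f_0x,f_0y)$ continuous in the dependent sense, ordered componentwise; a proof $\mu\in[A\to B]((f_0,f_1),(g_0,g_0'))$ is a continuous dependent function $\mu\colon\Pi a\in|A|.\,B(f_0a,g_0a)$, with the structural maps defined pointwise via $r_B$, $s_B$, $t_B$. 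The universal property is obtained by standard currying and uncurrying: a pair $(h_0,h_1)\colon C\times A\to B$ curries to $c\mapsto(\lambda a.h_0(c,a),\lambda p.h_1(r_C(c),p))$ with proof action $q\mapsto\lambda a.h_1(q,r_A(a))$. This is a bijection on equivalence classes of morphisms because the setoid equivalence on morphisms depends only on the carrier components and is pointwise on both sides.

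For the fixpoint $Y\colon[D\to D]\to D$, set $Y_0(f_0,f_1)=\sup_i f_0^i(\bot)$, whose continuity in $(f_0,f_1)$ is the same Kleene-style argument already used for $f^\dagger$ in the underlying predomain model. To build $Y_1$, given $\mu\colon\Pi d.\,D(f_0d,g_0d)$, define $\pi_i\in D(f_0^i\bot,g_0^i\bot)$ by $\pi_0=\bot$ (the hypothesised least proof in $D(\bot,\bot)$) and $\pi_{i+1}=t_D(\mu(f_0^i\bot),g_1(\pi_i))$; this typechecks because $\mu(f_0^i\bot)\in D(f_0^{i+1}\bot,g_0(f_0^i\bot))$ and $g_1(\pi_i)\in D(g_0(f_0^i\bot),g_0^{i+1}\bot)$. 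The triples $(f_0^i\bot,g_0^i\bot,\pi_i)$ form an ascending chain in the predomain $\{(x,y,p)\mid p\in D(x,y)\}$: monotonicity of the first two components is standard, $\pi_0\leq\pi_1$ since $\pi_0$ is least, and $\pi_i\leq\pi_{i+1}$ follows inductively from monotonicity of $\mu,g_1,t_D$. By continuity of the two projections to $|D|$, the supremum lies in $D(Y_0f,Y_0g)$, so we take $Y_1(\mu)=\sup_i\pi_i$. The fixpoint equation $Y_0f=f_0(Y_0f)$ holds strictly on carriers, so its proof-level witness is simply $r_D(Y_0f)$.

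The main obstacle is the pervasive bookkeeping of continuity for dependent functions, most delicately the joint continuity of $Y_1$ in $\mu$ and in the enclosing data $(f_0,f_1),(g_0,g_1)$; this reduces to the standard fact that suprema commute with continuous operations, together with the axiomatic continuity of the projections from the proof predomain. A secondary check is that every construction respects the setoid-level equivalence on morphisms, which is automatic because our definitions are pointwise and the equivalence is itself a pointwise family of proofs. The least-proof hypothesis on $D(\bot,\bot)$ is used exactly to bootstrap the iteration with a monotone initial segment; without it one could not guarantee $\pi_0\leq\pi_1$ and the supremum would not land in the correct proof set.
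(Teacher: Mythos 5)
Your proposal is correct and matches the approach the paper itself relies on: the paper elides the proof of this proposition, but its appendix constructions (the function-space/currying definitions for semantic types and the proof of the fixpoint lemma, where $\textit{fix}(e)_1$ is built from an ascending chain of proofs $p_0=r(\bot)$, $p_{n+1}=e_1(p,p_n)$ whose supremum is taken in the proof predomain) are exactly the pointwise exponential and the simultaneous carrier-and-proof iteration you describe. Your reading of the least-proof hypothesis as bottom of the triple predomain $\{(x,y,p)\mid p\in D(x,y)\}$ is the same strengthening the paper states explicitly in its fixpoint lemma, so the chain argument goes through as you indicate.
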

\subsection{Pullback squares}
\label{sec:pullback}
A morphism $u$ in a category $\world$ is a monomorphism if $ux=ux'$
implies $x=x'$ for all morphisms $x,x'$.  A commuting square $xu=x'u'$
of morphisms is a pullback if whenever $xv=x'v'$ there is unique $t$
such that $v=ut$ and $v'=u't$. We write $\sq{x}{u}{x'}{u'}$ or
$\w\sq{x}{u}{x'}{u'}\w'$ (when $\w^( {}' {}^)=\dom{x^( {}' {}^)}$) for such a
pullback square. We call the common codomain of $x$ and $x'$ the
\emph{apex} of the pullback written $\overline{\w}$, while
the common domain of $u,u'$ the \emph{low point} of the
square written $\underline{\w}$. A pullback square $xu=x'u'$ is
\emph{minimal} if
whenever $fx=gx$ and $fx'=gx'$ then $f=g$, in other words, $x$ and
$x'$ are \emph{jointly epic}. A pair of morphisms $u,u'$ with common domain
is a span, a pair of morphisms $x,x'$ with common codomain is a
co-span. A category has pullbacks if every co-span can be completed to
a pullback square.
\begin{definition}[Category of worlds]
  A category $\world$ is a \emph{category of worlds} if it has
  pullbacks and every span can be completed to a minimal pullback
  square and all morphisms are monomorphisms.
\end{definition}
\begin{example}
  The category of sets and injections is a category of worlds.  
Given $f:X\to Z$ and $g:Y\to Z$, we form their pullback as
$X\xleftarrow{f^{-1}} fX\cap gY \xrightarrow{g^{-1}} Y$. This is minimal when $fX\cup gY = Z$. Conversely, given a span $Y\xleftarrow{f} X \xrightarrow{g}Z$, we can complete to a minimal pullback by
\newcommand{\myin}[1]{\mathit{in}_{#1}}
\iffull
\[
(Y\setminus fX) \uplus fX \xrightarrow{[\myin{1}, \myin{3}\circ f^{-1}]}
(Y\setminus fX) + (Z\setminus gX) + X
\xleftarrow{[\myin{2}, \myin{3}\circ g^{-1}]} (Z\setminus gX) \uplus gX
\]
where $[-,-]$ is case analysis on the disjoint union $Y = (Y\setminus
fX)\uplus fX$.
\else

\(
(Y\setminus fX) \uplus fX \xrightarrow{[\myin{1}, \myin{3}\circ f^{-1}]}
(Y\setminus fX) + (Z\setminus gX) + X
\xleftarrow{[\myin{2}, \myin{3}\circ g^{-1}]} (Z\setminus gX) \uplus gX
\)

\noindent
where $[-,-]$ is case analysis on the disjoint union $Y = (Y\setminus
fX)\uplus fX$.
\fi




  Given an arbitrary category $\mathbf{C}$, the
  category of worlds $\world_{\mathbf C}$ has objects pairs
  $(X,f)$ where $X$ is a set and $f:X\rightarrow |\mathbf{C}|$ is an
  $X$-indexed family of $\mathbf{C}$-objects. A morphism from $(X,f)$
  to $(Y,g)$ is an injective function $u:X\rightarrow Y$ and a family
  of isomorphisms $\phi_x:f(x)\simeq g(u(x))$. The first components of
  the pullbacks and minimal pullbacks are constructed as in the
  previous example.
\end{example}
\iffull
\begin{definition}
Let $\world$ be a category of worlds.  Two pullbacks
$\w\sq{x}{u}{x'}{u'}\w'$ and $\w\sq{y}{v}{y'}{v'}\w'$ are isomorphic
if there is an isomorphism $f$ between the two low points of the
squares so that $vf=u$ and $v'f=u'$, thus also $uf^{-1}=v$ and
$u'f^{-1}=v'$.
\end{definition}
\fi\iffull
\begin{lemma}\label{preo}
  Given a category of worlds $\world$, such that $\w, \w', \w'' \in
\world$, 
  if $\w\sq{x}{u}{x'}{u'}\w'$ and $\w'\sq{y}{v}{y'}{v'}\w''$ are
  pullback squares as indicated then there exist $z,z',t,t'$ such that
  $\w\sq{zx}{ut}{z'y'}{v't'}\w''$ is also a pullback. 
\end{lemma}
\fi
\iffull\begin{proof}
  Choose $z,z',t,t'$ in such a way that $\sq{z}{x'}{z'}{y}$ and
  $\sq{u'}{t}{v}{t'}$ are pullbacks.  The verifications are then an
  easy diagram chase.
\end{proof}
\fi

We write $r(\w)$ for $\w\sq{1}{1}{1}{1}\w$ and $s(\sq{x}{u}{x'}{u'}) =
\sq{x'}{u'}{x}{u}$ and $t(\sq{x}{u}{x'}{u'}, \sq{y}{v}{y'}{v'})=
\sq{zx}{z'y'}{ut}{v't'}$ where $z,z',t,t'$ \iffull are given by 
Lemma~\ref{preo} assuming an arbitrary choice. \else are chosen so
that all four participating squares are pullbacks.\fi
\iffull
\begin{lemma}\label{decomp}
A pullback  square $\sq{x}{u}{x'}{u'}$ in a category of worlds is
isomorphic to $t(\sq{x}{1}{1}{x}, \sq{1}{x'}{x'}{1})$.
\end{lemma}
\fi

\subsection{Setoid-valued functors} 
\label{subsec: func-setoids}
A functor $A$ from a category
of worlds $\world$ to the category of setoids comprises as usual for
each $\w\in \world$ a setoid $A\w$ and for each $u:\w\rightarrow \w'$
a morphism of setoids $Au:A\w\rightarrow A\w'$ preserving identities
and composition. If $u:\w\rightarrow \w'$ and $a\in A\w$ we may write
$u.a$ or even $ua$ for $Au(a)$ and likewise for proofs in $A\w$. Note
that $(uv).a=u.(v.a)$. 

\begin{definition}
We call a functor
\emph{pullback-preserving} (p.p.f.) if for every pullback square
$\w\sq{x}{u}{x'}{u'}\w'$ with apex $\overline{\w}$ and low point
$\underline{\w}$ the diagram $A\w\sq{Ax}{Au}{Ax'}{Au'}A\w'$ is a pullback
in $\Std$. This means that there is a continuous function of type 
\iffull
\[
\Pi a\in A\w.\Pi a'\in A\w'.A\overline{w}(x.a,x'.a')\rightarrow \Sigma
\underline{a}\in
A\underline{w}.A\w(u.\underline{a},a)\times
A\w'(u'.\underline{a},a')
\]
\else

\(
\Pi a\in A\w.\Pi a'\in A\w'.A\overline{w}(x.a,x'.a')\rightarrow \Sigma
\underline{a}\in
A\underline{w}.A\w(u.\underline{a},a)\times
A\w'(u'.\underline{a},a')
\)
\fi
\end{definition}
Thus, if two values $a\in A\w$ and $a'\in A\w'$ are equal in a common world
$\overline{\w}$ then this can only be the case because there is a value in
the ``intersection world'' $\underline{\w}$ from which both $a,a'$ arise. 
Intuitively, p.p.f.s will become the denotations
of value types.
\iffull
\begin{lemma}
  If $A$ is a p.p.f., $u:\w\rightarrow \w'$ and $a,a'\in A\w$,
  there is a continuous function $A\w'(u.a,u.a')\rightarrow
  A\w(a,a')$. Moreover, the ``common ancestor'' $\underline{a}$ of $a$ and $a'$ is unique up to
$\sim$. 
\end{lemma}
Note that the ordering on worlds and world morphisms is discrete so
that continuity only refers to the $A\w'(u.a,u.a')$ argument.
\fi
\iffull
\begin{definition}[Morphism of functors]\label{morphfun}
  If $A,B$ are p.p.f., a morphism from $A$ to $B$ is a pair
  $e=(e_0,e_1)$ of continuous functions where
  $e_0:\Pi\w.A\w\rightarrow B\w$ and $e_1:\Pi\w.\Pi\w'.\Pi
  x:\w\rightarrow\w'.\Pi a\in A\w.\Pi a'\in
  A\w'.A\w'(x.a,a')\rightarrow B\w'(x.e_0(a),e_0(a'))$. A proof that
  morphisms $e,e'$ are equal is given by a continuous function
  $\mu:\Pi \w.\Pi a\in A\w.B\w(e(a),e'(a))$.
\end{definition}
These morphisms compose in the obvious way and so
the faithful functors and morphisms between them form a category. 
\fi
\subsection{Fibred setoids}\label{fibse} In order to provide
meanings for computation types we need a weaker variant of p.p.f.,
namely, \emph{fibred setoids}. These lack the facility of transporting
values along world morphisms but instead allow the proof-relevant
comparison of values at different worlds provided the latter are
related by a pullback square.
\begin{definition}
\label{def:fibred-setoids}
  A \emph{fibred setoid} over a category of
  worlds $\world$ is given by a predomain $T\w$ for every
  $\w\in\world$ and for every pullback square $\w\sqsol \w'$ and
  elements $a\in T\w$ and $a'\in T\w'$ a set $T\sqsol(a,a')$ so that
  the set of tuples $(a,a',q)$ with $q\in T\sqsol(a,a')$ is a
  predomain with continuous projections.
 
  Next, we need continuous operations $r,s,t$ so that $r(a)\in
  Tr(\w)(a,a)$ when $a\in T\w$ and $s(q)\in Ts(\sqsol)(a',a)$ when
  $q\in T\sqsol(a,a')$ and $t(q,q')\in Tt(\sqsol,\sqsol')(a,a'')$ when
  $q\in T\sqsol(a,a')$ and $q'\in T\sqsol'(a',a'')$.

  In addition, for any two isomorphic pullback squares $\sqsol$ and
  $\sqsol'$ between $\w$ and $\w'$ there is a continuous operation
  of type $\Pi a\in T\w.\Pi a'\in T\w'.T\sq(a,a')\rightarrow T\sq'(a,a')$.

  Finally, for each pullback square $\sqsol = \w\sq{x}{u}{x'}{u'}\w'$
  with apex $\overline{\w}$ and low point $\underline{\w}$ there is a
  continuous function of type
\iffull
\[
\Pi t\in T\w.\Pi t'\in T\w'.T\sqsol(t,t')\rightarrow \Sigma
\underline{t}\in T\underline{\w}.T\sq{u}{1}{1}{u}(\underline{t},t)\times 
T\sq{u'}{1}{1}{u'}(\underline{t},t')
\]
\else

\(
\Pi t\in T\w.\Pi t'\in T\w'.T\sqsol(t,t')\rightarrow \Sigma
\underline{t}\in T\underline{\w}.T\sq{u}{1}{1}{u}(\underline{t},t)\times 
T\sq{u'}{1}{1}{u'}(\underline{t},t')
\)
\fi
\end{definition}
Note the similarity of the last operation to pullback-preservation. 

\begin{example}
  If $A$ is a p.p.f.,\  we obtain a fibred setoid $S(A)$ as follows: 
  $S(A)\w=A\w$ and if $\w\sq{x}{u}{x'}{u'}\w'$ with apex $\overline \w$, 
 define the proof set $S(A)
  \sq{x}{u}{x'}{u'}(a,a')=A\overline \w(x.a,x'.a')$. 
\iffull If $\sq{x}{u}{x'}{u'}$ and
  $\sq{y}{v}{y'}{v'}$ are two composable pullback squares with
  composite $\sq{zx}{ut}{z'y'}{v't'}$ and $p\in S(A) \sq{x}{x'}{u}{u'}
  (a,a')$ and $p'\in S(A) \sq{y}{y'}{v}{v'} (a',a'')$, then the composite
proof of $t_{S(A)}(p,p')\in S(A)\sq{zx}{ut}{z'y'}{v't'}(a,a'')$ is given
by $t_A(z.p,z'.p')$. Indeed, if $\hat\w=\cod z$ is the apex of the
composite square  then  $z.p\in A{\hat\w}(zx.a,zx'.a')$ and $z'.p'\in
A{\hat\w}(z'y.a',z'y'.a'')$ and $zx'.a'=z'y.a'$ since $zx'=z'y$ so the
two proofs compose in $A{\hat \w}$.  \fi
\end{example}
\iffull
\begin{lemma}\label{baf}
  Let $T$ be a fibred setoid.  The elements $\underline{t}$ given by pullback
  preservation are unique up to $\sim$.  If $u:\w\rightarrow \w'$ is
  an isomorphism then there is a continuous function
  $Tu:T\w\rightarrow T\w'$ and it is bijective up to $\sim$ with
  inverse $T(u^{-1})$. If $\sqsol$ and $\sqsol'$ are isomorphic pullback squares then there are continuous back and forth functions 
$\Pi t.\Pi t'.T\sqsol(t,t')\rightarrow T\sqsol'(t,t')$. 
\end{lemma}
\fi
\begin{definition}
  A \emph{morphism} $f$ from fibred setoid $T$ to fibred setoid $T'$
  is an equivalence class  of pairs of continuous functions $f_0:\Pi \w.T\w\rightarrow T'\w$ and
  $f_1:\Pi \w,\w'.\Pi \w \sqsol\w'.\Pi a\in T\w.\Pi a'\in
  T\w'.T\sqsol(a,a') \rightarrow T'\sqsol(f_0(\w,a),f_0(\w',a'))$. 

  Two such pairs  $f,f'$ are identified if there exists a continuous 
function that assigns to each $\w$ and $a\in T\w$ a
proof $\mu(a)\in T{r(\w)}(f_0(\w,a),f_0'(\w,a))$. 
\end{definition}
\iffull
\begin{lemma}\label{meanterm}
  If $A$ is a p.p.f.\ and $T$ is a fibred setoid then in
  order to specify a morphism from $S(A)$ to $T$ with given first
  component $f_0: \Pi\w.A\w\rightarrow T\w$ it is enough to provide a
  continuous function $f_{0.5}:\Pi\w,\w'.\Pi x:\w\rightarrow \w'.\Pi
  a\in A\w.\Pi a'\in A\w'.A\w'(x.a,a')\rightarrow
  T\sq{x}{1}{1}{x}(f_0(a),f_0(a'))$.
\end{lemma} 
\begin{proof}
  If $(f_0,f_1)$ is a morphism we can define $f_{0.5}$ by
  $f_{0.5}(x,p)=f_1(x,a,a',p)$ noting that $p\in
  S(A)\sq{x}{1}{1}{x}(a,a')$.  Conversely, given $f_{0.5}$ to define
  $f_1$ we pick a pullback square $\w\sq{x}{u}{x'}{u'}\w'$ with apex
  $\overline{\w}$ and $a\in A\w, a'\in A\w'$ and $p\in A{\overline
    \w}(x.a,x'.a')$, i.e., a proof in $S(A){\sqsol}(a,a')$.  Applying
  $f_{0.5}$ to $r(-)$ yields the morphism $p_1\in
  T\sq{x}{1}{1}{x}(f_0(a),f_0(x.a))$; moreover, applying $f_{0.5}$ to
$s(p)$
  yields $p_2\in T\sq{x'}{1}{1}{x'}(f_0(a'),f_0(x.a))$.  Then,
  $t(p_1,s(p_2))\in
  Tt(\sq{x}{1}{1}{x},\sq{1}{x'}{x'}{1})(f_0(a),f_0(a'))$ so that
  Lemmas~\ref{decomp} and \ref{baf} yield the desired proof in the square
  $T\sq{x}{u}{x'}{u'}(f_0(a),f_0(a'))$.

  The second part of the lemma about equality is just a restatement of
  the definition of equality of morphisms of fibred setoids. 
\end{proof}\fi
\iffull
\begin{lemma}
Let $A,B$ be p.p.f.\ For every morphism $e:A\rightarrow B$ there  is a
morphism $S(e):S(A)\rightarrow S(B)$ such that $S(e)_0=e_0$. Thus, in
particular $S(-)$ is a full and faithful functor from the category  of
p.p.f.\ on $\world$ to the category of fibred setoids over $\world$. 
\end{lemma} 
\fi
\subsection{Contravariant functors and relations} \label{reles} The
role of the next concept is to give meaning to abstract stores.
\begin{definition}
  A contravariant functor $\Astores$ from a category of worlds
  $\world$ to the category of setoids comprises for each $\w\in
  \world$ a \emph{nonempty} setoid $\Astores \w$ and for each morphism
  $u:\w_0\rightarrow \w$ a setoid morphism $\Astores u:\Astores
  \w\rightarrow \Astores \w_0$ such that $u\mapsto \Astores u$ preserves
  identities and composition.
\end{definition}
If $\sigma\in\Astores \w$ and $u:\w_0\rightarrow \w$ we
  write $\sigma.u$ or $\sigma u$ for $\Astores u(\sigma)$. Note that
  $\sigma.(uv)=(\sigma.u).v$. Intuitively, $\sigma.u$ can be interpreted
as the abstract heap obtained by forgetting locations in $\sigma$ that have
been ``allocated'' by the world evolution specified by $u$, namely, those 
appearing in \w and not in $\w_0$.
\iffull
The following definition corresponds to the p.p.f. used for values, but
now for abstract heaps: In particular, an abstract heap at the
low-point of a pullback square is the result of forgetting locations from
an abstract heap at its 
apex.
\fi
\begin{definition}
A contravariant functor $\Astores$ preserves \emph{minimal pullbacks}
if whenever $\w\sq{x}{u}{x'}{u'}\w'$ with apex $\overline{\w}$ and low
point
$\underline{\w}$ is a \emph{minimal pullback square} then 
the diagram $\Astores\w\sq{\Astores u}{\Astores u'}{\Astores x}{\Astores
x'} \Astores \w'$ is a pullback in $\Std$.
\end{definition}
This means in particular that if $\sigma\in
\Astores\w, \sigma'\in \Astores\w'$ and $\sigma.u\sim\sigma'.u'$ 
 then there exists a ``pasting''
$\overline{\sigma}\in \Astores\overline{\w}$ such that
$\overline{\sigma}.x\sim \sigma$ and
$\overline{\sigma}.x'\sim\sigma'$ and $\overline{\sigma}$ is unique
up to $\sim$. Moreover the passage from the given data to
$\overline\sigma$ and the witnessing proofs is continuous. 
\iffull
Application to the trivial minimal pullback $\sq{u}{1}{1}{u}$ and
nonemptiness yields the following result.
\begin{lemma}
\label{lem:unique-upto-sim}
   For every $u:\w\rightarrow \w'$ and $\sigma\in\Astores\w$ there is
morphism of setoids $\Astores \w\rightarrow \Astores \w'$ which is right
inverse to $(-).u$. 
\end{lemma}
The ``unique up to $\sim$'' clause allows us in particular to assert the $\sim$-equality of two abstract stores $\sigma,\sigma'\in\Astores\overline{w}$ by proving $\sigma.x\sim\sigma'.x$ and $\sigma.x'\sim\sigma'.x'$ separately when $\sq{x}{u}{x'}{u'}$ is a minimal pullback  with apex $\overline{\w}$.
\fi
\begin{definition}
A \emph{relation} $R$ on such a contravariant functor $\Astores$
consists of an admissible subset $R\w\subseteq \Astores \w \times \Astores
\w$ such that $(\sigma,\sigma')\in R \w$ and $u:\w_0\rightarrow \w$ implies
$(\sigma.u,\sigma'.u)\in R \w_0$ and if $p:\sigma\sim \sigma_1$ and
$p':\sigma'\sim\sigma_1'$ then $(\sigma_1,\sigma_1')\in R\w$, as well. 
\end{definition}

It would be natural to let relations be proof-relevant as well, but we
refrain from doing so at this stage for the sake of simplicity.

\section{Computational model}
\label{test}
We use a setoid interpretation in order to justify
nontrivial type-dependent observational equivalences for the
language above. This interpretation is parametric over an
\emph{instantiation},
defined below. 
\begin{definition}
An \emph{instantiation} comprises the following data. 

\noindent \textbullet~a category of worlds $\world$;

\noindent \textbullet~a full-on-objects subcategory $\mathbf{I}$ of
\emph{inclusions}
  (in other words, a subset of the morphisms closed under composition
  and comprising the identities) with the property that every morphism
  $u$ can be factored as $u=fi$ and $u=jg$ with $f,g$ isomorphisms and
  $i,j$ inclusions;

\noindent \textbullet~a contravariant, minimal-pullback-preserving, functor $\Astores$ from $\world$ to
the category of setoids;

\noindent \textbullet~for each $\w\in\world$ a relation
  $\Vdash_\w\subseteq\Stores\times \Astores \w$ subject to the axiom
  that $\heap\Vdash_\w \sigma$ and $u\in\mathbf{I}(\w_0,\w)$ implies
  $\heap\Vdash_{\w_0}\sigma.u$;

\noindent \textbullet~a set of elementary effects $\elEffs$ and for each
effect $\eff$ a
set  $\Rscr(\eff)$ of relations on $\Astores$. As usual, one defines
effects as sets of elementary effects and extends $\Rscr$ to all
effects by $\Rscr(\emptyset)=$ ``all relations on $\Astores$
 (in the sense described in Section~\ref{reles})'' and
$\Rscr(\eff)=\bigcap_{\eff_0\in\eff}\Rscr(\eff_0)$.
\end{definition}

We give \iffull three \else two \fi examples of instantiations. 
\ifapp
The appendix
\else
The long version 
\fi
contains a third example, mirroring our previous
model \cite{DBLP:conf/ppdp/BentonKBH09}.

\subsection{Sets of locations}
In the first one, called \emph{sets of locations}, worlds are finite
sets of (allocated) locations (taken from $\labs$) and their morphisms
are injective functions with inclusions being actual
inclusions. Abstract stores are given by $\Astores \w=\{\heap\mid
\dom\heap\supseteq\w\}$ with $\Astores\w(\heap,\heap')=\star$, always,
and $\Astores u$ given by renaming locations.

We put $\heap \Vdash_\w \heap'$ whenever $\heap=\heap'$.  We only have
one elementary effect here, $\aEff{}$, representing the allocation of
one or more fresh names. Note that if $R$ is a relation on $\Astores$
then $R\w$ is either total or empty and if $u:\w\rightarrow\w'$ then
$R\w'\neq\emptyset\Rightarrow R\w\neq\emptyset$. A relation $R$ is in
$\mathcal{R}(\aEff{})$ if for every inclusion $u:\w\rightarrow\w'$ one
also has $R\w\neq\emptyset\Rightarrow R\w'\neq\emptyset$, thus $R$ is
oblivious to world extensions.

\iffull
\subsection{Flat stores} 
The second instantiation, called \emph{flat stores}, assumes that heap
locations contain merely integer values and no pointers. 
Possible worlds are
finite sets of locations together with a function that associates each
location a \emph{region} taken from a fixed set $\Regids$ of
regions. World morphisms must preserve this tagging. We write
$\cloc\in\w$ and $\cloc\in\w(\regid)$ to mean that $\cloc$ occurs in $\w$
and with region $\regid$ in the second case.  Abstract stores
$\Astores \w$ comprise those heaps $\heap\in\Stores$ with $\dom
\heap\supseteq
\w$ and such that $\cloc\in\w$ and $\heap\in\Astores\w$ implies that
$\heap(\cloc)$ is an integer value, $\intt v$ for $v\in\mathbb{Z}$
 (thus all locations hold integer
values). We put $\heap\sim \heap'$ in $\Astores\w$ iff for all $\cloc\in
\w$
one has $\heap(\cloc)=\heap'(\cloc)$. In this case there is a unique proof,
say $\star$.  For morphism $u:\w\rightarrow \w'$ we define $\Astores
u:\Astores \w'\rightarrow \Astores \w$ by renaming concrete locations
according to $u$. The elementary effects are $\rEff\regid,
\wEff\regid,\aEff\regid$ representing reading from within, writing
into, allocating within a region $\regid$. The associated sets of
relations are given by
\iffull
\[
 \begin{array}{lcl}
R\in\mathcal{R}(\rEff{\regid})&\iff& (\sigma,\sigma')\in R\w \Rightarrow
\forall \cloc\in \w(\regid). \sigma(\cloc)=\sigma'(\cloc)
\\[3pt]

R\in\mathcal{R}(\wEff{\regid})&\iff& (\sigma,\sigma')\in R\w \Rightarrow
\forall \cloc\in \w(\regid).\forall v\in\mathbb{Z}.\Rightarrow 
(\sigma[\cloc{\mapsto}\intt{v}],\sigma'[\cloc{\mapsto}\intt{
v}])\in R\w
\\[3pt]

R\in\mathcal{R}(\aEff{\regid})&\iff& (\sigma,\sigma')\in R\w 
\Rightarrow \forall \w_1.\forall u\in\mathbf{I}(\w,\w_1). (\dom
{\w_1}\setminus \dom{\w} \subseteq \dom{\w_1(\regid)}) \\
&& \quad
\Rightarrow \forall \sigma_1\in \Astores \w_1,\sigma_1'\in \Astores \w_1'.
\sigma_1.u\sim \sigma\wedge \sigma_1'.u\sim\sigma'\wedge\\
&& \qquad \forall\cloc\in
\dom{\w_1}\setminus\dom \w.  \sigma_1(\cloc)=\sigma_1'(\cloc) \Rightarrow
(\sigma_1,\sigma_1')\in R \w_1
 \end{array}
\]
This essentially models the setting of our earlier relation-based
\else

\(
\hspace{-5mm}
 \begin{array}{lcl}
R\in\mathcal{R}(\rEff{\regid})&\iff& (\sigma,\sigma')\in R\w \Rightarrow
\forall \cloc\in \w(\regid). \sigma(\cloc)=\sigma'(\cloc)
\\[3pt]

R\in\mathcal{R}(\wEff{\regid})&\iff& (\sigma,\sigma')\in R\w \Rightarrow
\forall \cloc\in \w(\regid).\forall v\in\mathbb{Z}.\Rightarrow 
(\sigma[\cloc{\mapsto}\intt{v}],\sigma'[\cloc{\mapsto}\intt{
v}])\in R\w
\\[3pt]

R\in\mathcal{R}(\aEff{\regid})&\iff& (\sigma,\sigma')\in R\w 
\Rightarrow \forall \w_1.\forall u\in\mathbf{I}(\w,\w_1). (\dom
{\w_1}\setminus \dom{\w} \subseteq \dom{\w_1(\regid)}) \\
&& \quad
\Rightarrow \forall \sigma_1\in \Astores \w_1,\sigma_1'\in \Astores \w_1'.
\sigma_1.u\sim \sigma\wedge \sigma_1'.u\sim\sigma'\wedge\\
&& \qquad \forall\cloc\in
\dom{\w_1}\setminus\dom \w.  \sigma_1(\cloc)=\sigma_1'(\cloc) \Rightarrow
(\sigma_1,\sigma_1')\in R \w_1
 \end{array}
\)

\noindent
This essentially models the setting of our earlier relation-based
\fi
account of reading, writing, and allocation with integer values stores
\cite{DBLP:conf/ppdp/BentonKBH09} the difference being that allocation
is modelled with relations on the same level as reading and writing
and that the stores being related share the same layout. 
\fi

\subsection{Abstract locations}
To formulate the \iffull third \else second \fi instantiation, called
\emph{Heap PERs}, we need the concept of an \emph{abstract location}
which generalises physical locations in that it models a portion of
the store that can be read from and updated. Such portion may comprise
a fixed set of physical locations or a varying such set (as in the
case of a linked list with some given root). It may also reside in
just a part of a physical location, e.g., comprise the two low order
bits of an integer value stored in a physical location. Furthermore,
the equality on such abstract location may be coarser than physical
equality, e.g., two linked lists might be considered equal when they
hold the same set of elements, and there may be an invariant, e.g.\
the linked list should contain integer entries and be neither circular
nor aliased with other parts of the heap. This then prompts us to
model an abstract location as a partial equivalence relation (PER) on
heaps together with two more components that describe how
modifications of the abstract location interact with the heap as a
whole. Thus, next to a PER, an abstract location also contains a bunch
of (continuous) functions that model \emph{writing to the} abstract
location. These functions are closed under composition (thus form a
category) and are idempotent in the sense of the PER modelling
equality.

Thirdly, a ``footprint'' which is a heap-dependent set of
physical locations which overapproximates the effect of ``the
guarantee'' so as to enable the creation of fresh abstract locations
not knowing the precise nature of the other abstract locations that
are already there. (These footprints are very similar to
accessibility maps, first introduced for reasoning in a model of state based on FM-domains \cite{DBLP:conf/tlca/BentonL05}.)

\begin{definition}
\label{def:abstract-location}
  An \emph{abstract location} $\loc$ (on the chosen predomain
  $\Stores$) consists of the following data:
\begin{compactitem}
\item a nonempty, admissible partial equivalence relation (PER) $\loc^R$ on
  $\Stores$ modelling the ``semantic equality'' on the bits of the
  store that $\loc$ uses (a ``rely-condition'');
\item a set $\loc^G$ of continuous functions on $\Stores$ closed by
composition, modelling the functions that ``write only on $\loc$'' leaving 
other locations alone (a ``guarantee condition'');
\item a continuous function $\loc^F:\Pi
  \heap\in\Stores.\mathcal{P}(\dom\heap)$ describing the ``footprint''
   of the abstract location (where the ordering on the powerset
$\mathcal{\dom\heap}$ is of course discrete).
\end{compactitem}
subject to the conditions
\begin{compactitem}
\item if $\iota\in \loc^G$ and $(\heap,\heap')\in\loc^R$ then $(\iota(\heap),\iota(\heap')), 
(\iota(\heap),\iota(\iota(\heap))), (\iota(\heap'),\iota(\iota(\heap')))\in\loc^R$,  
\item if $\forall\cloc\in\loc^F(\heap).\heap_1(\cloc)=\heap(\cloc)$
  and $\forall\cloc\in\loc^F(\heap').\heap_1'(\cloc)=\heap'(\cloc)$
  then $(\heap,\heap')\in\loc^R$ implies
  $(\heap_1,\heap_1')\in\loc^R$; thus $\loc^R$ ``looks'' no further
  than the footprint;
\item if $\iota \in \loc^G$ and $\iota(\heap) = \heap_1$ then
  $\dom\heap\subseteq\dom{\heap_1}$ and
  $\cloc\in\dom\heap\setminus \loc^F(\heap)$ implies $\cloc\not\in
  \loc^F(\heap_1)$ and $\heap(\cloc)=\heap_1(\cloc)$.
\end{compactitem}
Two abstract locations $\loc_1,\loc_2$ are independent if 
\begin{compactitem}
\item for $i=1,2$ and $\iota(\heap)=\heap_1$ for $\iota\in\loc^G_i$ one has 
$(\heap,\heap)\in\loc_i^R, (\heap,\heap')\in\loc_{3-i}^R\Rightarrow (\heap_1,\heap')\in\loc_{3-i}^R$
and
$\cloc\in\dom\heap\setminus\loc_{3-i}^F(\heap)$
then $\cloc\notin\loc_{3-i}^F(\heap_1)$;
\item If $(\heap_1,\heap_1)\in\loc_1^R$ and $(\heap_2,\heap_2)\in\loc_2^R$
there exists $\heap$ such that $(\heap,\heap_1)\in\loc_1^R$ and
$(\heap,\heap_2)\in\loc_2^R$. (Amounting to
$\heap/(\loc_1^R\cap\loc_2^R)$ being a cartesian product of $\heap/\loc_1^R$ and $\heap/\loc_2^R$.)
\end{compactitem}
If $\loc_1, \loc_2$ are independent, we form a joint location
$\loc_1\otimes\loc_2$ by $(\loc_1\otimes\loc_2)^R=\loc_1^R\cap\loc_2^R$
and
$(\loc_1\otimes\loc_2)^G=(\loc_1^G\cup\loc_2^G)^*$ and
$(\loc_1\otimes\loc_2)^F(\heap)=\loc_1^F(\heap)\cup\loc_2^F(\heap)$. 

\end{definition}


If $\cloc\in\Locs$ is a concrete location, we can define an abstract
counterpart by putting $\cloc^R=\{(\heap,\heap')\mid
\heap(\cloc)=\heap'(\cloc)\}$ and $\cloc^G$ is the set with a
write function for each value that may be stored in $\cloc$. 
For instance, if $\cloc$ stores booleans, then $\cloc^G$ contains the
functions $write_\mtrue$ and $write_\mfalse$, where
$write_\mtrue(\heap) = \heap'$ such that $\heap'(\cloc) =
\mtrue$ and 
for all other locations $\cloc' \neq \cloc$, $\heap'(\cloc') =
\heap(\cloc')$.
When $\cloc_1 \neq \cloc_2$ then the induced abstract locations are
independent.

The next example illustrates that abstract locations may be
independent although their footprints share some concrete locations.
Fix a concrete location $\cloc$ and define two abstract
locations $\loc_1$ and $\loc_2$ both with footprint consisting of the
location $\cloc$. Moreover, $(\heap, \heap')$ belong, respectively, to
the rely of location $\loc_i$ ($i=1,2$) if $\heap(\cloc)$ and
$\heap'(\cloc)$ are both integers whose $i$-th significant bit
agrees. The ``guarantee'' $\loc_i^G$ might then contain functions that
set the $i$-th bit to some fixed value and leave the other bits
alone. It is easy to see that $\loc_1,\loc_2$ are independent.

Thirdly, let $\cloc_1, \cloc_2$ be two distinct concrete locations and
for heap $\heap$ and finite integer sets $U_1,U_2$ define
$P(\heap,U_1,U_2)$ to mean that in $\heap$ the locations $\cloc_1,
\cloc_2$ point to non-overlapping integer lists with \emph{sets} of
elements $U_1$ and $U_2$. Now define abstract location $\loc_i$ by
$\loc_i^R= \{(\heap,\heap')\mid \exists
U_1,U_2.P(\heap,U_1,U_2)\wedge P(\heap',U_1,U_2)\}$ 
 and $\loc_i^F(\heap)=$ ``locations reachable from
$\cloc_i$'' if $\cloc$ points to a well-formed list of integers in
$\heap$ and $\emptyset$ otherwise. The guarantee component $\loc_i^G$
contains all the (idempotent) functions $\iota$ that leave the locations
not in the footprint of $\loc_i$ alone. That $\iota(\heap) = \heap'$,
such that $\heap'(\cloc') = \heap(\cloc')$ for all $\cloc' \in
\dom{\heap}\setminus \loc_i^F$.
Again, $\loc_1$ and $\loc_2$ are independent.

The role of the footprints $\loc^F$ is to provide a minimum amount of
interaction with physical allocation. If $\loc$ is an abstract
location and $\heap_0$ the current heap so that
$(\heap_0,\heap_0)\in\loc^R$ then we may, e.g., allocate
$(\heap_1,\cloc)=\new(\heap_0,\intt 0)$, and define an abstract location
$\loc_1$ by

\(
 \begin{array}{lcl}
\loc_1^R&=&\{(\heap,\heap')\mid
\heap(\cloc)=\heap'(\cloc)\in\intt{\mathbb{Z}}\wedge
\cloc\not\in\loc^F(\heap)\wedge \cloc\not\in\loc^F(\heap')\}\\
\loc_1^G&=&\{\iota\mid \iota(\heap) = \heap_1 \Rightarrow \forall
\cloc'\neq\cloc.\heap(\cloc')=\heap_1(\cloc')\}\\
\loc_1^F(\heap)&=&\{\cloc\}  
 \end{array}
\)

\noindent
We now know that $\loc$ and $\loc_1$ are independent and, furthermore, 
$(\heap_1,\heap_1)\in(\loc\otimes\loc_1)^R$. 
\begin{definition}
Abstract  locations $\loc_1,\dots,\loc_n$ are mutually independent if they are pairwise independent and whenever $(\heap_i,\heap_i)\in\loc_i$ for $i=1\dots n$ then there is $\heap$ such that $(\heap_i,\heap)\in\loc_i$ for $i=1\dots n$. 
\end{definition}
\begin{lemma}
Abstract locations $\loc_1,\dots,\loc_{n+1}$ are mutually independent iff $\loc_1,\dots,\loc_n$ are mutually independent and $\loc_{n+1}$ is independent of $\loc_1\otimes\dots\otimes\loc_n$. 
\end{lemma}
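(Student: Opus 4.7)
Write $\loc := \loc_1 \otimes \cdots \otimes \loc_n$, so $\loc^R = \bigcap_{j\le n}\loc_j^R$, $\loc^G = \bigl(\bigcup_{j\le n}\loc_j^G\bigr)^{*}$ and $\loc^F(\heap) = \bigcup_{j\le n}\loc_j^F(\heap)$. We prove the two implications separately.

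\textbf{Forward.} Assume $\loc_1,\dots,\loc_{n+1}$ are mutually independent. The restriction to $\loc_1,\dots,\loc_n$ is routine: pairwise independences carry over, and the cartesian-product condition for $n$ locations follows by padding arbitrary $\loc_i^R$-representatives with any element of the nonempty support of $\loc_{n+1}^R$ and invoking the $(n{+}1)$-ary product. For the independence of $\loc_{n+1}$ and $\loc$, cartesian-product-condition~2 is an immediate restriction of the $(n{+}1)$-ary one. For the guarantee/footprint clauses with $\iota\in\loc_{n+1}^G$, one reads off the conclusion componentwise from the pairwise independences of $\loc_{n+1}$ with each $\loc_j$ ($j\le n$) and intersects. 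For $\iota\in\loc^G$, decompose $\iota=\iota_k\circ\cdots\circ\iota_1$ with $\iota_l\in\loc_{a_l}^G$ and induct on word length: at each step the pairwise independence of $\loc_{a_l}$ with $\loc_{n+1}$ preserves the $\loc_{n+1}^R$-link and the footprint condition, while pairwise independences among the $\loc_j$'s together with the $\loc_{a_l}^G$-closure axiom maintain the invariant that the intermediate heap still sits in $\loc^R$.

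\textbf{Backward.} Assume $\loc_1,\dots,\loc_n$ mutually independent and $\loc_{n+1}$ independent of $\loc$. The $(n{+}1)$-ary cartesian condition is obtained in two moves: given $(\heap_j,\heap_j)\in\loc_j^R$ for $j\le n{+}1$, first produce $\tilde\heap$ in the $\loc^R$-support with $(\heap_j,\tilde\heap)\in\loc_j^R$ for $j\le n$ by mutual independence of $\loc_1,\dots,\loc_n$; then apply cartesian-product condition~2 of $\loc$-vs-$\loc_{n+1}$ to $(\tilde\heap,\tilde\heap)\in\loc^R$ and $(\heap_{n+1},\heap_{n+1})\in\loc_{n+1}^R$ to obtain a common $\heap$ with $(\tilde\heap,\heap)\in\loc^R$ and $(\heap_{n+1},\heap)\in\loc_{n+1}^R$. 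Pairwise independences among $\loc_1,\dots,\loc_n$ are given. The main task is pairwise independence of $\loc_i$ and $\loc_{n+1}$ ($i\le n$). The uniform tool is a \emph{lift}: for any $\heap$ with $(\heap,\heap)\in\loc_i^R$ we build $\hat\heap$ with $(\hat\heap,\hat\heap)\in\loc^R$, $(\hat\heap,\heap)\in\loc_i^R$, and (by an additional use of cartesian-product condition~2 of $\loc$-vs-$\loc_{n+1}$) $(\hat\heap,\heap)\in\loc_{n+1}^R$. Condition~2 of pairwise independence is then immediate from the lift. For condition~1 with $\iota\in\loc_i^G\subseteq\loc^G$, apply condition~1 of $\loc$-vs-$\loc_{n+1}$ at $\hat\heap$ to obtain $(\iota(\hat\heap),\heap')\in\loc_{n+1}^R$, and transfer the conclusion back to $(\iota(\heap),\heap')\in\loc_{n+1}^R$ through the ``looks no further than footprint'' axiom of $\loc_{n+1}^R$. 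The symmetric direction $\iota\in\loc_{n+1}^G$ is handled by lifting $\heap'$ instead of $\heap$ and using condition~1 of $\loc$-vs-$\loc_{n+1}$ for $\iota\in\loc_{n+1}^G$.

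\textbf{Main obstacle.} The delicate step is the transfer from $\iota(\hat\heap)$ to $\iota(\heap)$ in the backward pairwise condition~1, since $\loc_i^F$ and $\loc_{n+1}^F$ may overlap (as in the two-bits-in-one-cell example) and the hypotheses of condition~1 of $\loc$-vs-$\loc_{n+1}$ only license direct reasoning about heaps already sitting in the full product support $\loc^R$. The plan is to split by location: on $\loc_{n+1}^F\setminus\loc_i^F$, the $\loc_i^G$-footprint axiom forces $\iota(\heap)$ and $\iota(\hat\heap)$ to coincide with $\heap$ and $\hat\heap$ respectively, and $(\hat\heap,\heap)\in\loc_{n+1}^R$ yields the required agreement; on $\loc_i^F\cap\loc_{n+1}^F$, the $\loc_i^R$-closure of $\loc_i^G$ gives $(\iota(\heap),\iota(\hat\heap))\in\loc_i^R$, and combining this with the already-established $(\iota(\hat\heap),\heap')\in\loc_{n+1}^R$ plus the $\loc_{n+1}^R$-link between $\hat\heap$ and $\heap$ pins down enough of the modified cell to conclude $(\iota(\heap),\heap')\in\loc_{n+1}^R$ through a final application of the footprint axiom. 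This is precisely where the hypothesis on $\loc_{n+1}$-vs-$\loc$ is used in its full strength, namely for the generators $\loc_i^G$ of $\loc^G$.
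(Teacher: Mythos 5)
The paper itself states this lemma without proof (there is nothing in the body or the appendix to compare against), so I can only assess your argument on its own terms. Your forward direction is fine: the cartesian condition for $\loc_{n+1}$ versus $\loc=\loc_1\otimes\cdots\otimes\loc_n$ and the $n$-ary joint condition are indeed direct specialisations of the $(n{+}1)$-ary one, the clause for $\iota\in\loc_{n+1}^G$ is read off componentwise, and your induction on the word length of $\iota\in\loc^G$, maintaining the invariant that the intermediate heaps stay in $\bigcap_{j\le n}\loc_j^R$ (via pairwise independence among $\loc_1,\dots,\loc_n$ and the closure axiom), is exactly what is needed. The backward direction, however, breaks at precisely the step you call delicate, and not merely for lack of bookkeeping. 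A first symptom is the lift itself: $(\hat{\heap},\heap)\in\loc_{n+1}^R$ forces $(\heap,\heap)\in\loc_{n+1}^R$, which is not available when you only assume $(\heap,\heap)\in\loc_i^R$ (harmless for the cartesian condition, but it shows the setup is being used beyond its hypotheses). The real problem is the transfer from $(\iota(\hat{\heap}),\heap')\in\loc_{n+1}^R$ to $(\iota(\heap),\heap')\in\loc_{n+1}^R$: the locality axiom for $\loc_{n+1}$ only lets you exchange a heap for one that agrees with it \emph{pointwise} on $\loc_{n+1}^F$, whereas $(\hat{\heap},\heap)\in\loc_{n+1}^R$ gives no pointwise agreement anywhere (two different linked-list layouts of the same set are $\loc^R$-related but disagree on every cell of the footprint), and $(\iota(\heap),\iota(\hat{\heap}))\in\loc_i^R$ says nothing about what $\loc_{n+1}^R$ observes on the overlap of the footprints. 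So neither half of your case split is justified by the axioms you invoke.

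The obstruction is structural: independence of $\loc_{n+1}$ from $\loc$ constrains the action of $\iota\in\loc_i^G$ only on heaps in the joint support $(\heap,\heap)\in\bigcap_{j\le n}\loc_j^R$, while the pairwise condition you must prove quantifies over all $\heap$ with $(\heap,\heap)\in\loc_i^R$, and nothing in the definitions propagates the constraint off the joint support. Indeed the step genuinely fails under the stated definitions: take three integer cells $a,b,c$; let $\loc_1$ have footprint $\{a\}$, relation ``equal at $a$'', guarantee the constant writes to $a$; let $\loc_2$ have footprint $\{b\}$, relation ``both store $0$ at $b$'', guarantee the write of $0$ to $b$; let $\loc_3$ have footprint $\{a,b,c\}$, guarantee $\{\mathrm{id}\}$, and relation ``equal value of $f$'', where $f(\heap)=\heap(c)$ when $\heap(b)=0$ and $f(\heap)=\heap(c)+\heap(a)$ otherwise. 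Then $\loc_1,\loc_2$ are mutually independent and $\loc_3$ is independent of $\loc_1\otimes\loc_2$ (on the joint support $b=0$ the guarantees of $\loc_1\otimes\loc_2$ leave $f$ unchanged, and every $f$-class contains a heap with any prescribed $a$ and $b=0$), yet $\loc_1$ and $\loc_3$ are not independent, since on a heap with $b=1$ a write to $a$ changes $f$. Hence the backward implication cannot be established from the stated hypotheses by any refinement of your transfer argument; it needs an additional assumption (for instance restricting attention to heaps in the joint support, or a uniformity/locality requirement on guarantee functions), and your proof should flag this rather than present the transfer as a finishable step.
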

\subsection{Heap PERs}
We are now ready to formulate the \iffull third \else second \fi
instantiation \emph{Heap
  PERs}. We assume an infinite set of \emph{regions} $\Regids$. A
world $\w$ comprises a finite set of mutually independent abstract
locations (written $\w$) and as in the case of flat stores a tagging of 
locations with regions from $\Regids$ 
location. We write $\loc\in \w(\regid)$ to mean that $\loc\in\w$
is tagged with $\regid$. We define $\Astores \w =
\{\heap\in\Stores\mid \forall \loc\in\w.(\heap,\heap)\in\loc^R\}$ and
$\Astores \w(\sigma,\sigma')=\{\star\}\iff \forall \loc\in\w.
(\sigma,\sigma')\in\loc^R$ and $\Astores
\w(\sigma,\sigma')=\emptyset$ otherwise. Again, $\heap\Vdash_\w \sigma$
iff $\heap=\sigma$. 

A morphism from $\w$ to $\w'$ is given by an injective function $u_0:
\w\rightarrow \w'$ and a pair of partial continuous functions
$u_1,u_2:\Stores\partfun\Stores$. Intuitively, the function $u_1$
is used
to map the heaps in the PERs of locations in $\w$ to $\w'$ according to the
renaming of locations specified in $u_0$, while $u_2$
does the same but from $\w'$ to $\w$. Formally,
$\forall \sigma,\sigma'\in\Astores \w.  \forall
\loc\in\w.(\sigma,\sigma')\in\loc^R\Rightarrow
(u_1(\sigma),u_1(\sigma'))\in u_0(\loc)^R \wedge
(u_2(u_1(\sigma)),\sigma)\in\loc^R$ and $\forall
\sigma,\sigma'\in\Astores \w'.  \forall
\loc\in\w.(\sigma,\sigma')\in u_0(\loc)^R \Rightarrow
(u_2(\sigma),u_2(\sigma'))\in \loc^R \wedge
(u_1(u_2(\sigma)),\sigma)\in u_0(\loc)^R$. The same is valid for 
guarantees of locations, by replacing $\cdot^R$ by $\cdot^G$.
Now, $\Astores
u(\sigma) = u_2(\sigma)$.  Such a morphism $u$ is an inclusion if
$u_0$ is an inclusion and $u_1,u_2$ are the identity function.

The elementary effects track reading, writing, and allocating at the
level of regions: $\wEff \regid$ (writing within
region $\regid$), $\rEff\regid$ (reading from within region $\regid$),
$\aEff\regid$ (allocating within region $\regid$).
The sets of relations on $\Astores$ modelling elementary effects are then
given by 
\iffull
\[
 \begin{array}{lll}
R\in\mathcal{R}(\rEff{\regid})& \iff & 
(\sigma,\sigma')\in R\w \Rightarrow \forall \loc\in\w(\regid).
(\sigma,\sigma')\in\loc^R\\[2pt]

R\in\mathcal{R}(\wEff{\regid})&\iff& 
(\sigma,\sigma')\in R\w \Rightarrow \forall \loc\in\w(\regid).\forall
\iota \in \loc^G. (\iota(\heap), \iota(\heap')) \in R\w\\[2pt]


R\in\mathcal{R}(\aEff{\regid})& \iff & 
(\sigma,\sigma')\in R\w \Rightarrow \forall \w_1.\forall
u\in\mathbf{I}(\w,\w_1).(\w_1\setminus \w \subseteq
\w_1(\regid))\Rightarrow \forall
\sigma_1,\sigma_1' \in\Astores\w_1.\\
&& (\sigma_1.u\sim\sigma \wedge
\sigma_1'.u\sim\sigma'\wedge(\sigma_1,\sigma_1')\in \bigcap_{\loc\in
\w_1\setminus \w}
\loc^R)\Rightarrow (\sigma_1,\sigma_1')\in R\w_1
 \end{array}
\]
Thus, a relation $R\in\mathcal{R}(\rEff{\regid})$ ensures that
\else

\(
\hspace{-5mm}
 \begin{array}{lll}
R\in\mathcal{R}(\rEff{\regid})& \iff & 
(\sigma,\sigma')\in R\w \Rightarrow \forall \loc\in\w(\regid).
(\sigma,\sigma')\in\loc^R\\[2pt]

R\in\mathcal{R}(\wEff{\regid})&\iff& 
(\sigma,\sigma')\in R\w \Rightarrow \forall \loc\in\w(\regid).\forall
\iota \in \loc^G. (\iota(\heap), \iota(\heap')) \in R\w\\[2pt]


R\in\mathcal{R}(\aEff{\regid})& \iff & 
(\sigma,\sigma')\in R\w \Rightarrow \forall \w_1.\forall
u\in\mathbf{I}(\w,\w_1).(\w_1\setminus \w \subseteq
\w_1(\regid))\Rightarrow \forall
\sigma_1,\sigma_1' \in\Astores\w_1.\\
&& (\sigma_1.u\sim\sigma \wedge
\sigma_1'.u\sim\sigma'\wedge(\sigma_1,\sigma_1')\in \bigcap_{\loc\in
\w_1\setminus \w}
\loc^R)\Rightarrow (\sigma_1,\sigma_1')\in R\w_1
 \end{array}
\)

\noindent
Thus, a relation $R\in\mathcal{R}(\rEff{\regid})$ ensures that
\fi
locations being read contain ``equal'' (in the sense of $\loc^R$)
values; a relation $R\in\mathcal{R}(\wEff{\regid})$ is oblivious to
writes to any abstract location in $\regid$, and a relation
$R\in\mathcal{R}(\aEff{\regid})$ is oblivious to extensions of the
current world provided that it only adds abstract locations in region
$\regid$, that the initial contents of these newly allocated locations
are ``equal'' in the sense of $(-)^R$ and that nothing else is
changed.

\section{Proof-relevant Logical Relations}
\label{sec:logical-relations}
Given an instantiation, e.g.\ one of the above examples, we 
interpret types (and typing contexts) as p.p.f.\ over
$\world$ and types with effect as a fibred setoid over $S(\world)$. A
term in context $\Gamma\vdash e:\ety{\tau}{\eff}$ will be interpreted as a
morphism $\sem{e}$ from $S(\sem{\Gamma})$ to $T_\eff\sem{\tau}$ where
$T_\eff$ takes p.p.f.\ and effects to fibred setoids and is
given below in Definition~\ref{teffde}. Derivations of equations
will be interpreted as equality proofs between the corresponding morphisms
and can be used to deduce observational equivalences
(Theorem~\ref{obseq}). 

This, however, requires a loose relationship of the setoid interpretation
with the actual meanings of raw terms which is given by realization
relations $\Vdash^A$. Their precise format and role are described in the
following two  definitions. 
\begin{definition}\label{defte}
  A \emph{semantic type} is a pair $(A,\Vdash^A)$ where $A$ is a
  p.p.f.\ (on $\world$) and $\Vdash^A_\w$ is an admissible
  subset of $\Values\times A\w$ for each $\w\in \world$ such that for
  every inclusion $u:\w\rightarrow \w'$ one has that $\vval\Vdash^A_\w
  \val$ implies $\vval\Vdash^A_{\w'} u.\val$.  A \emph{semantic
computation} is a pair $(T,\Vdash^T)$ where $T$ is
  a fibred setoid over $\world$ and $\Vdash_\w^T$ is an
  admissible subset of $\Comps\times T\w$ for each $\w$.
\end{definition}
\begin{definition}
  Let $(\Gamma,\Vdash^\Gamma)$ and $(A,\Vdash^A)$ be semantic types
  and let $(T,\Vdash^T)$ be a semantic computation.  If
  $e:S(\Gamma)\rightarrow T$ is a morphism of fibred setoids and
  $f:\Values\rightarrow \Comps$ then we write $f\Vdash^{\Gamma\vdash
    T}e$ to mean that for some representative $(f_0,f_1)$ of $f$ one has that 
 whenever $\eta\Vdash^\Gamma_\w \gamma$ then
  $f_0(\eta)\Vdash^T_\w e(\gamma)$ holds for all worlds $\w$.
\end{definition}

The following definition, corresponding to that in Fig.~\ref{fig:oldklr}, is where the machinery introduced above pays off.
In particular, it defines the semantics of computations, where proofs,
i.e., pullback squares, are constructed. 

\begin{definition}\label{teffde}\label{def:semantic-computation}
  Let $A$ be a semantic type and $\eff$ an effect. A semantic
  computation $T_\eff A$ is defined as follows: 

\smallskip
\noindent \textbullet~ (Objects) Elements of $(T_\eff A)\w$ are pairs
$(\cval_0,\cval_1)$ of partial continuous functions where
\iffull
\[
 \begin{array}{ll}
  \cval_0 : & \Astores \w\partfun \Sigma
  \w_1.\mathbf{I}(\w,\w_1)\times \Astores \w_1\times A\w_1
\end{array}
\]
and $\cval_1$ is as follows.  If $R\in \Rscr(\eff)$ and
\else

\(
 \begin{array}{ll}
  \cval_0 : & \Astores \w\partfun \Sigma
  \w_1.\mathbf{I}(\w,\w_1)\times \Astores \w_1\times A\w_1
\end{array}
\)

\noindent
and $\cval_1$ is as follows.  If $R\in \Rscr(\eff)$ and
\fi
$(\sigma,\sigma')\in R\w$ then $\cval_1(R,\sigma,\sigma')$ either is
undefined and $\cval_0(\sigma)$ and $\cval_0(\sigma')$ are both
undefined or else $c_1(R,\sigma,\sigma')$ is defined and then
$\cval_0(\sigma)$ and $\cval_0'(\sigma')$ are both defined, say
$\cval_0(\sigma)=(\w_1,u,\sigma_1,a)$ and
$\cval_0(\sigma')=(\w_1',u',\sigma_1',a')$. In this case,
$\cval_1(R,\sigma,\sigma')$ returns a pair $(\sq{x}{v}{x'}{v'},p)$
where $\w_1\sq{x}{v}{x'}{v'}\w_1'$ such that $xu=x'u'$.  Furthermore, $p\in
A \overline w (x.a,x'.a')$ and,
finally, $(\sigma_1.u,\sigma_1'.u')\in R \underline{\w}$ where
$\underline\w$ and $\overline\w$ are low point and apex of 
$\sq{x}{v}{x'}{v'}$.  

\smallskip
\noindent \textbullet~ (Proofs) As usual, proofs only look at the $(-)_0$
components.
  Thus, if $(\cval_0,\_)\in T_\eff A \w$ and $(\cval_0',\_)\in T_\eff
  A \w'$ and $\sq{x}{v}{x'}{v'}$ is in $S(\world)(\w,\w')$ with apex
  and low point $\overline\w, \underline\w$ then a proof in $(T_\eff A)
  \sq{x}{v}{x'}{v'}(\cval,\cval')$ is a partial continuous function
  $\mu$ which given $\sigma\in\Astores \w$ and $\sigma'\in\Astores
  \w'$ and $p:\sigma.v\sim \sigma'.v'$ either is undefined and then
  $\cval_0(\sigma)$ and $\cval_0'(\sigma')$ are both undefined or else
  is defined and then $\cval_0(\sigma)$ and $\cval_0'(\sigma')$ are
  both defined with results, say,
  $\cval_0(\sigma)=(\w_1,u,\sigma_1,\val)$ and
  $\cval_0'(\sigma')=(\w_1',u',\sigma_1',\val')$. In this case,
  $\mu(p)$ returns a tuple $(\sq{x_1}{v_1}{x_1'}{v_1'},q)$ satisfying
  $x_1uv=x_1'u'v'$ and $q\in A{\overline{\w_1}}(x_1.\val,x_1'.\val')$
  with $\overline {\w_1} = \cod{x_1}$ and $\sigma_1.v_1\sim \sigma_1.v_1'$
  in $\Astores{\underline {\w_1}}$. 

\smallskip
\noindent \textbullet~ (Realization)
  If $c\in \Comps$, we define $\ccval\Vdash^{T_\eff A}_\w
(\cval_0,\cval_1)$ to mean that
  whenever  $\heap\Vdash_\w \sigma$ then $\ccval(\heap)$ is defined iff
$\cval_0(\sigma)$ is defined and if $\ccval(\heap)=(\heap_1,\vval)$ and
$\cval_0(\sigma)=(\w_1,u,\sigma_1,\val)$ then
$\heap_1\Vdash_{\w_1}\sigma_1$ and $\vval\Vdash^A_{\w_1} \val$. 
\end{definition}
\iffull
\begin{proposition}
The semantic computation $T_\eff A$ as defined in
Definition~\ref{def:semantic-computation} is a fibred setoid.
\end{proposition}
\begin{proof}
  The tricky case is to show the existence of a transitive operation.
  It is here that we require the independence of abstract locations as
  stated in Definition~\ref{def:abstract-location}, which implies that
  $\Astores$ is also minimal-pullback-preserving.

Assume that there are proofs in $p_1 : T_\eff A \sq{x_1}{v_1}{x_1'}{v_1'}
(\cval, \cval')$ and $p_2 : T_\eff A \sq{x_2}{v_2}{x_2'}{v_2'} (\cval',
\cval'')$ where 
$\w \sq{x_1}{v_1}{x_1'}{v_1'} \w'$ and $\w' \sq{x_2}{v_2}{x_2'}{v_2'}
\w''$. We also have $\sigma \in \Astores \w$ and $\sigma'' \in \Astores
\w''$, such that they are equivalent in the pullback of
the low points of these two pullback squares. Let $\underline{\q}$ be such
pullback. 

In order to use the proofs $p_1$ and $p_2$, we need to construct
from $\sigma$ and $\sigma''$ an abstract heap $\sigma' \in \Astores\w'$.
Let $\overline{\q}$ be the minimal pullback over the apexes of the two pullback
squares $\w \sq{x_1}{v_1}{x_1'}{v_1'} \w'$ and $\w'
\sq{x_2}{v_2}{x_2'}{v_2'} \w''$. Then $\w$ and $\w''$ form a pullback
square with apex $\overline{\q}$ and low point $\underline{\q}$. Since
$\Astores$ is minimal-pullback-preserving, there is a $\sigma_{\q} \in \Astores
\overline{\q}$, such that it is equivalent to $\sigma$ and $\sigma''$ when
taken to the world $\underline{\q}$. We now define $\sigma' \in \Astores
\w'$ to be $\sigma_{\q}$ taken to the world $\w'$. We thus have $\sigma'
\in \Astores
\w'$, and $\sigma'' \in \Astores \w''$, such that $\sigma.v_1 \sim
\sigma'.v_1'$ and $\sigma'.v_2' \sim \sigma''.v_2'$.

We can now use the $p_1$ and $p_2$. In particular, 
let $\cval(\sigma) = (\w_1, u_1, \sigma_1, \val_1)$, $\cval'(\sigma') =
(\w_1', u_1', \sigma_1', \val_1')$, and $\cval''(\sigma'') = (\w_1'',
u_1'', \sigma_1'', \val_1'')$. From the proofs, we get two
pullback squares $\w_1\sq{}{}{}{}\w_1'$ and $\w_1'\sq{}{}{}{}\w_1''$. It
is easy to show that the values obtained are equal in the minimal pullback over 
the
apexes of these two pullback squares and that the abstract heaps are
equivalent in the pullback of their low points.
\end{proof} 
\else
Proving that a semantic computation $T_\eff A$ as  in
Definition~\ref{def:semantic-computation} is a fibred setoid is 
nontrivial.  The tricky case is the existence of a transitivity
 operation. 
It is here that we need the independence of abstract locations as
stated in Definition~\ref{def:abstract-location}, which implies that 
$\Astores$ is also minimal-pullback-preserving. 
\fi
\iffull
\begin{definition}[cartesian product]
If $(A,\Vdash^A)$ and $(B,\Vdash^B)$ are semantic types their cartesian product $(A\times B,\Vdash^{A\times B})$ is defined by $(A\times B)\w=A\w\times B\w$ (cartesian product of setoids) and $(v_1,v_2)\Vdash_\w^{A\times B}(a,b)\iff v_1\Vdash_\w^{A}a\wedge v_2\Vdash_\w^{B}b$. 
\end{definition}
\begin{definition}[function space]
  Let $(A,\Vdash^A)$ be a semantic type and $(T,\Vdash^T)$ be a
  semantic computation. We define a semantic type $(A{\Rightarrow}T,
  \Vdash^{A{\Rightarrow}T})$ as follows.  
  An object $f$ of $(A{\Rightarrow} T)\w$ is a pair $(f_0,f_1)$ of
  continuous functions where $f_0$ assigns to each $\w_1$ and
  $v:\w\rightarrow \w_1$ a continuous function $f_0(v):A\w_1\rightarrow
  T\w_1$. The second component $f_1$ assigns to each
  $v:\w\rightarrow \w_1$ and $v_1:\w_1\rightarrow \w_2$ a continuous
  function $\Pi a\in A\w_1.\Pi a'\in A\w_2.A\w_2(v_1.a,a')\rightarrow
  T\sq{v_1}{1}{1}{v_1}(f_0(v,a),f_0(v_1v,a'))$.

  If $f,f'\in |A{\Rightarrow}T|$ then a proof $\mu\in
  (A{\Rightarrow}T)(f,f')$ is a continuous function assigning to each
  $v:\w\rightarrow \w_1$ and $a\in A\w_1$ a proof $\mu(v,a)\in
  T\sq{1}{1}{1}{1}(f_0(v,a), f_0'(v,a))$.

  If $u:\w\rightarrow \w'$ and $f=(f_0,f_1)\in (A{\Rightarrow}T)\w$
  then $u.f\in (A{\Rightarrow}T)\w'$ is given by precomposition with
  $u$, i.e., $(u.f)_0(v,a)=f_0(vu,a)$, etc.


  As for the realisation relation $\Vdash^{A{\Rightarrow}T}$ we put
  $v\Vdash^{A{\Rightarrow}T}_\w f$ to mean that $v=\funn g$ for some
  $g$ and whenever $i:\w\rightarrow \w_1$ is an inclusion and
  $u\Vdash^A_{\w_1}a$ then $g(u)\Vdash^T_{\w_1} f(i,a)$.
\end{definition}
Notice that unlike morphisms the elements of the function space are \emph{not} identified if they are ``provably equal.''. 
Notice also that if $v\Vdash^{A{\Rightarrow} T}_\w f$ implies
$v\Vdash^{A{\Rightarrow}T}_{\w_1}i.f$ whenever $i:\w\rightarrow \w_1$
is an inclusion.
\else
Details, along with the construction of the cartesian product
$(A\times B,\Vdash^{A\times B})$ and function space
 $(A{\Rightarrow}T, \Vdash^{A{\Rightarrow}T})$, given semantic types
 $(A,\Vdash^A)$ and $(B,\Vdash^B)$ and computation $(T,\Vdash^T)$,
may be found in the 
\ifapp
appendix.
\else
long version of the paper.
\fi
\fi
\subsection{Fundamental theorem}
Given a semantic type $\sem{A}$ for each basic type
$A$ we can interpret any type $\tau$ as a semantic type $\sem{\tau}$
by putting
$\sem{\tau_1\effto\eff\tau_2}=\sem{\tau_1}{\Rightarrow}T_\eff\sem{\tau_2}
$. A
typing context $\Gamma=x_1{:}\tau_1,\dots,x_n{:}\tau_n$ is interpreted
as the semantic type
$\sem{\Gamma}=(1\times\sem{\tau_1})\times\dots)\times\sem{\tau_n}$
where $1$ is the constant functor returning the discrete setoid
$\{()\}$.

To every typing derivation $\Gamma\vdash t:\ety{\tau}{\eff}$ we then
associate a morphism 
$
\sem{\Gamma\vdash t:\ety{\tau}{\eff}}:S(\sem{\Gamma})\rightarrow
T_\eff\sem{\tau}
$
such that $\sem{t}\Vdash^{\sem{\Gamma}\rightarrow
T_\eff\tau}\sem{\Gamma\vdash t:\ety{\tau}{\eff}}$. 
(Note: \emph{this} is point where the untyped semantics is related with the
abstract one.)
For every equality derivation $\Gamma\vdash t=t':\ety{\tau}{\eff}$ we
have $\sem{\Gamma\vdash
t:\ety{\tau}{\eff}} = 
\sem{\Gamma\vdash t':\ety{\tau}{\eff}} 
$,
where the two typing derivations $\Gamma\vdash t:\ety{\tau}{\eff}$ and
$\Gamma\vdash t':\ety{\tau}{\eff}$ are the canonical ones associated with
the equality derivation $\Gamma\vdash t=t':\ety{\tau}{\eff}$.
\iffull 
In what follows we define semantic counterparts to the generic
syntactic constructions common to all instantiations, namely
application and abstraction, sequential composition, and recursion
that allow us to define this interpretation of derivations in a
compositional fashion. Having given these semantic counterparts we
then omit the formal definition of the interpretation $\sem{-}$.
\begin{lemma}[Abstraction]
Let $\Gamma,A$ be  semantic types, $T$ a semantic computation. There is a function $\lambda$ so that if $e:S(\Gamma\times A)\rightarrow T$ is a morphism of fibred setoids then $\lambda(e):S(\Gamma)\rightarrow A{\Rightarrow}T$. Moreover, if $e\sim e'$ then $\lambda(e)\sim\lambda(e')$ and if $f\Vdash^{\Gamma\times A\rightarrow T} e$ then $\lambda\eta.\lambda a.f(\eta,a)\Vdash^{\Gamma\rightarrow A{\Rightarrow}T} \lambda(e)$. 
\end{lemma}
\begin{lemma}[Application]
Let $A$ be a semantic type and $T$ be a semantic computation. There  is a
morphism $\textit{app} : S((A{\Rightarrow}T)\times A)\rightarrow T$ and
$\lambda (f,a).f(a)\Vdash^{((A{\Rightarrow}T)\times A)\rightarrow T}
\textit{app}$.  
\end{lemma}
We elide assertions about $\sim$-versions of beta-eta-equality, and semantic 
rendering of subeffecting and the existence of ``value morphisms'' of type $S(A)\rightarrow T_\eff A$ for any semantic type $A$. 

\begin{lemma}[let]
Let $\Gamma,A,B$ be semantic types and $\eff$ an effect. 
There is a function $\textit{let}$ such that if $e_1:S(\Gamma)\rightarrow
T_\eff A$ and $e_2:S(\Gamma\times A)\rightarrow T_\eff B$ are morphisms
then $\textit{let}(e_1,e_2):S(\Gamma)\rightarrow T_\eff B$. Moreover, if
$e_1\sim
e_1'$ and $e_2\sim e_2'$ then $\textit{let}(e_1,e_2)\sim
\textit{let}(e_1',e_2')$. 
Finally, if $f_1\Vdash^{\Gamma\rightarrow T_\eff A} e_1$ and
$f_2\Vdash^{\Gamma\times A\rightarrow T_\eff B} e_2$ then $\lambda \eta.
\lambda \heap.\textit{let } (\heap_1,\vval){=}f_1(\eta)(\heap)\textit{ in }
f_2(\eta,\vval)(\heap_1)\Vdash^{\Gamma\rightarrow T_\eff
A}\textit{let}(e_1,e_2)$. 
\end{lemma}

\begin{proof}
Consider the following definition for the first component of the morphism
$\textit{let}(e_1,e_2)$ which is only defined when ${e_1}$ and
${e_2}$ are defined. The type of this component is $\sem{\Gamma}\w \to
T_\eff \sem{B}\w$. Hence, assume a world $\w$, and a context $\gamma \in
\sem{\Gamma}\w$, then one returns an object $(\cval_0, \cval_1) \in
T_\eff \sem{B}\w$. The first component $\cval_0$ is:
\(
 \Pi \w. \Pi \gamma \in \sem{\Gamma}\w . \Pi \sigma \in \Astores \w.
{e_2}(\w_1)(\gamma,\val_1)\sigma_1
\)
where ${e_1}(\w)(\gamma)\sigma = (\w_1, u_1, \sigma_1, \val_1)$.

For the second component, $\cval_1$, assume a relation $R \in
\Rscr(\eff)$, and two abstract heaps $\sigma, \sigma' \in \Astores \w$
such that $(\sigma, \sigma') \in R\w$. From ${e_1}$ we get a 
proof $\w_1 \sq{x_1}{v_1}{x_1'}{v_1'} \w_1'$, where
${e_1}(\w)(\gamma)\sigma
=
(\w_1, u_1, \sigma_1, \val_1)$ and ${e_1}(\w)(\gamma)\sigma' =
(\w_1', u_1', \sigma_1', \val_1')$, such that $(\sigma_1.v_1,
\sigma_1'.v_1') \in R$ and $p : \sem{A}\overline{\w_1}(x_1.\val_1,
x_1'.\val_1')$. Applying ${e_2}$ on $\sigma_1.v_1$ and
$\sigma_1'.v_1'$ we get a proof $\q_2 \sq{y_2}{v_2}{y_2'}{v_2'} \q_2'$,
such that $(\tilde{\sigma_2}.v_2, \tilde{\sigma_2}'.v_2') \in R$. However,
we need to show that the heaps obtained from applying ${e_2}$ on $\sigma_1$
and $\sigma_1'$ (using the correct world and context), namely $\sigma_2$
and $\sigma_2'$, are related. For this we rely on
the morphism $(e_2)_1$. In particular,
we use $(e_2)_1$ on the pullback $\w_1 \sq{1}{x_1}{x_1}{1} \und{\w_1}$ and
obtain a pullback $\w_2 \sq{}{}{}{} \q_2$ such that $\sigma_2$ and
$\tilde{\sigma_2}$
are equal in its low point. Similarly, applying $(e_2)_1$ on the
pullback $\und{\w_1} \sq{x_1'}{1}{1}{x_1'} \w_1'$, we get a pullback 
$\q_2' \sq{}{}{}{} \w_2'$, where $\tilde{\sigma_2}'$ is equal to
$\sigma_2'$ in its pullback. Using Lemma~\ref{preo}, we compose the
pullbacks $\w_2 \sq{}{}{}{} \q_2$, $\q_2 \sq{}{}{}{} \q_2'$ and $\q_2'
\sq{}{}{}{} \w_2'$, obtaining a common pullback $\und{\q}$, where 
$\sigma_2$ and $\sigma_2'$ when taken to $\und{\q}$ are in $R$. 

The morphism ${\textit{let}(e_1,e_2)\sim
\textit{let}(e_1',e_2')}$ can be then defined when ${e_1 \sim e_1'}$
and ${e_2 \sim e_2'}$ are defined.
Assume a pullback $\w \sq{1}{1}{1}{1}\w$ and an abstract heap $\sigma \in
\Astores \w$ and a context $\gamma \in \sem{\Gamma}\w$. Using the
morphism between $e_1$ and $e_1'$ on these objects, we obtain a
pullback $\w_1 \sq{x_1}{v_1}{x_1'}{v_1'} \w_1'$, $p_1 \in
\sem{A}\overline{\w_1}(x_1.\val_1, x_1'.\val_1')$ and $q_1 : \sigma_1.v_1
\sim \sigma_1'.v_1'$, where ${e_1}(\w)(\gamma)\sigma = (\w_1, u_1,
\sigma_1,
\val_1)$ and ${e_1'}(\w)(\gamma)\sigma = (\w_1', u_1', \sigma_1',
\val_1')$. 
From the pullback preserving property of computations and $p_1$,
there is a common value $\und{\val} \in \sem{A}\und{\w_1}$ and context
$\und{\gamma} \in \sem{\Gamma}\und{\w_1}$ which are equal, respectively, to
$\val_1$ and $\val_1'$, and $\gamma$ and $\gamma'$ (when taken to the
correct world). We then construct a
proof $\sem{\Gamma \times A} \und{\w_1}$. We now apply twice the
morphism between $e_2$ and $e_2'$ once in the pullback $\w_1 \sq{}{}{}{}
\und{\w_1}$ and
another on the
pullback $\und{\w_1} \sq{}{}{}{} \w_1'$, obtaining two pullbacks
$\w_2 \sq{}{}{}{} \q_2$ and $\q_2 \sq{}{}{}{} \w_2'$. From
Lemma~\ref{preo}, we can compose them where the resulting values and
heaps are equal.
\end{proof}
\begin{lemma}[fix]\label{fixlse}
  Let $\Gamma,D$ be semantic types so that for each $\w$ the predomain
  $D\w$ is a domain with least element $\bot\w$ such that
  $(\bot\w,\bot\w,r(\bot\w))\leq (d,d',p)$ holds for every proof $p\in
  D(d,d')$ and such that $x.\bot_\w=\bot_{\w'}$ holds for every
  $x:\w\rightarrow\w'$.\footnote{For example $D=A {\Rightarrow}T_\eff B$
for semantic types $A,B$.}
\begin{compactenum}
\item[i] There then exists a function
  $\textit{fix}$ so that whenever $e:\Gamma\times D\rightarrow D$ then
  $\textit{fix}(e):\Gamma\rightarrow 
  D$
\item[ii] If $e\sim e'$ then $\textit{fix}(e)\sim
  \textit{fix}(e')$. Furthermore, the fixpoint and unrolling equations
  from Lemma~\ref{fixlse} hold. 
\item[iii] Finally, if $f\Vdash^{\Gamma\times D\rightarrow D}e$ then
$f^\dagger\Vdash \textit{fix}(e)$.    
\end{compactenum}
\end{lemma}
\begin{proof}
{\bf MH: dagger notation and accompanying lemma have disappeared} 
For every $\w$ we have $e_0\w:\Gamma\w\times D\w
\rightarrow D\w$. We can thus form $\textit{fix}(e)_0\w:=(e_0\w)^\dagger :
\Gamma \w\rightarrow D\w$.  It remains to define $\textit{fix}(e)_1$. To do that, we
recall that we 
have an ascending chain of elements 
$\textit{fix}^n(e)_0\w(\gamma)\in D\w$ given by 
$\textit{fix}^0(e)_0\w(\gamma)=\bot_\w$ and 
$\textit{fix}^{n+1}(e)_0\w(\gamma)=e_0\w(\gamma,\textit{fix}
^n(e)_0\w(\gamma))$ and have
$\textit{fix}(e)_0\w(\gamma)=\sup_n\textit{fix}^n(e)_0\w\gamma$. 
Now suppose that $\gamma\in\Gamma\w$ and $x:\w\rightarrow \w'$ and
$\gamma'\in\Gamma\w'$ and $p\in\Gamma\w'(x.\gamma,\gamma')$. Write
$d_n=\textit{fix}^n_0\w(\gamma)$ and $d_n'=\textit{fix}^n_0\w'(\gamma')$.
Inductively, we get proofs $p_n\in
D\w'(x.d_n,d_n')$ where $p_0=r(\bot_{\w'})$ (note that
$x.\bot_\w=\bot_{\w'}$) and $p_{n+1}=e_1(p,p_n)$. Since
$(x.\bot_\w,\bot_{\w'},r(\bot_{\w'}))\leq (x.d_1,d_1',p_1)$ we obtain by
monotonicity of $e_1$ and induction that $(x.d_n,d_n',p_n)$ is an ascending
chain with supremum $(x.\sup_n d_n,\sup_n d_n',q)$ for some proof $q$
which we take as $\textit{fix}(e)_1(p)$. Note that the passage from $p$ to
$q$ is continuous. 
\end{proof}
\else 
In essence, one has to provide a semantic counterpart for every
syntactic concept, e.g. let, fix, etc. Details are in the 
\ifapp
appendix.
\else
long
version.
\fi
\fi












\subsection{Observational equivalence}
Let $\textrm{Int}$ stand for the constant functor that returns the
discrete setoid on the set $\mathbb{Z}$ of integers. We define
$v\Vdash^{\mathrm{Int}}_\w i\iff v=\intt i$. We also assume that there
is some initial store and abstract store $\heap_0, \sigma_0$ and a
world $\w_0$ such that $\heap_0\Vdash_{\w_0}\sigma_0$. For instance,
$\w_0$ can be the empty world with no locations and accordingly
$\heap_0$ the initial store at startup. 

\begin{definition}
  Let $(A,\Vdash^A)$ be a semantic type.  We define an
  \emph{observation of type $A$} as a morphism $o:A\rightarrow T_\eff
  \mathrm{Int}$ for some $\eff$ and a function $f$ so that
  $f\Vdash^{A\rightarrow T_\eff\mathrm{Int}}o$.

  Two values $v,v'$ are \emph{observationally equivalent at type $A$}
  if for all observations $f,o$ of type $A$ one has that $f(v)(\heap_0)$ is
  defined iff $f(v')(\heap_0)$ is defined and when
  $f(v)(\heap_0)=(\heap_1,v_1)$ and $f(v')(\heap_0)=(\heap_1',v_1')$
  then $v_1=v_1'$.
\end{definition}
Taking $o=\sem{\vdash f:\tau\effto\eff\inttype}$ immediately yields the following:
\begin{proposition}
If $v,v'$ are observationally equivalent at type $\sem{\tau}$ and $f$ is a
term such that $\vdash f:\tau\effto\eff\inttype$ then $\sem{f}(v)(\heap_0)$
is
  defined iff $\sem{f}(v')(\heap_0)$ is defined and when
  $\sem{f}(v)(\heap_0)=(\heap_1,v_1)$ and
$\sem{f}(v')(\heap_0)=(\heap_1',v_1')$
  then $v_1=v_1'$.
\end{proposition}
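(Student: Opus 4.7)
The plan is to extract an observation of type $\sem{\tau}$ from the semantic interpretation of $f$ itself and then apply the hypothesis of observational equivalence at that observation. By the fundamental theorem I have a morphism $\sem{\vdash f : \tau \effto\eff \inttype}$ realized by $\sem{f}$. Since $f$ is a closed term of function type, the untyped semantics forces $\sem{f} = \funn{g}$ for some continuous $g : \Values \to \Comps$, and the interpretation packages this $g$ together with an element $\phi$ of $(\sem{\tau} \Rightarrow T_\eff \mathrm{Int})\,\w_0$ such that $g \Vdash^{\sem{\tau} \Rightarrow T_\eff \mathrm{Int}}_{\w_0} \phi$.

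Next I would uncurry this function-space element into a morphism $o : S(\sem{\tau}) \to T_\eff \mathrm{Int}$ of fibred setoids. The zero-component of $o$ at world $\w$ sends $a \in \sem{\tau}\,\w$ to $\phi_0(i, a)$ for a chosen inclusion $i : \w_0 \to \w$; the proof-component is read off from $\phi_1$ applied to identity world morphisms. The realization clause of the function space $\Vdash^{A \Rightarrow T}$ is precisely the assertion that $g \Vdash^{\sem{\tau} \to T_\eff \mathrm{Int}} o$, so $(o, g)$ is an observation of type $\sem{\tau}$ in the sense of the definition immediately preceding the proposition.

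Applying observational equivalence of $v$ and $v'$ to this particular observation yields that $g(v)(\heap_0)$ and $g(v')(\heap_0)$ are simultaneously defined and, when defined, agree in their integer components. By the semantic clause for application to a $\funn{-}$-value one has $\sem{f}(v)(\heap_0) = g(v)(\heap_0)$ and likewise for $v'$, which is exactly the conclusion.

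The only subtle step I anticipate is the uncurrying: one must verify that the assignment $a \mapsto \phi_0(i,a)$ is independent up to $\sim$ of the chosen inclusion $i : \w_0 \to \w$, and that it assembles into a genuine morphism of fibred setoids, i.e. that its proof-component is compatible with composition of pullback squares. Both facts are built into the definition of the function-space setoid $A \Rightarrow T$ in Section~\ref{sec:logical-relations}, so the verification is routine unpacking rather than a genuine obstacle.
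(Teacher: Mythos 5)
Your overall strategy is the paper's own: the paper proves this proposition in one line by taking the observation to be $\sem{\vdash f:\tau\effto\eff\inttype}$ (with realizer $\semV{f}$, supplied by the fundamental theorem) and instantiating the definition of observational equivalence. The gap is in your construction of the observation. You build $o$ from the single element $\phi\in(\sem{\tau}\Rightarrow T_\eff\mathrm{Int})\w_0$ by choosing, for each world $\w$, an inclusion $i:\w_0\rightarrow\w$ and setting $o_0(\w)(a)=\phi_0(i,a)$, and you claim the needed coherence is ``built into the definition of the function-space setoid.'' It is not. First, a morphism $\w_0\rightarrow\w$ need not exist for every $\w$ (nothing forces $\w_0$ to be initial in $\world$; the paper only says it \emph{may} be taken to be the empty world), so $o_0$ is not even everywhere defined, and the realization condition $g\Vdash^{\sem{\tau}\rightarrow T_\eff\mathrm{Int}}o$ quantifies over \emph{all} worlds. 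Second, and more seriously, the proof component $\phi_1$ of a function-space element only relates applications along a factorization: it yields proofs in $T\sq{v_1}{1}{1}{v_1}(\phi_0(v,a),\phi_0(v_1v,a'))$ for composable $v,v_1$. It says nothing about two \emph{parallel} morphisms $i,i':\w_0\rightarrow\w$, so $\phi_0(i,a)$ and $\phi_0(i',a)$ need not be provably equal, and for a general pullback square $\w\sq{x}{u}{x'}{u'}\w'$ you can get from $\phi_1$ only a proof relating $\phi_0(i_\w,a)$ to $\phi_0(x i_\w,a')$, not to $\phi_0(i_{\w'},a')$ with your independently chosen $i_{\w'}$. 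So the ``routine unpacking'' step is exactly where the argument breaks.

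The repair is to not restrict the interpretation to $\w_0$ at all: the fundamental theorem gives $\sem{\vdash f:\tau\effto\eff\inttype}$ as a morphism, i.e.\ a function-space element at \emph{every} world compatibly, realized by $\semV{f}=\funn{g}$. Uncurrying this categorically — compose with the application morphism, or equivalently take $o=\sem{x{:}\tau\vdash f\,x:\ety{\inttype}{\eff}}$ — yields a genuine morphism $o:S(\sem{\tau})\rightarrow T_\eff\mathrm{Int}$ defined at all worlds, whose proof components come from the morphism structure of the interpretation rather than from a single $\phi_1$, and which is realized by $g$ via the function-space realization clause at the identity inclusion. With that observation in hand, your final step (apply the hypothesis of observational equivalence and unwind $\sem{f}(v)(\heap_0)=g(v)(\heap_0)$) is correct and is precisely the paper's argument.
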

\begin{theorem}[Observational equivalence]
\label{thm:obs-equivalence}\label{obseq}
If $(A,\Vdash^A)$ is a semantic type and $v\Vdash^A_{\w_0}e$ and
$v'\Vdash^A_{\w_0}e'$ with $e\sim e'$ in $A_{\w_0}$ then $v$ and  $v'$ are
observationally equivalent at type $A$. 
\end{theorem}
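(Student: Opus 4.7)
The plan is to unfold the definition of an observation, transport the given proof $p : e \sim e'$ through the morphism component of any observation, and exploit that $\mathrm{Int}$ is a constant functor on a discrete setoid in order to collapse proof-relevant equality to ordinary integer equality.

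First I would fix an observation $(o,f)$, so that $o : S(A) \to T_\eff \mathrm{Int}$ is a morphism of fibred setoids for some effect $\eff$ and $f \Vdash^{A \to T_\eff \mathrm{Int}} o$. Set $c := o(e)$ and $c' := o(e')$ in $(T_\eff \mathrm{Int})\w_0$. The identity pullback $r(\w_0) = \w_0 \sq{1}{1}{1}{1} \w_0$ has apex $\w_0$, so by the construction of $S(A)$ from Section~\ref{fibse} the proof set $S(A)\,r(\w_0)(e,e')$ is just $A\w_0(e,e')$, and the given proof $p$ transports under the proof-action of $o$ to some $q \in (T_\eff \mathrm{Int})\,r(\w_0)(c, c')$.

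Next I would instantiate $q$ at $\sigma_0, \sigma_0 \in \Astores \w_0$ together with the reflexivity proof $r(\sigma_0)$. By Definition~\ref{teffde} one of two cases occurs: either (i) $q$ is undefined here, in which case both $c_0(\sigma_0)$ and $c'_0(\sigma_0)$ are undefined, or (ii) it is defined and returns some $(\sq{x_1}{v_1}{x_1'}{v_1'}, s)$ with $c_0(\sigma_0) = (\w_1, u, \sigma_1, \val)$, $c'_0(\sigma_0) = (\w_1', u', \sigma_1', \val')$, and $s \in \mathrm{Int}\overline{\w_1}(x_1.\val, x_1'.\val')$. In case (i) the hypotheses $f(v) \Vdash^{T_\eff \mathrm{Int}}_{\w_0} c$ and $f(v') \Vdash^{T_\eff \mathrm{Int}}_{\w_0} c'$, together with $\heap_0 \Vdash_{\w_0} \sigma_0$, immediately yield that $f(v)(\heap_0)$ and $f(v')(\heap_0)$ are both undefined, satisfying the observational equivalence clause vacuously. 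In case (ii), since $\mathrm{Int}$ is the constant functor on the discrete setoid on $\mathbb{Z}$, the action of any morphism is the identity and the proof set $\mathrm{Int}\overline{\w_1}(\val,\val')$ is nonempty only when $\val = \val' =: i$; hence the mere existence of $s$ forces numerical equality of the two semantic return values. Realization then gives $f(v)(\heap_0) = (\heap_1, \intt{i})$ and $f(v')(\heap_0) = (\heap_1', \intt{i})$, whose value components coincide as required.

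The main obstacle I expect is purely bookkeeping: tracking the partial-function presentation of proofs in $T_\eff$ and ensuring that the definedness dichotomy at the proof level $q$ really does mirror the realizer's dichotomy at $\heap_0$, via the two realization hypotheses applied in parallel to the single abstract argument $\sigma_0$. The conceptual content — that proof-relevant equality at type $A$ is propagated through observations and then collapses against a discrete target setoid — is otherwise immediate, and no use of the more intricate machinery (pullback-preservation, setoid fixed points, or the independence of abstract locations) is needed at this point.
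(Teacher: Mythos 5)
Your proposal is correct and follows essentially the same route as the paper's proof: transport the proof $e\sim e'$ through the observation morphism to obtain a proof $o(e)\sim_{\sq{1}{1}{1}{1}}o(e')$ in $T_\eff\mathrm{Int}$, apply it to $(\sigma_0,\sigma_0,r(\sigma_0))$, use the realization hypotheses to match definedness and return values of $f(v)(\heap_0)$ and $f(v')(\heap_0)$, and collapse the resulting proof against the constant discrete setoid $\mathrm{Int}$ to get equality of the integers.
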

\begin{proof}
  We have $f(v)\Vdash^{T_\eff \mathrm{Int}}_{\w_0} o(e)$ and $f(v')\Vdash^{T_\eff \mathrm{Int}}_{\w_0} o(e')$ and also $\mu: 
o(e)\sim_{\sq{1}{1}{1}{1}} o(e')$ in $T_\eff\mathrm{Int}$ for some $\mu$ as in Definition~\ref{teffde}.

The application $\mu$ to $\sigma_0,\sigma_0,r(\sigma_0)$ either is
undefined in which case $o(e)(\sigma_0)$ and $o(e')(\sigma_0)$ and
$f(v)(\heap_0)$ and $f(v')(\heap_0)$ are all undefined, the latter by
the definition of $\Vdash^{T_\eff \mathrm{Int}}$. Otherwise, we get
$f(v)(\heap_0)=(\heap_1,v_1)$ and $f(v')(\heap_0)=(\heap_1',v_1')$ and
$o(e)(\sigma_0)=(\sigma_1,i_1)$ and $o(e')(\sigma_0)=(\sigma_1',i_1')$
where, by definition of realization in $T_\eff\mathrm{Int}$ and
$\mathrm{Int}$, we have $v_1=\intt{i_1}$ and $v_2=\intt{i_2}$. Now,
$\mu(\sigma_0,\sigma_0,r(\sigma_0))$ returns a pullback
$(\sq{x_1}{v_1}{x_1'}{v_1'},q)$ such that, in particular, $x_1.i_1\sim
x_2.i_2$, whence $i_1=i_2$ since $\mathrm{Int}$ is constant and then
$v_1=v_2$ as required.
\end{proof}

\section{Applications}
\label{sec:application}
In what follows we use our semantics to establish a number of
effect-dependent semantic equalities, hence program equivalences in
the sense of observational equivalences. We also give some
semantically justified typings of concretely given functions, in
particular ``set factory'' described in Section~\ref{examples}.
More examples are discussed in the 
\ifapp
appendix.
\else
longer version of this paper.
\fi

\subsection{Sets of locations}
We work in the instantiation ``sets of locations''.  Recall the
example, ``dummy allocation'' from Section~\ref{examples}.  Suppose
that $f\Vdash^{\Gamma\vdash T_\eff A} e$. Now, put
$\textit{dummy}(e)(\w)(\gamma \in \sem{\Gamma}\w) (\heap \in \Astores \w)
= e(\w)(\gamma)(\heap')$, where $\heap'$ is the heap
obtained by adding a dummy location to \heap. We have
$\textit{dummy}(f)\Vdash^{\Gamma\vdash
  T_\eff A} \textit{dummy}(e)$ since $\Vdash$ is oblivious to
extensions of the store. Therefore, reflexivity also furnishes a proof
of equality. It also means that, semantically, $\textit{dummy}(f)$
does not need to flag the allocation effect $\aEff{}$ since no
semantically visible world extension takes place. 

For the Interleaved Dummy Allocation example, on the other hand, there is
an extra step caused by the proper allocation, which yields a world
extension $\w \to \w_1$ and $\w
\to \w_1'$. In order to show the equivalence, we construct a
proof, i.e., a pull-back square $\w_1\sq{}{}{}{}\w_1'$, where the
allocated concrete locations are identified in its low point. Then the
reasoning is the same as above used for showing the semantic equivalence
of the Dummy example.

This is different in the following example. Define a semantic type $N$
of names by letting $N\w$ be the discrete setoid on the set $\w$ and
$Nu(\cloc)=u(\cloc)$ and $v\Vdash_\w^N\cloc\iff v=\reff\cloc$.
\iffull
\(
 \begin{array}{ll}
f ~=~\sem{\myref{0}}\\
g ~=~ \sem{\letin{x}{\myref{0}}{\letin{y}{\myref{0}}{
(x,y)}}}\\
h ~=~ \sem{\letin{x}{\myref{0}}{\letin{y}{\myref{0}}{
(y,x)}}}
 \end{array}
\)

\noindent
We now define semantic counterparts $\textsf{f}:S(1)\rightarrow
T_{\aEff{}}N$, 
$\textsf{g}, \textsf{h}:S(1)\rightarrow T_{\aEff{}}N$. We omit the dummy
arguments of type $1$.
\else
Moreover, 
$f =\sem{\myref{0}}, 
g = \sem{\letin{x}{\myref{0}}{\letin{y}{\myref{0}}{
(x,y)}}}$, and $ 
h = \sem{\letin{x}{\myref{0}}{\letin{y}{\myref{0}}{
(y,x)}}}$.
We now define semantic counterparts $\textsf{f}:S(1)\rightarrow
T_{\aEff{}}N$, 
$\textsf{g}, \textsf{h}:S(1)\rightarrow T_{\aEff{}}N$, where
\fi
\iffull
\[
 \begin{array}{lcl}
\textsf{f}_0\w(\sigma) & = & (\w_1,i_1,\sigma_1,\cloc_1), \\
\textsf{g}_0\w(\sigma) & = & (\w_2,i_2i_1,\sigma_2,(\cloc_1,\cloc_2))\\
\textsf{h}_0\w(\sigma) & = & (\w_2,i_2i_1,\sigma_2,(\cloc_2,\cloc_1))  
 \end{array}
\]
Here and in what follows it is assumed that
\else

\(
\hspace{-5mm}
 \begin{array}{lcllcllcl}
\textsf{f}_0\w(\sigma) & = & (\w_1,i_1,\sigma_1,\cloc_1),~ 
\textsf{g}_0\w(\sigma) & = & (\w_2,i_2i_1,\sigma_2,(\cloc_1,\cloc_2)),
\textrm{ and }
\textsf{h}_0\w(\sigma) & = & (\w_2,i_2i_1,\sigma_2,(\cloc_2,\cloc_1))  
 \end{array}
\)

\noindent
Here and in what follows it is assumed that
\fi
$\textit{new}(\sigma)=(\cloc_1,\sigma_1)$ and
$\textit{new}(\sigma_1)=(\cloc_2,\sigma_2)$ and
$\w_1=\w\cup\{\cloc_1\}$ and $\w_2=\w_1\cup\{\cloc_2\}$. Recall that
$\Astores\w\subseteq\Stores$. Finally, $i_1:\w\rightarrow\w_1$ and
$i_2:\w_1\rightarrow\w_2$ stand for the obvious inclusions.  We use
analogous definitions for the primed variants.

In order to define $\textsf{f}_{0.5}$ we start with
$u:\w\rightarrow\w'$ and $\sigma\in\Astores\w,\sigma'\in\Astores\w'$,
$R\in\Rscr(\aEff{})$ such that $(\sigma,u.\sigma')\in R\w$. Define
$u':\w_1\rightarrow\w_1'$ so that $u'i_1=i_1'u$, that is
$u'(\cloc\in\w)=u(\loc)$, $u'(\cloc_1)=\cloc_1'$. We now return the
pullback square $\w_1\sq{u'}{1}{1}{u'}\w_1'$ with apex $\w_1'$ and low
point $\w_1$ and the trivial proof that $u'.\cloc_1=\cloc_1'$. This 
settles the definition of $\textsf{f}_{0.5}$, since $R\w_1$ is total since
$R\in\Rscr(\aEff{})$. Notice though, that we cannot avoid the allocation
effect here.

The functions $g_{0.5}$ and $h_{0.5}$ are defined analogously.

We now construct a proof that $\textsf{g}\sim \textsf{h}$, recall that only
$\textsf{g}_0$ and
$\textsf{h}_0$ are needed for this. Given $\w,\sigma$ and the notation from
above this proof amounts to a pullback square
$\w_2\sq{x}{v}{x'}{v'}\w_2'$ such that $xi_2i_1=x'i_2'i_1'u$ and
$x.(\cloc_1,\cloc_2)=x'.(\cloc_2,\cloc_1)$ and $\sigma_2.v \sim
\sigma_2'.v'$. Note that, accidentally, the final abstract stores of
both computations are the same, namely, $\sigma_2$. Now let $f$ be the
bijection that swaps $\cloc_1,\cloc_2$ and fixes everything else. We
then put $\sq{x}{v}{x'}{v'} := \sq{1}{f}{f}{1}$.  Now, obviously
$(\cloc_1,\cloc_2)=f.(\cloc_2,\cloc_1)$ and $\sim$-equality of
abstract stores is trivial by definition.

\subsection{Heap PERs}
In this section we generalize our earlier collection of effect-dependent program
equivalences 
\cite{DBLP:conf/ppdp/BentonKBH07} to the abstract locations of the
Heap PERs instantiation. We first show how the
set factory indeed has the announced effect typings and thus can
participate in effect-dependent equivalences.

\paragraph{Set factory}
Let $\w$ be a world and $\sigma\in\Astores\w$. Suppose that $\sigma_1$
arises from $\sigma$ by allocating a fresh set data structure, e.g., a
linked list, with entry point(s) $E$. Let $\loc_1$ be the abstract
location describing this fresh data structure, i.e.,
$(\heap,\heap')\in\loc_1^R\iff$ the data structures starting from $E$
in $\heap,\heap'$ are well-formed, denote the same set, and do not
overlap with the footprints of all the abstract locations in $\w$. The
footprint $\loc_1^F$ comprises the locations that make up this 
data structure assuming that $(\heap,\heap)\in\loc^R$, otherwise any
value can be chosen. Finally, $\loc^G$ contains idempotent
functions, $\iota$, such that $\iota(\heap) = \heap_1$
and $\heap_1$ agree on all concrete locations from
$\dom{\heap}\supseteq\loc^F(\heap)$ and, moreover,
$\dom{\heap_1}\supseteq \dom\heap$. 

Now for any chosen region $\regid$ we add $\loc_1$ to $\regid$ to
yield a new world $\w_1$. The function $\textit{setfactory}_0\w\sigma$
then returns $\w_1$ and a tuple of semantic functions for reading,
membership, removal of which we only sketch reading here: If
$u:\w_1\rightarrow \w_2$ and $\sigma_1\in\Astores\w_1$ and
$i\in\mathbb{Z}$ then the reading function looks up $i$ in the data
structure starting at the entry points $E$ in $\sigma_1$. (Note that
$\sigma_1\in\Astores\w$ asserts that this data structure exists and is
well-formed.) The returned (abstract) store $\sigma_2$ might not be the
same as $\sigma$ because internal reorganizations, e.g., removal of
duplicates, might have occurred. However, no world extension is needed
and
$\sigma_1\sim\sigma_2$ holds. This together with the fact that
the outcome only depends on the $\loc^R$ equivalence class \iffull(in the
proof-relevant sense)\fi justifies a read-only typing for reading. 

\iffull
In order to model the object-oriented version with a basic type
$\textit{set}_\regid$ for each region $\regid$, we can use the p.p.f.\
$\textit{Set}_\regid\w=\{(\w_0,u,\loc) \mid \loc\in \w_0(r)$ and
$\loc$ is of the above format and $u:\w\rightarrow \w_0$
$\}$. Equality on $\textit{Set}_\regid\w$ is discrete and one puts
$v.(\w_0,u,\loc)=(\w_0,vu,\loc)$. The indirection through $\w_0$ is
necessary so that we recognize that a given abstract location is
indeed a ``set''.
\fi
\paragraph{Memoization} For the simple $\textit{memo}$ functional
from Section~\ref{examples} we produce just as in the previous example a
fresh
abstract location $\loc$ that contains the two newly allocated
concrete locations, say $\cloc_x,\cloc_y$, and on which we impose the
invariant $(\heap,\heap')\in \loc^R \iff $
$\heap(\loc_x),\heap'(\loc_x)$ contain the same integer value, say $i$
and that $\heap(\loc_y),\heap'(\loc_y)$ both contain the integer value
$f(i)$ where $f$ is the pure function to be memoised.\iffull We see in
Lemma~\ref{lemmasix} that if a function is semantically pure (empty
effect) then there is a world- and store-independent function
describing its action.\fi

\iffull
\subsubsection{Effect-dependent equivalences}
\else
\paragraph{Effect-dependent equivalences} Consider the following
notation
\fi

\iffull
Before we prove the soundness of a number of program equivalences, 
we introduce some notation. Given a set of effects $\eff$, we write 
$\rdsin{\eff} = \{\regid \mid \rEff\regid \in \eff\}$, $\wrsin{\eff} =
\{\regid \mid \wEff\regid \in \eff\}$,  $\alsin{\eff} = \{\regid \mid
\aEff\regid \in \eff\}$ and $\regs{\eff} = \rdsin{\eff} \cup \wrsin{\eff}
\cup \alsin{\eff}$. Moreover, the set $\nwrs(\eff) = \regs{\eff} \setminus
\wrsin{\eff}$. We also introduce the following piece of notation for
$\sigma, \sigma' \in \Astores \w$:
\[
\begin{array}{c}
\sigma \sim_{\rdsin{\eff, \w}} \sigma' \iff \forall \loc \in
\w(\rdsin{\eff}). (\sigma, \sigma') \in \loc^R \\
\sigma \sim_{\nwrs(\eff, \w)} \sigma' \iff \forall \loc \in
\w(\nwrs(\eff)). (\sigma, \sigma') \in \loc^R
\end{array}
\]
which specify that the abstract heaps $\sigma$ and $\sigma'$ are
\else 

\(
\begin{array}{c}
\sigma \sim_{\rdsin{\eff, \w}} \sigma' \iff \forall \loc \in
\w(\rdsin{\eff}). (\sigma, \sigma') \in \loc^R \\
\sigma \sim_{\nwrs(\eff, \w)} \sigma' \iff \forall \loc \in
\w(\nwrs(\eff)). (\sigma, \sigma') \in \loc^R
\end{array}
\)

\noindent
which specify that the abstract heaps $\sigma$ and $\sigma'$ are
\fi
equivalent on all the abstract locations $\loc$ in regions associated,
respectively, to read effects and no-writes in $\eff$.  

\begin{lemma}\label{lemmasix}
\label{lem:invariant}
Let $\Gamma \vdash e : \ety{\tau}{\eff}$. For any world $\w \in \world$,
and context $\gamma \in \sem{\Gamma}\w$, whenever $\sigma_0 , \sigma_0' \in
\Astores\w$ such that $\sigma_0
\sim_{\rdsin{\eff, \w}} \sigma_0'$, then $\cval(\sigma_0)$ and
$\cval(\sigma_0')$ where $\cval =\sem{\Gamma \vdash e :
\ety{\tau}{\eff}}\w(\gamma)$ are
equally defined  and if $\cval(\sigma_0)=(\w_1, u, \sigma_1,
\val)$ and 
$\cval(\sigma_0') = (\w_1', u', \sigma_1', \val')$
then there exist (continuously!) a pullback 
 $\w_1 \sq{x}{v}{x'}{v'} \w_1'$ with apex
$\overline{\w}$ and low point $\und{\w}$ and a proof of $x.\val\sim x'.\val'$ such that $xu=x'u'$ and the following is satisfied:
\begin{compactenum}
 \item for all $\loc \in \w$, we have either: $(\sigma_0,
\sigma_1.u) \in \loc^R$
and $(\sigma_0', \sigma_1'.u') \in \loc^R$ (remain
equivalent) or
$(\sigma_1.u, \sigma_1'.u') \in \loc^R$ (equally modified);
 \item if $\loc \in \w(\nwrs(\eff))$, then $(\sigma_0,
\sigma_1.u) \in \loc^R$ and $(\sigma_0',
\sigma_1'.u') \in \loc^R$.

\item There exists a morphism $\cval' \in \sem{\Gamma}\to T_\eff
\sem{\tau}$, such that  $\cval' \sim \cval$ and if
$\cval'(\w)(\gamma)\sigma_0 = (\w_\star, u_\star, \sigma_\star,
\val_\star)$, then for all regions $\regid \notin \alsin{\eff}$,
$\w_\star(\regid) = \w(\regid)$. 
\end{compactenum}
\end{lemma}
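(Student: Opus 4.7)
The plan is to extract each of the three claims by applying Definition~\ref{def:semantic-computation} to suitably chosen relations in $\Rscr(\eff)$. Recall from there that whenever $R\in\Rscr(\eff)$ and $(\sigma,\sigma')\in R\w$ are both in the domain of $\cval_0$, the morphism $\cval_1(R,\sigma,\sigma')$ returns a pullback $\w_1\sq{x}{v}{x'}{v'}\w_1'$ with apex $\overline{\w}$ and low point $\underline{\w}$, a proof $p\in\sem{\tau}\overline{\w}(x.\val,x'.\val')$, and the invariant $(\sigma_1.v,\sigma_1'.v')\in R\underline{\w}$. Each claim will follow by feeding in an $R$ whose preservation under $\eff$ directly encodes the property.

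To produce the pullback and the value-equality proof, first take
\[
R_0\w'\;=\;\{(\sigma,\sigma')\in\Astores\w'\times\Astores\w' \mid \sigma\sim_{\rdsin{\eff,\w'}}\sigma'\}.
\]
Read-preservation is immediate from the definition of $\Rscr(\rEff{\regid})$; write-preservation for $\iota\in\loc^G$ with $\loc\in\w'(\wrsin{\eff})$ is the first clause of Definition~\ref{def:abstract-location} for $\loc$ itself and mutual independence for every other $\loc_2\in\w'$; allocation-preservation is built into $\Rscr(\aEff{\regid})$. Hence $R_0\in\Rscr(\eff)$, and since $(\sigma_0,\sigma_0')\in R_0\w$ by hypothesis, $\cval_1(R_0,\sigma_0,\sigma_0')$ delivers the required pullback and proof.

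For claim~1, fix $\loc\in\w$ with region $\regid_\loc$. If $\regid_\loc\notin\wrsin{\eff}$, use
\[
R_\loc^{\mathrm{inv}}\w'\;=\;\{(\sigma,\sigma')\mid(\sigma_0,\sigma.v)\in\loc^R\wedge(\sigma_0',\sigma'.v)\in\loc^R\},
\]
where $v:\w\hookrightarrow\w'$ is the ambient inclusion; independence of $\loc$ from every other abstract location ensures every write in $\wrsin{\eff}$ preserves $\loc^R$, while reads and allocations trivially preserve the relation. If instead $\regid_\loc\in\wrsin{\eff}$, use $R_\loc^{\mathrm{eq}}\w'=\{(\sigma,\sigma')\mid(\sigma,\sigma')\in\loc^R\}$: closure under guarantee moves on $\loc$ is the first clause of Definition~\ref{def:abstract-location} applied to $\loc$, and closure under writes on other locations is again independence. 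Both relations contain $(\sigma_0,\sigma_0')$ at $\w$, and preservation yields the two disjuncts. Claim~2 is the specialisation of the ``remain equivalent'' disjunct, since every $\loc\in\w(\nwrs(\eff))$ has $\regid_\loc\notin\wrsin{\eff}$.

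Claim~3 is the main obstacle and requires a masking construction: the raw $\cval_0$ may in principle return an inclusion $u:\w\to\w_\star$ that introduces fresh abstract locations in regions outside $\alsin{\eff}$. The plan is to build $\cval'\sim\cval$ by post-composing $\cval_0$ with a world restriction that excises every such stray location, retaining it only as internal state inside $\cval'$. By claim~1 such locations are fully constrained by the invariants already imposed on any $R\in\Rscr(\eff)$, so deleting them from the reported world disturbs neither the visible value $\val_\star$ nor the pullback proofs produced by $\cval_1$; the resulting $\cval,\cval'$ will be identified by a morphism witness $\mu$ of type $T_\eff\sem{\tau}\,r(\w)(\cval,\cval')$ built uniformly from the projections. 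The delicate point is checking that the excision is continuous and compatible with $\cval_1$ on every $R\in\Rscr(\eff)$, which hinges on the observation that relations in $\Rscr(\eff)$ are only required to tolerate extensions in regions in $\alsin{\eff}$, so hiding extensions in other regions is semantically invisible.
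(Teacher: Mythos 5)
Your overall strategy---instantiating $\cval_1$ with tailor-made relations in $\Rscr(\eff)$---is indeed the paper's strategy, and your $R_0$ is a legitimate member of $\Rscr(\eff)$, but the relations you use for claim~1 cannot be fed to $\cval_1$, for two reasons. First, membership in $\Rscr(\rEff{\regid})$ is a \emph{containment} condition, not a closure condition: every pair in the relation must already be $\loc^R$-related at every $\loc$ in region $\regid$. So ``reads and allocations trivially preserve the relation'' is a misreading; neither $R_\loc^{\mathrm{inv}}$ nor $R_\loc^{\mathrm{eq}}$ lies in $\Rscr(\eff)$ as soon as $\eff$ contains a read effect, unless you conjoin $\sigma\sim_{\rdsin{\eff}}\sigma'$ (which is exactly what the paper's relations do). Second, and not repaired by that conjunct: for $\loc$ whose region lies in $\wrsin{\eff}\setminus\rdsin{\eff}$ the hypothesis $\sigma_0\sim_{\rdsin{\eff,\w}}\sigma_0'$ says nothing at $\loc$, so your assertion that $R_\loc^{\mathrm{eq}}$ ``contains $(\sigma_0,\sigma_0')$ at $\w$'' is false in general, and the case $\regid_\loc\in\wrsin{\eff}$ of claim~1 is left unproved --- which is precisely the interesting case. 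The paper avoids the case split altogether: it uses a \emph{single} relation, parameterized by the fixed initial stores and declared trivial at worlds not above $\w$, whose clause at each $\loc\in\w$ is the disjunction of claim~1 itself, i.e.\ ``$(\sigma.u,\sigma_0)\in\loc^R\wedge(\sigma'.u,\sigma_0')\in\loc^R$ or $(\sigma.u,\sigma'.u)\in\loc^R$'', conjoined with read-agreement. The initial pair then belongs via the first disjunct by reflexivity, and $R$-membership of the final stores is literally claim~1. Claim~2 is obtained from a further relation forcing agreement with $\sigma_0.u$ on all of $\w(\nwrs(\eff))$ (your case-A relation, once the read conjunct is added, amounts to the same thing, so that part is salvageable); you should also check contravariance, admissibility and $\sim$-closure of whatever relations you use, which the proposal never addresses.

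For claim~3 you only gesture at a masking construction and explicitly defer the ``delicate point'', but that point is the content of the claim. The paper proves it by one more instantiation of $\cval_1$: the relation consisting of $\sim$-equal pairs whose number of abstract locations in each region $\regid\notin\alsin{\eff}$ is bounded by that of $\w$; this relation is in $\Rscr(\eff)$ precisely because $\eff$ contains no allocation effect in those regions, and its preservation (together with the pullback data) yields that $\w_\star(\regid)$ is isomorphic to $\w(\regid)$ for such $\regid$, from which the adjusted morphism $\cval'$ and the witness $\cval'\sim\cval$ are then constructed. Your excision sketch presupposes exactly this fact, and as stated (deleting locations from the reported world while ``retaining them as internal state'') it is not yet a definition of an element of $T_\eff\sem{\tau}$, let alone a continuous one equipped with a proof of $\cval'\sim\cval$; so claim~3 remains open in your proposal.
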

\iffull\begin{proof}
The proof that the values are equal in $\und{\w}$ follows directly from 
the definition of computations and effects.

For the first part,  we use the following relation $R$ defined for all
worlds $\w_1$, such that $u : \w \to \w_1$:

\(
\begin{array}{l}
 \{(\sigma, \sigma') \mid \sigma \sim_{\rdsin{\eff, \w}} \sigma' 
\land \forall \loc \in \w.\\
 \qquad (\sigma.u, \sigma_0) \in \loc^R \land
(\sigma'.u, \sigma_0') \in \loc^R \lor (\sigma.u, \sigma'.u) \in
\loc^R\}
\end{array}
\)

\noindent
Otherwise, for the worlds $\w_2$ not reachable from \w, the relation
$R\w_2$ is the trivial set. Notice that $R \in \Rscr(\eff)$ and it is
contravariant. The claim then 
follows directly.

The proof of the second part follows in a similar fashion, but we use the
following relation:

\(
\{(\sigma, \sigma') \mid \sigma \sim_{\rdsin{\eff, \w}} \sigma' 
\land \sigma \sim_{\nwrs(\eff, \w)} \sigma_0.u \}
\)

\noindent
And we use a similar relation for showing that $\sigma_0'$ and
$\sigma_1'.u'$ agree on the not written locations $\nwrs(\eff, \w)$.

For the third property, first, we show that there is an
isomorphism between $\w(\regid)$ and $\und{\w}(\regid)$ for all regions
$\regid \notin \alsin{\regid}$ by using the following relation:

\(
\{(\sigma, \sigma') \mid \sigma \sim \sigma' \land \forall
\regid \notin \alsin{\eff}. \#_\regid(\sigma), \#_\regid(\sigma')
\leq \#_\regid(\w) \} 
\)

\noindent
where $\#_\regid$ denotes the number of abstract locations coloured with 
$\regid$.
Clearly, $R \in
\Rscr(\eff)$ as $\eff$ does not contain any allocation effects. This gives
us one direction, while the other direction is obtained by the universal
property of pullbacks. Given this property, one can easily construct the 
function $\cval'$.
\end{proof}
\begin{proposition}\label{compr}
(commuting computations) Suppose that:
$
 \Gamma \vdash e_1 : \ety{\tau_1}{\eff_1}$ and
$\Gamma \vdash e_2 :
\ety{\tau_2}{\eff_2}
$,
where  $\rdsin{\eff_1} \cap \wrsin{\eff_2} = \rdsin{\eff_2} \cap
\wrsin{\eff_1}= 
\wrsin{\eff_1} \cap \wrsin{\eff_2} = \emptyset$. Let 
\[
 \begin{array}{l}
  e = \letin{x}{e_1} \letin{y}{e_2} (x, y) \quad \textrm{and} \quad
  e' = \letin{y}{e_2} \letin{x}{e_1} (x, y)
 \end{array}
\]
then $\sem{\Gamma \vdash e : \ety{ \tau_1 \times \tau_2 }{\eff_1\cup
\eff_2}} \sim \sem{\Gamma \vdash e' : \ety{ \tau_1 \times \tau_2
}{\eff_1\cup \eff_2}}$.
\end{proposition}

\begin{proof}
Assume a world \w and a context $\gamma \in \sem{\Gamma}\w$.
Let $\cval_i = \sem{\Gamma \vdash e_i : \ety{\tau_i}{\eff_i}}$ for
$i = 1,2$. 

It is enough to assume a pullback 
$\w \sq{1}{1}{1}{1} \w$, and an abstract heap $\sigma_0 \in
\Astores \w$. 
Assume that these functions are defined as follows:

\(
 \begin{array}{l}
 \cval_1(\w)(\gamma)\sigma_0 = (\w \uplus \w_1, u_1, \sigma_1, \val_1)\\
 \cval_2(\w\uplus\w_1)(u_1.\gamma) \sigma_1 = (\w \uplus \w_1 \uplus
\w_2, u_2, \sigma_2, \val_2)\\
 \cval_2'(\w)(\gamma)\sigma_0 = (\w \uplus \w_1', u_1', \sigma_1',
\val_1')\\
 \cval_1'(\w\uplus\w_2')(u_1'.\gamma) \sigma_1' = (\w \uplus \w_1'
\uplus \w_2', u_2', \sigma_2',\val_2')
 \end{array}
\)

\noindent
One can easily show that when one of these functions is undefined, then 
the corresponding function is also undefined. 

We need to show that there is a proof $\w \uplus \w_1 \uplus \w_2
\sq{x}{v}{x'}{v'} \w \uplus \w_1' \uplus \w_2'$ such that
$p : \sigma_2.v \sim \sigma_2'.v'$ and $p_1 : xu_2.\val_1 \sim x'.\val_2'$
and $p_2 : x.\val_2 \sim x'u_2'.\val_1'$.
Decompose $\w = \w_0 \uplus \q_1 \uplus \q_2$, where
$\w(\wrsin{\eff_i}) \subseteq \q_i$. 
The existence of such decomposition follows from the disjointness of write
effects in $\eff_1$ and $\eff_2$.

From Lemma~\ref{lem:invariant} and from 
the disjointness of reads and writes, it is the case that $\sigma_0$ and
$\sigma_1'$ agree on the locations in $\w_0 \uplus \q_1$.
That is, there is a proof $p : \sigma_0.1 \sim \sigma_1'.x_1$, defined 
using the proof $\w_0 \uplus \q_1 \sq{x_1}{1}{1}{x_1} \w_0
\uplus \q_1 \uplus \w_2'
$, where $x_1 : \w_0 \uplus \q_1 \to \w_0 \uplus \q_1 \uplus \w_2'$.
Applying $(e_1)_1$ to the objects above,
we get the pullback $\w_0 \uplus \q_1 \uplus \w_1 \sq{x_2}{v_2}{x_2'}{v_2'}
\w_0 \uplus \q_1
\uplus
\w_2' \uplus \w_1'$, and  proof $q : x_2.\val_1 \sim
x_2'.\val_2'$. Symmetrically, we obtain the proofs 
$\w_0 \uplus \q_2 \uplus \w_2 \sq{x_3}{v_3}{x_3'}{v_3'} \w_0 \uplus \q_2
\uplus \w_1' \uplus \w_2'$, and  $q' : x_3.\val_2 \sim
x_3'.\val_1'$. Hence, there is also a proof in the larger world $cod(x)$.

To see informally that the final heaps $\sigma_2$ and $\sigma_2'$ are
equal, we use
the following facts obtained using Lemma~\ref{lem:invariant}: $\sigma_2$
and $\sigma_1$ agree on the locations in $\w_0 \uplus \q_1$; moreover, 
$\sigma_2'$ and $\sigma_1$ agree on the locations in $\w_0 \uplus \q_1$; 
hence $\sigma_2$ and $\sigma_2'$ agree on the locations in $\w_0 \uplus
\q_1$. Symmetrically, we can also argue that $\sigma_2$ and $\sigma_2'$
agree on the locations in $\w_0 \uplus \q_2$. Composing these proofs 
(see comment after Lemma~\ref{lem:unique-upto-sim} why this is allowed), 
we get that $\sigma_2$ and $\sigma_2'$ agree on the locations in $\w$. 
Finally, since the locations allocated by one computation are not used by
the other computation, the final heaps are equal at the
apex world.
%
%
\end{proof}

The following propositions are also provable. All propositions are proved
in a similar way as the soundness proof of the commuting case, using 
Lemma~\ref{lem:invariant} when needed. For instance, the soundness proof
of the duplicated computation uses the third case in
Lemma~\ref{lem:invariant}.

\begin{proposition}[dead computation]\label{dead}
Suppose that
$\Gamma \vdash e : \ety{\unittype}{\eff}$,
that \iffull $\wrsin{\eff}=\emptyset$ \else 
$\eff$ contains no writes \fi and that $\sem{\Gamma\vdash
e: \ety{\unittype}{\eff}}\w(\gamma)(\sigma)$ is defined for all
$\w,\gamma\in\sem{\Gamma}\w, \sigma\in\Astores\w$. 
Then if for all worlds \w, all contexts $\gamma \in
\sem{\Gamma}\w$, and abstract heaps $\sigma \in \Astores\w$,
the function $\sem{\Gamma \vdash e}(\w)(\gamma)(\sigma)$ is defined, 
then $\sem{\Gamma \vdash e : \ety{\unittype}{\eff}} \sim \sem{\Gamma
\vdash \unitval : \ety{\unittype}{\eff}}$.
\end{proposition}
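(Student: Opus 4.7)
The goal is to construct an equality proof $\mu$ between the morphisms $\sem{\Gamma\vdash e:\ety{\unittype}{\eff}}$ and $\sem{\Gamma\vdash\unitval:\ety{\unittype}{\eff}}$. By the equality clause of Definition~\ref{teffde}, $\mu$ must, continuously in $\w$, $\gamma\in\sem{\Gamma}\w$ and in $\sigma,\sigma'\in\Astores\w$ with $\sigma\sim\sigma'$, deliver a pullback square between the two output worlds, an apex-level proof that the returned values agree, and a low-point equivalence of the resulting abstract stores. The interpretation of $\unitval$ is trivial: on input $(\w,\gamma,\sigma')$ it returns $(\w,1_\w,\sigma',())$, so no world extension happens on that side.

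For $e$, I would first invoke Lemma~\ref{lem:invariant}(iii) to replace $\cval:=\sem{\Gamma\vdash e:\ety{\unittype}{\eff}}(\w)(\gamma)$ by an $\sim$-equivalent morphism whose world extensions live entirely in the allocation regions of $\eff$, i.e.\ $\cval(\sigma)=(\w_1,u,\sigma_1,v_1)$ with $\w_1\setminus\w\subseteq\w_1(\alsin{\eff})$; definedness is given by hypothesis. The natural candidate output pullback square is $\w_1\sq{1}{u}{u}{1}\w$, with apex $\w_1$ and low point $\w$; this is a minimal pullback and the required compatibility condition collapses to $u=u$. The apex-level value proof is trivial because $\sem{\unittype}$ is the constant singleton setoid, so any two elements are equal.

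The remaining obligation is the low-point heap equivalence $\sigma_1.u\sim\sigma'$ in $\Astores\w$, which via $\sigma\sim\sigma'$ and transitivity reduces to $(\sigma,\sigma_1.u)\in\loc^R$ for every $\loc\in\w$. For an abstract location whose region lies in $\nwrs(\eff)=\regs{\eff}\setminus\wrsin{\eff}$, this is immediate from Lemma~\ref{lem:invariant}(ii), exploiting $\wrsin{\eff}=\emptyset$. The main technical obstacle I anticipate is covering the remaining abstract locations, namely those of $\w$ whose region is not mentioned in $\eff$ at all. To handle these, my plan is to apply Lemma~\ref{lem:invariant} on the diagonal input $(\sigma,\sigma)$ against, for each such region $\regid_0$, the auxiliary relation $R_{\regid_0}\w''=\{(\tau,\tau')\mid \forall\loc\in\w(\regid_0).(\tau,\tau')\in\loc^R\}$ (trivial on worlds unreachable from $\w$). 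Membership $R_{\regid_0}\in\Rscr(\eff)$ follows from the independence conditions of Definition~\ref{def:abstract-location}: for any region $\regid\neq\regid_0$ actually touched by $\eff$, abstract locations in $\regid$ are independent of those in $\regid_0$, so neither reads, guarantee-moves, nor fresh allocations in $\regid$ disturb the $\loc^R$-class of a location in $\regid_0$. The resulting equivalence combines with the previous case to give $(\sigma,\sigma_1.u)\in\loc^R$ at every $\loc\in\w$, completing $\mu$ and hence the semantic equation.
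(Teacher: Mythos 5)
Your overall skeleton is the same as the paper's: choose the output square $\w_1\sq{1}{u}{u}{1}\w$ with apex $\w_1$ and low point $\w$ (the unit side causes no extension), discharge the value obligation trivially because $\sem{\unittype}$ is a constant singleton setoid, and obtain the low-point heap equivalence from the absence of write effects via Lemma~\ref{lem:invariant}; the detour through part (iii) of that lemma is unnecessary but harmless. You also correctly spotted the point the paper's own write-up glosses over: part (ii) of Lemma~\ref{lem:invariant} only covers locations in $\w(\nwrs(\eff))\subseteq\w(\regs{\eff})$, so abstract locations whose region is not mentioned in $\eff$ at all need a separate argument.

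However, your device for those remaining locations does not work. First, relation preservation (the $\cval_1$ clause of Definition~\ref{teffde}) applied to the diagonal pair $(\sigma,\sigma)$ can only ever conclude that the two output heaps — which here are the \emph{same} heap $\sigma_1.u$ — are $R$-related to \emph{each other}; with your symmetric relation $R_{\regid_0}$ this yields $(\sigma_1.u,\sigma_1.u)\in\loc^R$, which already follows from $\sigma_1.u\in\Astores\w$ and says nothing about agreement between the output $\sigma_1.u$ and the input $\sigma$. To relate output to input you must anchor the fixed initial heap in the relation itself, exactly as in the paper's proof of Lemma~\ref{lem:invariant}, e.g.\ $R\w_1=\{(\tau,\tau')\mid \tau\sim_{\rdsin{\eff,\w}}\tau' \wedge (\tau.u,\sigma)\in\loc^R \wedge (\tau'.u,\sigma)\in\loc^R \text{ for all } \loc\in\w \}$ for $u:\w\to\w_1$ (trivial elsewhere); since $\wrsin{\eff}=\emptyset$ this relation lies in $\Rscr(\eff)$ and directly gives $(\sigma,\sigma_1.u)\in\loc^R$ for \emph{every} $\loc\in\w$, untouched regions included. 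Second, your membership claim $R_{\regid_0}\in\Rscr(\eff)$ is itself wrong when $\eff$ contains reads: $R\in\mathcal{R}(\rEff{\regid})$ is a \emph{positive} requirement that every related pair agree on all $\loc\in\w(\regid)$, not a closure condition, so a relation constraining only $\regid_0$-locations fails it; the independence conditions of Definition~\ref{def:abstract-location} are not what is at issue here (they are needed for transitivity of $T_\eff A$, not for this membership). With the anchored relation in place of $R_{\regid_0}$, the rest of your argument goes through and coincides with the paper's intended proof.
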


\begin{proof}
Assume a world \w and a context $\gamma \in \sem{\Gamma}\w$.
Let $\cval = \sem{\Gamma \vdash e : \ety{\tau}{\eff}}$. 
It is enough to assume a pullback 
$\w \sq{1}{1}{1}{1} \w$, and an abstract heap $\sigma_0 \in
\Astores \w$. Let $\cval(\w)(\gamma)\sigma_0  =  (\w, 1, \sigma_1,
\val_1)$. We need to construct a pullback such that $\val_1$ is equivalent
to $\unitval$ in its apex and $\sigma_1$ is equivalent to $\sigma_0$ in
its low point. Consider the pullback $\w_1 \sq{1}{u}{u}{1} \w$. Clearly
$\val_1 = \unitval$, and therefore the values are equivalent in $\w_1$.
Moreover, from the fact that $\wrsin{\eff} =
\emptyset$, $\sigma_1$ and $\sigma_0$ agree on all locations in $\w$.
Hence, $\sigma_1.u \sim \sigma_0$, which finishes the proof. 
\end{proof}

\begin{proposition}[duplicated computation]\label{dupl}
  Suppose that $\Gamma \vdash e:\ety{\tau}{\eff}$ and suppose
that
  $\reads(\eff)\cap\writes(\eff)=\allocs(\eff)=\emptyset$. Thus, $e$
  reads and writes on disjoint portions of the store and makes no
  allocations. The the terms  $e_1$ and $e_2$ below
\[\begin{array}{l}
\letin{x}{e}{(x,x)} ~\textrm{ and }~ \letin{x}{e} \letin{y}{e}{(x,y)}
\end{array}
\]
 are contextually equivalent. That is
formally $\sem{\Gamma \vdash e_1 : \ety{\tau \times \tau}{\eff}} \sim
\sem{\Gamma \vdash e_2 : \ety{\tau \times \tau}{\eff}}$.
\end{proposition}

\begin{proof}
Assume a world \w and a context $\gamma \in \sem{\Gamma}\w$.
Let $\cval = \sem{\Gamma \vdash e : \ety{\tau}{\eff}}$. 
It is enough to assume a pullback 
$\w \sq{1}{1}{1}{1} \w$, and an abstract heap $\sigma_0 \in
\Astores \w$. 
From Lemma~\ref{lemmasix} and since these functions do not allocate, we
can assume that they do not cause any world extension and are therefore
defined as follows:
\[
 \begin{array}{lcl}
 \cval(\w)(\gamma)\sigma_0  =  (\w, 1, \sigma_1, \val_1) &
\textrm{and}&
  \cval(\w)(\gamma)\sigma_1  =  (\w, 1, \sigma_2, \val_2).
 \end{array}
\]
We need to show that the values $\val_1$ and $\val_2$ are equivalent and 
the heaps $\sigma_1$, obtained by applying once $e$, and $\sigma_2$,
obtained by applying twice $e$, are also equal. 

Decompose $\w = \w_0 \uplus \w_r \uplus \w_w $, where $\w_r$ contains all
the regions read by $e$ and $\w_w$ all the regions written by $e$. This is
possible because of the disjointness of of $e$'s read and write effects. 
From Lemma~\ref{lemmasix} and the disjointness of  $e$'s read and write
effects, we have that $\sigma_0$ and $\sigma_1$ agree on the regions read
by $e$, that is, $\sigma_0 \sim_{\rdsin{\eff, \w}} \sigma_1$. Hence, again
from Lemma~\ref{lemmasix}, we have that the values $\val_1$ and $\val_2$
are equal. Moreover, the locations in $\w_w$ are equaly written, while the
locations in $\w_0 \uplus \w_r$ are left unchanged, that is,
$\sigma_1$ and $\sigma_2$ agree on the location in $\w$.
\end{proof}

\begin{proposition}[pure lambda hoist]\label{hoist}
Suppose that $
\Gamma\vdash e: \ety{Z}{\emptyset}$ and 
$\Gamma,x{:}X,y{:}Z\vdash e': \ety{Y}{\eff}$
Let $e_1$ and $e_2$ be respectively $
\lambda x.\letin{y}{e} {e'}$ and
$\letin{y}{e}{\lambda x.e'}$. 
Then $\sem{\Gamma\vdash  e_1 :
\ety{(\valty{X\effto\eff
Y})}{\emptyset}} \sim 
\sem{\Gamma\vdash  e_2 : \ety{(\valty{X\effto\eff
Y})}{\emptyset}}$.
\end{proposition}

\begin{proof}
Assume a world \w and a context $\gamma \in \sem{\Gamma}\w$.
Let $\cval = \sem{\Gamma \vdash e : \ety{\tau}{\eff}}$ and 
$\cval' = \sem{\Gamma, x: X, y: Z \vdash e' : \ety{\tau}{\eff}}$. 
It is enough to assume a pullback 
$\w \sq{1}{1}{1}{1} \w$, and an abstract heap $\sigma_0 \in
\Astores \w$. Since $e$ has no effects, we have no world extension:
\[
 \cval(\w)(\gamma)\sigma_0  =  (\w, 1, \sigma_1', \val_1') 
\]
Moreover, from Lemma~\ref{lemmasix}, $\sigma_1$ and $\sigma_0$ agree on
all locations. We now show that 
\[
\sem{\Gamma\vdash  \lambda x.\letin{y}{e} {e'(x,y)} : (\valty{X\effto\eff
Y})}
\sim
\sem{\Gamma\vdash \lambda x.e'(x,\val_1') : (\valty{X\effto\eff
Y})}
\]
In order to prove this, assume a morphism $v : \w \to \w_1$ and $a \in
\sem{X}\w_1$. We need then to prove that the computations resulting from 
applying $a$ to the functions above are equivalent in the pullback
$\w_1\sq{1}{1}{1}{1}\w_1$. For this, assume an abstract heap $\sigma \in
\Astores \w_1$. Since $e$ has no effect, we have no world extension:
\[
\begin{array}{l}
 \cval(\w_1)(\gamma)\sigma  =  (\w_1, 1, \sigma_1, \val_1) \\
 \cval'(\w_1)(\gamma, a, \val_1 )\sigma_1  =  (\w_2, 1, \sigma_2, \val_2)\\
\cval'(\w_1)(\gamma, a, \val_1' )\sigma  =  (\w_2', 1, \sigma_1', \val_2)
\end{array}
\]
Since $e$ is pure, we have $\val_1 = v.\val_1'$ and from
Lemma~\ref{lemmasix} we have that $\sigma_1$ and $\sigma$ agree on all
locations in $\w_1$ and in particular on locations read by $e'$. Hence, 
again by Lemma~\ref{lemmasix} the pullback proof exists where $\sigma_2$
and 
$\sigma_1'$ are equal in its low point and the resulting values are equal
in its apex.
\end{proof}
\else
We can validate all the effect-dependent program equivalences ``dead,
commuting, duplicated computation'' and ``pure lambda hoist'', as well
as the ``masking rule'' from previous work \cite{DBLP:conf/aplas/BentonKHB06} in our new, more powerful,
setting.
To give an impression of the
formulation of these validations we state the corresponding proposition for
``dead computation'' which is particularly interesting in that it contains
a termination precondition. The proof, and details of the other
equations are in the 
\ifapp
appendix,
\else
long version, 
\fi
which also 
contains a validation of loop unrolling optimisation 
described by Tristan and Leroy \cite{tristan10popl}. 
 \begin{proposition}[dead computation]\label{dead}
Suppose that
$\Gamma \vdash e : \ety{\unittype}{\eff}$,
that $\wrsin{\eff}=\emptyset$ and that $\sem{\Gamma\vdash
e: \ety{\unittype}{\eff}}\w(\gamma)(\sigma)$ is defined for all
$\w,\gamma\in\sem{\Gamma}\w, \sigma\in\Astores\w$. 
Then if for all worlds \w, all contexts $\gamma \in
\sem{\Gamma}\w$, and abstract heaps $\sigma \in \Astores\w$,
the function $\sem{\Gamma \vdash e}(\w)(\gamma)(\sigma)$ is defined, 
then $\sem{\Gamma \vdash e : \ety{\unittype}{\eff}} \sim \sem{\Gamma
\vdash \unitval : \ety{\unittype}{\eff}}$.
\end{proposition}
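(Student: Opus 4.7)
The plan is to exhibit, world by world, an explicit $\sim$-proof in the setoid $(T_\eff \sem{\unittype})\w$ between $\cval := \sem{\Gamma \vdash e : \ety{\unittype}{\eff}}\w(\gamma)$ and the interpretation of $\unitval$, and then argue that the construction is uniform and continuous in $\w$ and $\gamma$. By Definition~\ref{teffde}, such a proof amounts to a continuous function that, given $\sigma \in \Astores\w$ together with the trivial proof $r(\sigma): \sigma \sim \sigma$ on the pullback $\w\sq{1}{1}{1}{1}\w$, returns a pullback square comparing the two result tuples.

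The termination hypothesis lets me write $\cval(\sigma) = (\w_1, u, \sigma_1, \val_1)$, and since $\sem{\unittype}\w_1$ is (up to $\sim$) a singleton, I may take $\val_1 = \unitval$. The unit computation simply returns $(\w, 1, \sigma, \unitval)$. The obvious candidate pullback between the two result worlds is $\w_1 \sq{1}{u}{u}{1} \w$, whose apex is $\w_1$, whose low point is $\w$, and whose commuting condition $1\cdot u = u \cdot 1$ is automatic; moreover the Definition~\ref{teffde} requirement $x_1 \cdot u \cdot 1 = x_1' \cdot 1 \cdot 1$ with $x_1 = 1, x_1' = u$ is immediate.

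Two witnesses must be supplied. Equality of the returned values at the apex, i.e.\ a proof in $\sem{\unittype}\w_1(\unitval, u.\unitval)$, is trivial since the unit setoid is discrete with one element. The heart of the argument is the store-equality $\sigma_1.u \sim \sigma$ at the low point $\w$. Here I would invoke Lemma~\ref{lemmasix}(2) instantiated with $\sigma_0 = \sigma_0' = \sigma$ under the diagonal relation on $\Astores$ (which lies in $\Rscr(\eff)$ because, modulo effects, $\sigma$ is trivially related to itself on any component). The lemma then yields that for every $\loc \in \w(\nwrs(\eff))$ we have $(\sigma, \sigma_1.u) \in \loc^R$; but since $\wrsin{\eff} = \emptyset$, $\nwrs(\eff)$ covers every region, so this holds for every $\loc \in \w$, which is exactly $\sigma \sim \sigma_1.u$ in $\Astores\w$.

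The main obstacle to be aware of is that $\eff$ may still contain allocation effects, so $u: \w \to \w_1$ can be a strict inclusion rather than the identity. This is harmless for our comparison: the freshly allocated abstract locations live only at the apex $\w_1$ and are simply forgotten by $u$ at the low point, while Lemma~\ref{lemmasix}(1) still guarantees the pre-existing locations in $\w$ are not disturbed. Continuity of the map $\sigma \mapsto (\w_1 \sq{1}{u}{u}{1} \w, \star)$ is inherited from continuity of $\cval$ and of the projections/constructors in $T_\eff\sem{\unittype}$, and the construction is manifestly uniform in $\w$ and $\gamma$, which yields the required identification of morphisms of fibred setoids $\sem{\Gamma \vdash e : \ety{\unittype}{\eff}} \sim \sem{\Gamma \vdash \unitval : \ety{\unittype}{\eff}}$.
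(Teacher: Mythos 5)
Your proof is correct and follows essentially the same route as the paper's: the same comparison pullback $\w_1\sq{1}{u}{u}{1}\w$, the trivial value proof at unit type, and the no-write invariance (Lemma~\ref{lemmasix}) to obtain $\sigma_1.u\sim\sigma_0$ at the low point. The paper's own argument is simply a terser version of yours, asserting directly that $\wrsin{\eff}=\emptyset$ forces the final store to agree with the initial one on all locations of $\w$.
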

 \fi
\iffull
\begin{figure*}[t]
\begin{center}
\begin{small}
\begin{tabular}{l@{\quad}l@{\quad}l@{\quad}l}
Body of Loop & Prolog & Steady Program & Epilogue\\
\texttt{x := load(p);} & \texttt{p1 := p;} &
\texttt{store(p1, y);}
\textbf{[$\wEff{\regid_1}$]} & \texttt{store(p1, y);}
\textbf{[$\wEff{\regid_1}$]}\\ 

\texttt{y := x * c;} & \texttt{p2 := p;} & \texttt{p1 := p2 + 8;} &
\texttt{y := x2 * c;} \\

\texttt{store(p, y);} & \texttt{x1 := x;} & \texttt{y := x2 * c;}
& \texttt{store(p2, y);} \textbf{[$\wEff{\regid_2}$]}\\

\texttt{p := p + 8;} & \texttt{x2 := x;} & \texttt{x1 := load(p1);
\textbf{[$\rEff{\regid_1}$]}} & \texttt{x := x2;}\\

\texttt{i := i + 1;} & \texttt{x1 := load(p1);}
\textbf{[$\rEff{\regid_1}$]} &
\texttt{store(p2, y);} \textbf{[$\wEff{\regid_2}$]}
& \texttt{p := p2;}\\

& \texttt{p2 := p1 + 8;} & \texttt{p2 = p1 + 8;}\\
& \texttt{x2 := load(p2);} \textbf{[$\rEff{\regid_2}$]} &
\texttt{y
= x1 * c;}\\

& \texttt{y := x1 * c;} & \texttt{y = load(p2);
\textbf{[$\rEff{\regid_2}$]}}\\

& \texttt{i := i + 2;} & \texttt{i := i + 2;}\\
\end{tabular}
\end{small}
\vspace{-2mm}
\end{center}
\caption{Program obtained from the loop unrolling technique. Here
\texttt{p}, \texttt{p1} and \texttt{p2} are pointers and all \texttt{load}
and \texttt{store} operations are on 64 bit numbers (float).}
\label{fig:loop-unrolling}
\vspace{-4mm}
\end{figure*}

\subsection{Loop Unrolling}
Loop unrolling is a software pipelining technique used to enhance the use
of parallel processing. The idea is instead of iterating a loop 
in a sequential manner, one attempts to process a number of iterations
of the loop at the same time using multiple processors.

As described in \cite{tristan10popl} implementing and proving the
correctness of loop unrolling techniques is hard as
one needs to demonstrate that the program resulting from loop unrolling
that can be executed in parallel is equivalent to the original sequential
program. We briefly illustrate the power of our system with regions and
effects by one of the running examples in \cite{tristan10popl}. Consider
a loop program whose body is depicted in Figure~\ref{fig:loop-unrolling}.
Intuitively, this program is multiplying all the elements of an array of 
float values by the value \texttt{c}. Clearly, instead of executing this
program sequentially, we can execute different iterations in parallel. 
In particular, after applying the loop unrolling optimization to a program,
one obtains a program that is divided in three parts: the prolog, that
initializes all the variables, the steady state, that is iterated, and the
epilogue, that is executed when the loop condition is no longer true and
the loop is over. Figure~\ref{fig:loop-unrolling} contains the program
obtained by loop unrolling two iterations of the program above. The Prolog
and the Epilogue are executed at the beginning and the end, respectively,
while the Steady Program may be executed several times. 

The task is to show that the optimized program is equivalent to the
sequential program above. Using the unrolling equations from
Lemmas~\ref{fixlse} we can unroll the loop twice
($n=2$) and extract a prologue. We can then conclude with
effect-dependent equivalences, in particular Prop.~\ref{compr} as
follows. We use two regions $\regid_1$
and $\regid_2$. All even elements of the array, that is, \texttt{p},
\texttt{p + 16}, \texttt{p + 32}, \ldots, belong to the region $\regid_1$,
while all odd elements, that is, \texttt{p + 8},
\texttt{p + 24}, \texttt{p + 40}, \ldots, belong to the region $\regid_2$.
Given this setting, the read and write effects are as shown in
Figure~\ref{fig:loop-unrolling}. It is now a simple exercise to show that
any execution of the optimized program is equivalent to an execution of
the sequential program. For instance, any instruction with a read effect 
on $\regid_1$ can be permuted so that it appears immediately before the
following instruction with write effect $\regid_1$ on the same region
$\regid_1$. This is possible because the only effect between these two
instructions is a read on the other region $\regid_2$. The same is true
for permuting instructions that read on $\regid_2$. 

\fi
\subsection{State Dependent Abstract Data Types (ADT)}
We prove the equivalence of a number of programs involving state dependent
abstract data types. 

\paragraph{Awkward Example}
 The first example is Pitts and Stark's classic \emph{awkward}
example\cite{pitts98high}. Consider the
following two programs:
\iffull
\[
 e_1 = \letin{x}{\myref{0}} \lambda f. x:= 1; f(); !x
\qquad\textrm{and}\qquad
 e_2 = \lambda f. f(); 1.
\]
Intuitively, the expressions $e_1$ and $e_2$ are equivalent as they both
\else

\(
 e_1 = \letin{x}{\myref{0}} \lambda f. x:= 1; f(); !x
\qquad\textrm{and}\qquad
 e_2 = \lambda f. f(); 1.
\)

\noindent
Intuitively, the expressions $e_1$ and $e_2$ are equivalent as they both
\fi
return the value $1$, although $e_1$ uses a fresh location to do so.
We can formally prove the equivalence of these functions as follows:
Assign the region where $x$ is allocated as $\regid$. If $f$ has the type
$\unittype \effto{\eff} \unittype$ with effects $\eff$, then $e_1$ has 
type $\ety{(\unittype \effto{\eff} \unittype) \effto{\eff, \rEff{\regid},
\wEff{\regid}}
\inttype}{\eff, \aEff{\regid}}$, while 
$e_2$ has type $\ety{(\unittype \effto{\eff} \unittype) \effto{\eff}
\inttype}{\eff}$. Notice that  $\eff$ may contain
$\rEff{\regid}$
or $\wEff{\regid}$ or both. Moreover,
assume that the footprint of a location in
region \regid\ consists of a single concrete location \cloc, and that the
guarantee of a location $\cloc^G$ consist of a single function
$\mathit{write}_1$ such that $\mathit{write}_1(\heap) = \heap'$ where
$\heap'(\cloc) = 1$ and $\heap'(\cloc') = \heap(\cloc')$ for all other
locations. Clearly $e_1$ has such a write effect.

For proving the equivalence of $e_1$ and $e_2$, assume a world $\w$ and an
abstract heap $\sigma$. Let $\sem{e_1}\w\sigma = (\w \uplus \w_1 \uplus
\w_r, u_1, \val_1, \sigma_1)$ and $\sem{e_2}\w\sigma = (\w \uplus \w_1,
u_1, \val_2, \sigma_2)$. We need to construct a pullback square $\w \uplus
\w_1 \uplus \w_r\sq{}{}{}{}\w \uplus \w_1$ such that the values $\val_1$
and $\val_2$ are equal in its apex and $\sigma_1$ and $\sigma_2$ are equal
in its low point. Since $\wEff \regid$ is in the effects of $e_1$, we
have that $\val_1 = 1$. We also have $\val_2 = 1$ trivially. Hence
$\val_1$ and $\val_2$ are equal in the apex of the pullback square $\w
\uplus
\w_1 \uplus \w_r\sq{}{}{}{}\w \uplus \w_1$. Similarly, $\sigma_1$ when
taken to the low point of the square, that is, where the locations in
$\w_r$ are forgotten, the resulting heap is equivalent to $\sigma_2$.

\paragraph{Modified Awkward Example} 
Consider now the following variant of the Awkward example, due to
Dreyer et al.\cite{dreyer10icfp}:
\iffull
\[
 e_1 = \letin{x}{\myref{0}} \lambda f. x:= 0; f(); x:= 1; f(); !x
~\textrm{ and }~
 e_2 = \lambda f. f();f(); 1.
\]
The difference is that in the first program $x$ is written to $0$ and the
\else

\(
 e_1 = \letin{x}{\myref{0}} \lambda f. x:= 0; f(); x:= 1; f(); !x
~\textrm{ and }~
 e_2 = \lambda f. f();f(); 1.
\)

\noindent
The difference is that in the first program $x$ is written to $0$ and the
\fi
call-back function is used twice. Interestingly, however, the solution
given for the Awkward example works just fine. We can prove semantically
that the type of the program $e_1$ has the same type as before in the
Awkward example, where the only writes allowed on abstract location
assigned for $x$ is to write one.
Therefore, if $f$ has effect of writing on the region $\regid$, it will set
$x$ to one.

\paragraph{Callback with Lock Example}
We now show equivalence of the following programs, also due to Dreyer
et al.\cite{dreyer10icfp}:
\iffull
\[
\begin{array}{l@{\quad}l}
 e_1 = \letin{b}{\myref{\mtrue}} \letin{x}{\myref{0}}
& e_2 = \letin{b}{\myref{\mtrue}} \letin{x}{\myref{0}}\\
\quad \langle \lambda f. \keywd{if}~{!b}~\keywd{then}
& \quad \langle \lambda f. \keywd{if}~{!b}~\keywd{then}\\
\quad\quad({b:= \mfalse};
f(); x := !x + 1; b:= \mtrue)
&
\quad\quad({b:= \mfalse};
\letin{n}{!x} f(); 
\\
\quad ~\keywd{else}~(), \lambda\_.!x\rangle
& \quad\quad x := n + 1; b:= \mtrue)
\\& \quad ~\keywd{else}~(), \lambda\_.!x\rangle
\end{array}
\]
\else

\(
\hspace{-9mm}
\begin{array}{l@{\quad}l}
 e_1 = \letin{b}{\myref{\mtrue}} \letin{x}{\myref{0}}
& e_2 = \letin{b}{\myref{\mtrue}} \letin{x}{\myref{0}}\\
\quad \langle \lambda f. \keywd{if}~{!b}~\keywd{then}
& \quad \langle \lambda f. \keywd{if}~{!b}~\keywd{then}\\
\quad\quad({b:= \mfalse};
f(); x := !x + 1; b:= \mtrue)
&
\quad\quad({b:= \mfalse};
\letin{n}{!x} f(); 
\\
\quad ~\keywd{else}~(), \lambda\_.!x\rangle
& \quad\quad x := n + 1; b:= \mtrue)
\\& \quad ~\keywd{else}~(), \lambda\_.!x\rangle
\end{array}
\)

\noindent
\fi
Both programs produce a pair of functions, one incrementing the
value stored in $x$ and the second returning the value stored in $x$.
The boolean reference $b$ serves as lock in the
incrementing function. 
Once this function is called the value in $b$ is set to
$\mfalse$ and only after calling the call-back, the value in
$x$ is incremented is $b$ set again to $\mtrue$. However, the
implementation of 
the increment function is different. While the program to the 
left calls the call-back function $f()$ and then increments the value of
$x$ using the value stored in $x$, the program to the right remembers (in
$n$) the value of $x$ before the call-back is called and then uses it to
increment the value stored in $x$. 

Assume that $x$ and $b$ are in the footprint of the same abstract location
($\loc$) in the
region $\regid$. We show that these programs are equivalent under the type
\iffull
\[
\ety{(\unittype \effto{\eff} \unittype) \effto{\eff, \wEff{\regid},
\rEff{\regid}} \unittype) \times (\unittype \effto{\rEff{\regid}}
\unittype)}{\aEff{\regid}, \eff},
\]
 where $\eff$ may contain the effects
\else

\(
\ety{(\unittype \effto{\eff} \unittype) \effto{\eff, \wEff{\regid},
\rEff{\regid}} \unittype) \times (\unittype \effto{\rEff{\regid}}
\unittype)}{\aEff{\regid}, \eff},
\)

\noindent
 where $\eff$ may contain the effects
\fi
$\wEff{\regid}, \rEff{\regid}$. In particular, the location $\loc$ is
specified as follows: its footprint consists only of the concrete
locations storing $x$ and $b$, written $\cloc_b$ and $\cloc_x$, while its
rely-condition is equality. The more interesting is its guarantee condition
($\loc^G$), which contains 
the following idempotent functions $f_i$ for $i \in \mathbb{N}$:
$f_i(\heap) = \heap$ if $\heap(\cloc_b) = \mfalse$ and $f_i(\heap) =
\heap'$ if $\heap(\cloc_b) = \mtrue$, where $\heap'(\cloc_x) = i$ if
$\heap(\cloc_x) \leq i$ and $\heap'(\cloc_x) = \heap(\cloc_x)$; 
moreover, the value of $b$ is unchanged, that is, $\heap'(\cloc_b) =
\heap(\cloc_b)$.
It is easy to check that these functions are idempotent as well as their
composition. 

First, notice that indeed the two functions above have type $\wEff{\regid}$
as the increment of $x$ is captured by using some write function
$f_i$
and moreover $b$ is $\mtrue$. Now, to show that the two programs above
are equivalent, we need to show that the value stored in $x$ before and
after the call back is called is the same. This is the case, as even if
$\wEff{\regid} \in \eff$, the value stored in $b$ is $\mfalse$, which means
that any function $f_i$ used will leave the concrete locations storing $x$
and $b$ untouched.

Notice that if the read function also called the call-back, then the
reasoning above would break, as the call-back could modify the value
stored in $x$ because $b$ is $\mtrue$.

\section{Conclusions}
\label{sec:conclusions}
We have laid out the basic theory of proof-relevant logical relations
and shown how they can be used to justify nontrivial effect-dependent
program equivalences. We have also shown that proof-relevant logical
relations give direct-style justifications of the Pitts-Stark-Shinwell
equivalences for name generation. For the first time it was possible
to combine effect-dependent program equivalences with hidden
invariants allowing ``silent modifications'' that do not count towards
the ascription of an effect.
Earlier accounts of effect-dependent program equivalences
\cite{DBLP:conf/popl/KammarP12,DBLP:conf/ppdp/BentonKBH09,DBLP:conf/ppdp/BentonKBH07,DBLP:conf/aplas/BentonKHB06,DBLP:conf/icfp/ThamsborgB11}
do not provide such possibilities.

Proof-relevant logical relations or rather the sets $|A\w|$ where $A$
is a semantic type bear a vague relationship with the \emph{model
  variables} \cite{DBLP:journals/spe/CheonLSE05} from ``design by
contract'' \cite{DBLP:journals/computer/Meyer92} and more generally
data refinement \cite{DBLP:books/cu/RoeverE1998}. The commonality is
that we track the semantic behavior of a program part with abstract
functions on some abstracted set of data that may contain additional
information (the ``model''). The difference is that we do not focus on
particular proof methods or specification formalisms but that we
provide a general, sound semantic model for observational equivalence
and program transformation and not merely for functional
correctness. This is possible by the additional, also proof-relevant
part of the semantic equality proofs between the elements of the
models. We also note that our account rigorously supports higher-order
functions, recursion, and dynamic allocation. 

Our abstract locations draw upon several ideas from separation logic
\cite{DBLP:conf/lics/Reynolds02}, in particular footprints and the
conditions on rely/guarantee assumptions from
\cite{DBLP:conf/concur/VafeiadisP07}. Intriguingly, we did not need
something resembling the ``frame rule'' although perhaps the
$\Pi$-quantification over larger worlds in function spaces plays its
role.

Pullback-preserving functors and especially the instantiation
\emph{sets of locations} are inspired by FM-sets 
\cite{DBLP:journals/fac/GabbayP02} or rather the \emph{Schanuel topos}
to which they are equivalent (see Staton \cite{statonphd} for a
comprehensive account). The instantiations other than \emph{sets of
  locations}, as well as the use of setoids for the ``values'' of
these functors rather than plain sets is original to this work.

\iffull
For a while we worked with groupoids instead of setoids.  At some
point though we found that the required equations between proofs are
needed only to establish other such equations, never in order to prove
something that does not mention them and we thus took the somewhat
bold step to give up all equations between proofs leading us to
setoids. Should we ever want to model dependent types where the index
types are already nontrivial setoids then the passage to groupoids would
become necessary for then, by way of substitution, equality proofs
interfere with actual (semantic) terms. E.g.\ if $p:e\sim e'$ then we
expect a function $A[p]$ from $A[e]$ to $A[e']$.
\fi

We would like to have a semi-formal format that allows one to
integrate semantic arguments with typing and equality derivations more
smoothly. We would also like to allow proof-relevant partial
equivalences in the Heap PER instantiation, which essentially amounts
to the ability to store values with proof-relevant equality. In
particular, this would allow us to model higher-order store with some
layering policy  \cite{DBLP:journals/iandc/Boudol10}. For
unrestricted higher-order store as in
\cite{DBLP:conf/icfp/ThamsborgB11}, but with abstract locations, one
would need to overcome the well-known difficulties with circular
definition of worlds. Step-indexing \cite{DBLP:conf/esop/Ahmed06} is
an option, but we would prefer a domain-theoretic solution. The formal
similarity of our abstract locations with the rely-guarantee formalism
\cite{DBLP:journals/logcom/ColemanJ07,DBLP:conf/concur/VafeiadisP07}
suggests the intriguing possibility of an extension to concurrency.

We also believe that update operations governed by finite state
machines \cite{DBLP:conf/popl/AhmedDR09} can
be modelled as an instance of our framework and thus combined with
effect-dependency. The application of our general framework to effects other
than reading, writing, allocation deserves further investigation.

Indeed, we feel that with the transition to proof-relevance we have
opened a door to a whole new world that hopefully others will
investigate with us. 
\iffull
\bibliography{bib}
\else
\vspace{-2mm}

\fi

\ifapp
\newpage
\appendix
\section{Online Appendix}
This appendix contains some additional technical material that was
omitted from the main body for space reasons. In particular,
Section~\ref{sec:app-syntax} contains standard details on semantics of
values and computations as well as of domain theory.
Section~\ref{sec:app-setoids} elaborates the Setoids theory, introducing
the definition of Isomorphic pullbacks and contains more properties of
p.p.f. In Section~\ref{sec:app-comp-model}, a third instantiation, more
complex than the sets of locations, but simpler than Heap PERs can be
found. Section~\ref{sec:app-logical-relations} contains most of the
machinery necessary to establish the Fundamental Theorem. Finally,
Section~\ref{sec:app-applications} contains further applications of our
setting. For instance, we prove the soundness of a number of re-writes,
such as the communting equation, duplication elimination, pure
lambda-hoist, etc. We also prove the soundness of the Masking rule and
discuss the loop-unrolling example in \cite{tristan10popl}.

\subsection{Syntax and Semantics}
\label{sec:app-syntax}
\mbox{}

\paragraph{Predomains}
A \emph{predomain} is an $\omega$-cpo, i.e. a partial order with suprema
of ascending chains. A \emph{domain} is a predomain with a least element,
$\bot$. \squelch{We use predomains and continuous functions, rather than sets
and functions, so as to be able to interpret recursive definitions.}
Recall that $f:A\rightarrow A'$ is \emph{continuous} if it is monotone
$x\leq
y \Rightarrow f(x)\leq f(y)$ and preserves suprema of ascending
chains, i.e., $f(\sup_i x_i)=sup_i f(x_i)$. Any set is a predomain with the
discrete
order. If $X$ is a set and $A$ a predomain then
any $f:X\rightarrow A$ is continuous. A subset $U$ of a
predomain $A$ is \emph{admissible} if whenever $(a_i)_i$ is an ascending
chain in $A$ such that $a_i\in U$ for all $i$, then $\sup_i a_i\in U$,
too. If $f:X\times A\rightarrow A$ is continuous and $A$ is a domain
then one defines $f^\dagger(x)=\sup_if_x^i(\bot)$ with
$f_x(a)=f(x,a)$. One has, $f(x,f^\dagger(x))=f^\dagger(x)$ and if
$U\subseteq A$ is admissible and $f:X\times U\rightarrow U$ then
$f^\dagger:X\rightarrow U$, too. We denote a partial (continuous)
function from set (predomain) $A$ to set (predomain) $B$ by $f:A
\partfun B$.  

\fulltrue
\paragraph{Semantics}
The untyped semantics of values and computations is given by
the recursive clauses in Figure~\ref{seme}; note the overloading of
semantic brackets for constants, values and computations. 
The notation
$\eta(x)$ stands for the $i$-th projection from $\eta\in\Values$
if $x$ is $x_i$ and $\eta[x{\mapsto}v]$ (functionally) updates the
$i$-th slot in $\eta$ when $x=x_i$.
\begin{figure*}[tph]
\iffull\[
\begin{array}{rcl}
\semV{x}\eta &=& \eta(x)\\ 
\semV{c}\eta &=& \semV{c}\\
\semV{(v_1,v_2)}\eta &=& (\semV{v_1}\eta,\semV{v_2}\eta)\\
\semV{v.i}\eta &=& d_i\ \mbox{if $i=1,2$, $\semV{v}\eta = (d_1,d_2)$}\\
\semV{\vfix{f}{x}t}\eta &=& \funn
{g^\dagger\,\eta}{\mbox{, where
  $g(\eta,u)=\lambda d.\semV{t}\eta[f{\mapsto}\funn{u},x{\mapsto}d]$}}
\end{array}
\]
\[
\begin{array}{rcll}
\semV{v}\eta\ \heap &=&(\heap,\semV{v}\eta)\\
\semV{\myif{v}{t_2}{t_3}}\eta\heap &=& 
\semV{t_2}\eta\heap & \mbox{if
  $\semV{v}\eta =\intt z$, $z\neq 0$}\\
\semV{\myif{x}{t_2}{t_3}}\eta &=& 
\semV{t_3}\eta\heap & \mbox{if $\semV{v}\eta=\intt 0$}
\\
\semV{\letin{x}{t_1}{t_2}}\eta\ \heap,\; &=&
\bot\mbox{, when $\semV{t_1}\eta\ \heap=\bot$}\\
\semV{\letin{x}{t_1}{t_2}}\eta\ \heap&=& 
\semV{t_2}\eta[x{\mapsto}u]\ \heap_1
\mbox{when $\semV{t_1}\eta\ \heap=(\heap_1,u)$}
\\
\semV{\myread{v}}\eta\ \heap &=& (\heap,\heap(\cloc))\mbox{, when
$\semV{v}\eta=\reff \cloc$}\\
\semV{\assign{v_1}{v_2}}\eta\ \heap &=&
(\heap[\cloc{\mapsto}\semV{v_2}\eta],\intt 0)\mbox{, if
$\semV{v_1}\eta=\reff
\cloc$}\\
\semV{\myref{v}}\eta\ \heap &=& \textit{new}(\heap,\semV{v}\eta)\\
\semV{v}\eta &=&\intt 0\mbox{, otherwise}\\
\semV{t}\eta\ \heap&=& (\heap,\intt 0)\mbox{, otherwise}
\end{array}
\]
\else
\vspace{-7mm}
\[
\begin{array}{l}
\semV{x}\eta = \eta(x), 
\semV{c}\eta = \semV{c}, 
\semV{(v_1,v_2)}\eta = (\semV{v_1}\eta,\semV{v_2}\eta),
\semV{v.i}\eta = d_i\ \mbox{if $i=1,2$, $\semV{v}\eta = (d_1,d_2)$}\\
\semV{\vfix{f}{x}t}\eta = \funn
{g^\dagger\,\eta}{\mbox{, where
  $g(\eta,u)=\lambda d.\semV{t}\eta[f{\mapsto}\funn{u},x{\mapsto}d]$}}
\end{array}
\]
\[
\begin{array}{l}
\semV{\myif{v}{t_2}{t_3}}\eta\heap =
\semV{t_2}\eta\heap\,   \mbox{if
  $\semV{v}\eta =\intt z$, $z\neq 0$},
\semV{\myif{x}{t_2}{t_3}}\eta =
\semV{t_3}\eta\heap\,  \mbox{if $\semV{v}\eta=\intt 0$}
\\
\semV{\letin{x}{t_1}{t_2}}\eta\ \heap,\; =
\bot\mbox{, when $\semV{t_1}\eta\ \heap=\bot$}, 
\semV{\letin{x}{t_1}{t_2}}\eta\ \heap= 
\semV{t_2}\eta[x{\mapsto}u]\ \heap_1
\mbox{when $\semV{t_1}\eta\ \heap=(\heap_1,u)$},\\
\semV{\myread{v}}\eta\ \heap = (\heap,\heap(\cloc))\mbox{, when
$\semV{v}\eta=\reff \cloc$}, 
\semV{\assign{v_1}{v_2}}\eta\ \heap =
(\heap[\cloc{\mapsto}\semV{v_2}\eta],\intt 0)\mbox{, if
$\semV{v_1}\eta=\reff
\cloc$},\\
\semV{\myref{v}}\eta\ \heap = \textit{new}(\heap,\semV{v}\eta)
\semV{v}\eta\ \heap =(\heap,\semV{v}\eta), 
\semV{v}\eta =\intt 0\mbox{, otherwise}, 
\semV{t}\eta\ \heap= (\heap,\intt 0)\mbox{, otherwise}
\end{array}
\]
\fi
\caption{Semantics of the untyped meta language \label{seme}}
\vspace{-8mm}
\end{figure*}

\subsection{Setoids}
\label{sec:app-setoids}
\mbox{}

\paragraph{More on dependency}
We should explain what continuity of a dependent function like
$t(-,-)$ is: if $(x_i)_i$ and $(y_i)_i$ and $(z_i)_i$ are ascending
chains in $A$ with suprema $x,y,z$ and $p_i\in A(x_i,y_i)$ and $q_i\in
A(y_i,z_i)$ are proofs such that $(x_i,y_i,p_i)_i$ and $(y_i,z_i,q_i)_i$
are ascending chains, too, with suprema $(x,y,p)$ and $(y,z,q)$ then
$(x_i,z_i,t(p_i,q_i))$ is an ascending chain of proofs (by
monotonicity of $t(-,-)$) and its supremum is $(x,z,t(p,q))$.

Formally, such dependent functions can be reduced to non-dependent ones
using pullbacks, that is $t$ would be a function defined on the pullback of
the second and first projections from $\{(x,y,p)\mid p\in A(x,y)\}$ to
$|A|$, but we find the dependent notation to be much more readable.

\paragraph{Isomorphic pullbacks}

\begin{definition}
Let $\world$ be a category of worlds.  Two pullbacks
$\w\sq{x}{u}{x'}{u'}\w'$ and $\w\sq{y}{v}{y'}{v'}\w'$ are isomorphic
if there is an isomorphism $f$ between the two low points of the
squares so that $vf=u$ and $v'f=u'$, thus also $uf^{-1}=v$ and
$u'f^{-1}=v'$.
\end{definition}
\fi\iffull
It is easy to see that pullback squares can be composed.
\begin{lemma}\label{preo}
  Given a category of worlds $\world$, such that $\w, \w', \w'' \in
\world$, 
  if $\w\sq{x}{u}{x'}{u'}\w'$ and $\w'\sq{y}{v}{y'}{v'}\w''$ are
  pullback squares as indicated then there exist $z,z',t,t'$ such that
  $\w\sq{zx}{ut}{z'y'}{v't'}\w''$ is also a pullback. 
\end{lemma}
\fi
\begin{proof}
  Choose $z,z',t,t'$ in such a way that $\sq{z}{x'}{z'}{y}$ and
  $\sq{u'}{t}{v}{t'}$ are pullbacks.  The verifications are then an
  easy diagram chase.
\end{proof}

Pullback squares can be decomposed as formally described below. This
property is used for instance in the definition of fibred setoids,
formalizing our notion of semantic computation. In particular, to
formalize that the executions of related computations do not depend on 
each other.

\begin{lemma}\label{decomp}
A pullback  square $\sq{x}{u}{x'}{u'}$ in a category of worlds is
isomorphic to $t(\sq{x}{1}{1}{x}, \sq{1}{x'}{x'}{1})$.
\end{lemma}

\paragraph{Pullback-preserving functors}
\iffull
\begin{lemma}
  If $A$ is a p.p.f., $u:\w\rightarrow \w'$ and $a,a'\in A\w$,
  there is a continuous function $A\w'(u.a,u.a')\rightarrow
  A\w(a,a')$. Moreover, the ``common ancestor'' $\underline{a}$ of $a$ and $a'$ is unique up to
$\sim$. 
\end{lemma}
Note that the ordering on worlds and world morphisms is discrete so
that continuity only refers to the $A\w'(u.a,u.a')$ argument.
\fi
\iffull
\begin{definition}[Morphism of functors]\label{morphfun}
  If $A,B$ are p.p.f., a morphism from $A$ to $B$ is a pair
  $e=(e_0,e_1)$ of continuous functions where
  $e_0:\Pi\w.A\w\rightarrow B\w$ and $e_1:\Pi\w.\Pi\w'.\Pi
  x:\w\rightarrow\w'.\Pi a\in A\w.\Pi a'\in
  A\w'.A\w'(x.a,a')\rightarrow B\w'(x.e_0(a),e_0(a'))$. A proof that
  morphisms $e,e'$ are equal is given by a continuous function
  $\mu:\Pi \w.\Pi a\in A\w.B\w(e(a),e'(a))$.
\end{definition}
These morphisms compose in the obvious way and so
the pullback-preserving functors and morphisms between them form a category. 
\fi

\paragraph{More on $S(A)$ and fibred setoids}
\iffull If $\sq{x}{u}{x'}{u'}$ and
  $\sq{y}{v}{y'}{v'}$ are two composable pullback squares with
  composite $\sq{zx}{ut}{z'y'}{v't'}$ and $p\in S(A) \sq{x}{x'}{u}{u'}
  (a,a')$ and $p'\in S(A) \sq{y}{y'}{v}{v'} (a',a'')$, then the composite
proof of $t_{S(A)}(p,p')\in S(A)\sq{zx}{ut}{z'y'}{v't'}(a,a'')$ is given
by $t_A(z.p,z'.p')$. Indeed, if $\hat\w=\cod z$ is the apex of the
composite square  then  $z.p\in A{\hat\w}(zx.a,zx'.a')$ and $z'.p'\in
A{\hat\w}(z'y.a',z'y'.a'')$ and $zx'.a'=z'y.a'$ since $zx'=z'y$ so the
two proofs compose in $A{\hat \w}$.  \fi

\iffull
\begin{lemma}\label{baf}
  Let $T$ be a fibred setoid.  The elements $\underline{t}$ given by pullback
  preservation are unique up to $\sim$.  If $u:\w\rightarrow \w'$ is
  an isomorphism then there is a continuous function
  $Tu:T\w\rightarrow T\w'$ and it is bijective up to $\sim$ with
  inverse $T(u^{-1})$. If $\sqsol$ and $\sqsol'$ are isomorphic pullback squares then there are continuous back and forth functions 
$\Pi t.\Pi t'.T\sqsol(t,t')\rightarrow T\sqsol'(t,t')$. 
\end{lemma}
\fi

\iffull
\begin{lemma}\label{meanterm}
  If $A$ is a p.p.f.\ and $T$ is a fibred setoid then in
  order to specify a morphism from $S(A)$ to $T$ with given first
  component $f_0: \Pi\w.A\w\rightarrow T\w$ it is enough to provide a
  continuous function $f_{0.5}:\Pi\w,\w'.\Pi x:\w\rightarrow \w'.\Pi
  a\in A\w.\Pi a'\in A\w'.A\w'(x.a,a')\rightarrow
  T\sq{x}{1}{1}{x}(f_0(a),f_0(a'))$.
\end{lemma} 
\begin{proof}
  If $(f_0,f_1)$ is a morphism we can define $f_{0.5}$ by
  $f_{0.5}(x,p)=f_1(x,a,a',p)$ noting that $p\in
  S(A)\sq{x}{1}{1}{x}(a,a')$.  Conversely, given $f_{0.5}$ to define
  $f_1$ we pick a pullback square $\w\sq{x}{u}{x'}{u'}\w'$ with apex
  $\overline{\w}$ and $a\in A\w, a'\in A\w'$ and $p\in A{\overline
    \w}(x.a,x'.a')$, i.e., a proof in $S(A){\sqsol}(a,a')$.  Applying
  $f_{0.5}$ to $r(-)$ yields the morphism $p_1\in
  T\sq{x}{1}{1}{x}(f_0(a),f_0(x.a))$; moreover, applying $f_{0.5}$ to
$s(p)$
  yields $p_2\in T\sq{x'}{1}{1}{x'}(f_0(a'),f_0(x.a))$.  Then,
  $t(p_1,s(p_2))\in
  Tt(\sq{x}{1}{1}{x},\sq{1}{x'}{x'}{1})(f_0(a),f_0(a'))$ so that
  Lemmas~\ref{decomp} and \ref{baf} yield the desired proof in the square
  $T\sq{x}{u}{x'}{u'}(f_0(a),f_0(a'))$.

  The second part of the lemma about equality is just a restatement of
  the definition of equality of morphisms of fibred setoids. 
\end{proof}\fi
\iffull
\begin{lemma}
Let $A,B$ be p.p.f.\ For every morphism $e:A\rightarrow B$ there  is a
morphism $S(e):S(A)\rightarrow S(B)$ such that $S(e)_0=e_0$. Thus, in
particular $S(-)$ is a full and faithful functor from the category  of
p.p.f.\ on $\world$ to the category of fibred setoids over $\world$. 
\end{lemma} 
\fi

\paragraph{On abstract heaps}
The definition of minimal pullback-preserving functor corresponds to
the p.p.f. used for values, but is used for abstract heaps. In
particular, an abstract heap at the low-point of a pullback square is
the result of forgetting locations from an abstract heap at its apex.

\iffull
Applying the definition of minimal ppf to the trivial minimal pullback
$\sq{u}{1}{1}{u}$, plus
nonemptiness, yields the following result.
\begin{lemma}
\label{lem:unique-upto-sim}
   For every $u:\w\rightarrow \w'$ and $\sigma\in\Astores\w$ there is
morphism of setoids $\Astores \w\rightarrow \Astores \w'$ which is right
inverse to $(-).u$. 
\end{lemma}
The ``unique up to $\sim$'' clause allows us in particular to assert the $\sim$-equality of two abstract stores $\sigma,\sigma'\in\Astores\overline{w}$ by proving $\sigma.x\sim\sigma'.x$ and $\sigma.x'\sim\sigma'.x'$ separately when $\sq{x}{u}{x'}{u'}$ is a minimal pullback  with apex $\overline{\w}$.
\fi

\subsection{Computational model}
\label{sec:app-comp-model}
\mbox{}

We now discuss a third instantiation of our framework, which captures
the setting developed in \cite{DBLP:conf/ppdp/BentonKBH09}.

\paragraph{Flat stores} 
The \emph{flat stores} instantiation assumes that heap
locations contain merely integer values and no pointers. 
Possible worlds are
finite sets of locations together with a function that associates each
location a \emph{region} taken from a fixed set $\Regids$ of
regions. World morphisms must preserve this tagging. We write
$\cloc\in\w$ and $\cloc\in\w(\regid)$ to mean that $\cloc$ occurs in $\w$
and with region $\regid$ in the second case.  Abstract stores
$\Astores \w$ comprise those heaps $\heap\in\Stores$ with $\dom
\heap\supseteq
\w$ and such that $\cloc\in\w$ and $\heap\in\Astores\w$ implies that
$\heap(\cloc)$ is an integer value, $\intt v$ for $v\in\mathbb{Z}$
 (thus all locations hold integer
values). We put $\heap\sim \heap'$ in $\Astores\w$ iff for all $\cloc\in
\w$
one has $\heap(\cloc)=\heap'(\cloc)$. In this case there is a unique proof,
say $\star$.  For morphism $u:\w\rightarrow \w'$ we define $\Astores
u:\Astores \w'\rightarrow \Astores \w$ by renaming concrete locations
according to $u$. The elementary effects are $\rEff\regid,
\wEff\regid,\aEff\regid$ representing reading from within, writing
into, allocating within a region $\regid$. The associated sets of
relations are given by
\iffull
\[
 \begin{array}{lcl}
R\in\mathcal{R}(\rEff{\regid})&\iff& (\sigma,\sigma')\in R\w \Rightarrow
\forall \cloc\in \w(\regid). \sigma(\cloc)=\sigma'(\cloc)
\\[3pt]

R\in\mathcal{R}(\wEff{\regid})&\iff& (\sigma,\sigma')\in R\w \Rightarrow
\forall \cloc\in \w(\regid).\forall v\in\mathbb{Z}.\Rightarrow 
(\sigma[\cloc{\mapsto}\intt{v}],\sigma'[\cloc{\mapsto}\intt{
v}])\in R\w
\\[3pt]

R\in\mathcal{R}(\aEff{\regid})&\iff& (\sigma,\sigma')\in R\w 
\Rightarrow \forall \w_1.\forall u\in\mathbf{I}(\w,\w_1). (\dom
{\w_1}\setminus \dom{\w} \subseteq \dom{\w_1(\regid)}) \\
&& \quad
\Rightarrow \forall \sigma_1\in \Astores \w_1,\sigma_1'\in \Astores \w_1'.
\sigma_1.u\sim \sigma\wedge \sigma_1'.u\sim\sigma'\wedge\\
&& \qquad \forall\cloc\in
\dom{\w_1}\setminus\dom \w.  \sigma_1(\cloc)=\sigma_1'(\cloc) \Rightarrow
(\sigma_1,\sigma_1')\in R \w_1
 \end{array}
\]
This essentially mirrors the setting of our earlier relation-based
\else

\(
\hspace{-5mm}
 \begin{array}{lcl}
R\in\mathcal{R}(\rEff{\regid})&\iff& (\sigma,\sigma')\in R\w \Rightarrow
\forall \cloc\in \w(\regid). \sigma(\cloc)=\sigma'(\cloc)
\\[3pt]

R\in\mathcal{R}(\wEff{\regid})&\iff& (\sigma,\sigma')\in R\w \Rightarrow
\forall \cloc\in \w(\regid).\forall v\in\mathbb{Z}.\Rightarrow 
(\sigma[\cloc{\mapsto}\intt{v}],\sigma'[\cloc{\mapsto}\intt{
v}])\in R\w
\\[3pt]

R\in\mathcal{R}(\aEff{\regid})&\iff& (\sigma,\sigma')\in R\w 
\Rightarrow \forall \w_1.\forall u\in\mathbf{I}(\w,\w_1). (\dom
{\w_1}\setminus \dom{\w} \subseteq \dom{\w_1(\regid)}) \\
&& \quad
\Rightarrow \forall \sigma_1\in \Astores \w_1,\sigma_1'\in \Astores \w_1'.
\sigma_1.u\sim \sigma\wedge \sigma_1'.u\sim\sigma'\wedge\\
&& \qquad \forall\cloc\in
\dom{\w_1}\setminus\dom \w.  \sigma_1(\cloc)=\sigma_1'(\cloc) \Rightarrow
(\sigma_1,\sigma_1')\in R \w_1
 \end{array}
\)

\noindent
This essentially mirrors the setting of our earlier relation-based
\fi
account of reading, writing, and allocation with integer values stores
\cite{DBLP:conf/ppdp/BentonKBH09} with the difference that allocation
is modelled with relations on the same level as reading and writing
and that the stores being related share the same layout. 

\subsection{Proof-relevant logical relations}
\label{sec:app-logical-relations}

In following establishes that the semantics of the monad corresponds
indeed to a semantic computation, that is, a fibred setoid.

\begin{proposition}
The semantic computation $T_\eff A$ as defined in
Definition~\ref{def:semantic-computation} is a fibred setoid.
\end{proposition}
\begin{proof}
  The tricky case is to show the existence of a transitive operation.
  It is here that we require the independence of abstract locations as
  stated in Definition~\ref{def:abstract-location}, which implies that
  $\Astores$ is also minimal-pullback-preserving.

Assume that there are proofs in $p_1 : T_\eff A \sq{x_1}{v_1}{x_1'}{v_1'}
(\cval, \cval')$ and $p_2 : T_\eff A \sq{x_2}{v_2}{x_2'}{v_2'} (\cval',
\cval'')$ where 
$\w \sq{x_1}{v_1}{x_1'}{v_1'} \w'$ and $\w' \sq{x_2}{v_2}{x_2'}{v_2'}
\w''$. We also have $\sigma \in \Astores \w$ and $\sigma'' \in \Astores
\w''$, such that they are equivalent in the pullback of
the low points of these two pullback squares. Let $\underline{\q}$ be such
pullback. 

In order to use the proofs $p_1$ and $p_2$, we need to construct
from $\sigma$ and $\sigma''$ an abstract heap $\sigma' \in \Astores\w'$.
Let $\overline{\q}$ be the minimal pullback over the apexes of the two pullback
squares $\w \sq{x_1}{v_1}{x_1'}{v_1'} \w'$ and $\w'
\sq{x_2}{v_2}{x_2'}{v_2'} \w''$. Then $\w$ and $\w''$ form a pullback
square with apex $\overline{\q}$ and low point $\underline{\q}$. Since
$\Astores$ is minimal-pullback-preserving, there is a $\sigma_{\q} \in \Astores
\overline{\q}$, such that it is equivalent to $\sigma$ and $\sigma''$ when
taken to the world $\underline{\q}$. We now define $\sigma' \in \Astores
\w'$ to be $\sigma_{\q}$ taken to the world $\w'$. We thus have $\sigma'
\in \Astores
\w'$, and $\sigma'' \in \Astores \w''$, such that $\sigma.v_1 \sim
\sigma'.v_1'$ and $\sigma'.v_2' \sim \sigma''.v_2'$.

We can now use the $p_1$ and $p_2$. In particular, 
let $\cval(\sigma) = (\w_1, u_1, \sigma_1, \val_1)$, $\cval'(\sigma') =
(\w_1', u_1', \sigma_1', \val_1')$, and $\cval''(\sigma'') = (\w_1'',
u_1'', \sigma_1'', \val_1'')$. From the proofs, we get two
pullback squares $\w_1\sq{}{}{}{}\w_1'$ and $\w_1'\sq{}{}{}{}\w_1''$. It
is easy to show that the values obtained are equal in the minimal pullback over 
the
apexes of these two pullback squares and that the abstract heaps are
equivalent in the pullback of their low points.
\end{proof} 

\begin{definition}[cartesian product]
If $(A,\Vdash^A)$ and $(B,\Vdash^B)$ are semantic types their cartesian product $(A\times B,\Vdash^{A\times B})$ is defined by $(A\times B)\w=A\w\times B\w$ (cartesian product of setoids) and $(v_1,v_2)\Vdash_\w^{A\times B}(a,b)\iff v_1\Vdash_\w^{A}a\wedge v_2\Vdash_\w^{B}b$. 
\end{definition}
\begin{definition}[function space]
  Let $(A,\Vdash^A)$ be a semantic type and $(T,\Vdash^T)$ be a
  semantic computation. We define a semantic type $(A{\Rightarrow}T,
  \Vdash^{A{\Rightarrow}T})$ as follows.  
  An object $f$ of $(A{\Rightarrow} T)\w$ is a pair $(f_0,f_1)$ of
  continuous functions where $f_0$ assigns to each $\w_1$ and
  $v:\w\rightarrow \w_1$ a continuous function $f_0(v):A\w_1\rightarrow
  T\w_1$. The second component $f_1$ assigns to each
  $v:\w\rightarrow \w_1$ and $v_1:\w_1\rightarrow \w_2$ a continuous
  function $\Pi a\in A\w_1.\Pi a'\in A\w_2.A\w_2(v_1.a,a')\rightarrow
  T\sq{v_1}{1}{1}{v_1}(f_0(v,a),f_0(v_1v,a'))$.

  If $f,f'\in |A{\Rightarrow}T|$ then a proof $\mu\in
  (A{\Rightarrow}T)(f,f')$ is a continuous function assigning to each
  $v:\w\rightarrow \w_1$ and $a\in A\w_1$ a proof $\mu(v,a)\in
  T\sq{1}{1}{1}{1}(f_0(v,a), f_0'(v,a))$.

  If $u:\w\rightarrow \w'$ and $f=(f_0,f_1)\in (A{\Rightarrow}T)\w$
  then $u.f\in (A{\Rightarrow}T)\w'$ is given by precomposition with
  $u$, i.e., $(u.f)_0(v,a)=f_0(vu,a)$, etc.


  As for the realisation relation $\Vdash^{A{\Rightarrow}T}$ we put
  $v\Vdash^{A{\Rightarrow}T}_\w f$ to mean that $v=\funn g$ for some
  $g$ and whenever $i:\w\rightarrow \w_1$ is an inclusion and
  $u\Vdash^A_{\w_1}a$ then $g(u)\Vdash^T_{\w_1} f(i,a)$.
\end{definition}
Notice that unlike morphisms the elements of the function space are
\emph{not} identified if they are ``provably equal.''
Notice also that if $v\Vdash^{A{\Rightarrow} T}_\w f$ implies
$v\Vdash^{A{\Rightarrow}T}_{\w_1}i.f$ whenever $i:\w\rightarrow \w_1$
is an inclusion.

In what follows we define semantic counterparts to the generic
syntactic constructions common to all instantiations, namely
application and abstraction, sequential composition, subeffecting, and
recursion
that allow us to define this interpretation of derivations in a
compositional fashion. Having given these semantic counterparts we
then omit the formal definition of the interpretation $\sem{-}$.
\begin{lemma}[Abstraction]
Let $\Gamma,A$ be  semantic types, $T$ a semantic computation. There is a function $\lambda$ so that if $e:S(\Gamma\times A)\rightarrow T$ is a morphism of fibred setoids then $\lambda(e):S(\Gamma)\rightarrow A{\Rightarrow}T$. Moreover, if $e\sim e'$ then $\lambda(e)\sim\lambda(e')$ and if $f\Vdash^{\Gamma\times A\rightarrow T} e$ then $\lambda\eta.\lambda a.f(\eta,a)\Vdash^{\Gamma\rightarrow A{\Rightarrow}T} \lambda(e)$. 
\end{lemma}
\begin{lemma}[Application]
Let $A$ be a semantic type and $T$ be a semantic computation. There  is a
morphism $\textit{app} : S((A{\Rightarrow}T)\times A)\rightarrow T$ and
$\lambda (f,a).f(a)\Vdash^{((A{\Rightarrow}T)\times A)\rightarrow T}
\textit{app}$.  
\end{lemma}

\begin{lemma}[subeffecting]
 Let $\Gamma, A$ be semantic types and $\eff, \eff'$ be effects. There is
a function \emph{subeff}, so that if $e: S(\Gamma) \to T_\eff A$, then 
$\emph{subeff}(e) : S(\Gamma) \to T_{\eff \cup \eff'} A$. Moreover, if $e
\sim e'$, then $\emph{subeff}(e) \sim \emph{subeff}(e')$. Finally, if $f
\Vdash^{\Gamma \to T_\eff A} e$ then $f \Vdash^{\Gamma \to T_{\eff \cup
\eff'} A} \emph{subeff}(e)$.
\end{lemma}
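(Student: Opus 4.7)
The plan is to exploit the anti-monotonicity of $\Rscr$ and observe that only one ingredient of $T_\eff A$ actually depends on the effect annotation. First I would note that since $\Rscr(\eff) = \bigcap_{\eff_0 \in \eff} \Rscr(\eff_0)$, if $\eff \subseteq \eff''$ then $\Rscr(\eff'') \subseteq \Rscr(\eff)$; in particular $\Rscr(\eff \cup \eff') \subseteq \Rscr(\eff)$. Inspecting Definition~\ref{teffde}, the objects $(c_0, c_1)$ of $T_\eff A\,\w$ differ from those of $T_{\eff\cup\eff'} A\,\w$ only in that $c_1$ must be defined on a \emph{larger} domain of relations; the proof sets in $(T_\eff A) \sqsol(c,c')$ and the realization predicate $\Vdash^{T_\eff A}$ make no mention of $\Rscr(\eff)$ whatsoever (the former speaks only of pullback squares over worlds, values, and heap equivalences; the latter only of the first component $c_0$).

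Given $e : S(\Gamma) \to T_\eff A$ with representative $(e_0, e_1)$, I would define $\emph{subeff}(e)$ pointwise. On objects, set $\emph{subeff}(e)_0(\w,\gamma) = (c_0, c_1\!\upharpoonright_{\Rscr(\eff\cup\eff')})$ where $(c_0, c_1) = e_0(\w,\gamma)$; the restriction is well-defined because $\Rscr(\eff\cup\eff') \subseteq \Rscr(\eff)$, and the axioms required of a $T_{\eff\cup\eff'} A$-object are vacuously inherited by restricting the domain. On proofs, since the proof component of a morphism into $T_\eff A$ produces pullback squares and heap equivalences independently of any $R$, the same $e_1$ serves verbatim as the proof component of $\emph{subeff}(e)$. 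Continuity and partiality are preserved by restriction.

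For the second clause, if $\mu$ witnesses $e \sim e'$, it supplies for each $\w$ and $\gamma$ a proof in $T_\eff A\, r(\w)(e(\gamma), e'(\gamma))$; again, proofs do not mention $\Rscr$, so the very same $\mu$ witnesses $\emph{subeff}(e) \sim \emph{subeff}(e')$. For the realization clause, $\Vdash^{T_\eff A}_\w$ depends only on the $c_0$ component, which $\emph{subeff}$ leaves untouched; hence $f_0(\eta)\Vdash^{T_\eff A}_\w e(\gamma)$ directly yields $f_0(\eta)\Vdash^{T_{\eff\cup\eff'} A}_\w \emph{subeff}(e)(\gamma)$.

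There is no genuine obstacle here — the only point demanding care is the bookkeeping check that restricting $c_1$ to $\Rscr(\eff\cup\eff')$ still satisfies the coherence conditions of Definition~\ref{teffde} (undefinedness agreement, pullback square factorization, proof in $A\overline{\w}$, and $\sim$ at the low point), but each is immediate because the restricted function is a subset of a function already known to satisfy them. The essence of the lemma is really just the observation that the ``effect'' only enters through the quantification over $R \in \Rscr(\eff)$, and this quantification becomes easier to satisfy on a smaller set of relations.
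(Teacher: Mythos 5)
Your proposal is correct and follows essentially the same route as the paper's own proof: keep the $\cval_0$ components (and the morphism's proof-level action) unchanged and restrict the relation-indexed component to $\Rscr(\eff\cup\eff')\subseteq\Rscr(\eff)$, noting that the proof sets and the realization relation of Definition~\ref{teffde} never mention $\Rscr(\eff)$. Your write-up is in fact more explicit than the paper's terse argument, but there is no substantive difference.
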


\begin{proof}
For the first component, $\emph{subeff}_0$, we use the same first component
$e_0$ of  $e$. What changes is the definition of the second
component, $\emph{subeff}_1$. It is defined only for relations $R \in
\Rscr(\eff \cup \eff')$, for which $e_1$ is also defined. For some related
given abstract heaps in $R$, $\emph{subeff}_1$ calls $e_1$ constructing the
corresponding pullback. For proofs the reasoning is similar.
\end{proof}

We elide assertions about $\sim$-versions of beta-eta-equality, and
the existence of ``value morphisms'' of type
$S(A)\rightarrow T_\eff A$ for any semantic type $A$. 

\begin{lemma}[let]
Let $\Gamma,A,B$ be semantic types and $\eff$ an effect. 
There is a function $\textit{let}$ such that if $e_1:S(\Gamma)\rightarrow
T_\eff A$ and $e_2:S(\Gamma\times A)\rightarrow T_\eff B$ are morphisms
then $\textit{let}(e_1,e_2):S(\Gamma)\rightarrow T_\eff B$. Moreover, if
$e_1\sim
e_1'$ and $e_2\sim e_2'$ then $\textit{let}(e_1,e_2)\sim
\textit{let}(e_1',e_2')$. 
Finally, if $f_1\Vdash^{\Gamma\rightarrow T_\eff A} e_1$ and
$f_2\Vdash^{\Gamma\times A\rightarrow T_\eff B} e_2$ then $\lambda \eta.
\lambda \heap.\textit{let } (\heap_1,\vval){=}f_1(\eta)(\heap)\textit{ in }
f_2(\eta,\vval)(\heap_1)\Vdash^{\Gamma\rightarrow T_\eff
A}\textit{let}(e_1,e_2)$. 
\end{lemma}

\begin{proof}
Consider the following definition for the first component of the morphism
$\textit{let}(e_1,e_2)$ which is only defined when ${e_1}$ and
${e_2}$ are defined. The type of this component is $\sem{\Gamma}\w \to
T_\eff \sem{B}\w$. Hence, assume a world $\w$, and a context $\gamma \in
\sem{\Gamma}\w$, then one returns an object $(\cval_0, \cval_1) \in
T_\eff \sem{B}\w$. The first component $\cval_0$ is:
\(
 \Pi \w. \Pi \gamma \in \sem{\Gamma}\w . \Pi \sigma \in \Astores \w.
{e_2}(\w_1)(\gamma,\val_1)\sigma_1
\)
where ${e_1}(\w)(\gamma)\sigma = (\w_1, u_1, \sigma_1, \val_1)$.

For the second component, $\cval_1$, assume a relation $R \in
\Rscr(\eff)$, and two abstract heaps $\sigma, \sigma' \in \Astores \w$
such that $(\sigma, \sigma') \in R\w$. From ${e_1}$ we get a 
proof $\w_1 \sq{x_1}{v_1}{x_1'}{v_1'} \w_1'$, where
${e_1}(\w)(\gamma)\sigma
=
(\w_1, u_1, \sigma_1, \val_1)$ and ${e_1}(\w)(\gamma)\sigma' =
(\w_1', u_1', \sigma_1', \val_1')$, such that $(\sigma_1.v_1,
\sigma_1'.v_1') \in R$ and $p : \sem{A}\overline{\w_1}(x_1.\val_1,
x_1'.\val_1')$. Applying ${e_2}$ on $\sigma_1.v_1$ and
$\sigma_1'.v_1'$ we get a proof $\q_2 \sq{y_2}{v_2}{y_2'}{v_2'} \q_2'$,
such that $(\tilde{\sigma_2}.v_2, \tilde{\sigma_2}'.v_2') \in R$. However,
we need to show that the heaps obtained from applying ${e_2}$ on $\sigma_1$
and $\sigma_1'$ (using the correct world and context), namely $\sigma_2$
and $\sigma_2'$, are related. For this we rely on
the morphism $(e_2)_1$. In particular,
we use $(e_2)_1$ on the pullback $\w_1 \sq{1}{x_1}{x_1}{1} \und{\w_1}$ and
obtain a pullback $\w_2 \sq{}{}{}{} \q_2$ such that $\sigma_2$ and
$\tilde{\sigma_2}$
are equal in its low point. Similarly, applying $(e_2)_1$ on the
pullback $\und{\w_1} \sq{x_1'}{1}{1}{x_1'} \w_1'$, we get a pullback 
$\q_2' \sq{}{}{}{} \w_2'$, where $\tilde{\sigma_2}'$ is equal to
$\sigma_2'$ in its pullback. Using Lemma~\ref{preo}, we compose the
pullbacks $\w_2 \sq{}{}{}{} \q_2$, $\q_2 \sq{}{}{}{} \q_2'$ and $\q_2'
\sq{}{}{}{} \w_2'$, obtaining a common pullback $\und{\q}$, where 
$\sigma_2$ and $\sigma_2'$ when taken to $\und{\q}$ are in $R$. 

The morphism ${\textit{let}(e_1,e_2)\sim
\textit{let}(e_1',e_2')}$ can be then defined when ${e_1 \sim e_1'}$
and ${e_2 \sim e_2'}$ are defined.
Assume a pullback $\w \sq{1}{1}{1}{1}\w$ and an abstract heap $\sigma \in
\Astores \w$ and a context $\gamma \in \sem{\Gamma}\w$. Using the
morphism between $e_1$ and $e_1'$ on these objects, we obtain a
pullback $\w_1 \sq{x_1}{v_1}{x_1'}{v_1'} \w_1'$, $p_1 \in
\sem{A}\overline{\w_1}(x_1.\val_1, x_1'.\val_1')$ and $q_1 : \sigma_1.v_1
\sim \sigma_1'.v_1'$, where ${e_1}(\w)(\gamma)\sigma = (\w_1, u_1,
\sigma_1,
\val_1)$ and ${e_1'}(\w)(\gamma)\sigma = (\w_1', u_1', \sigma_1',
\val_1')$. 
From the pullback preserving property of computations and $p_1$,
there is a common value $\und{\val} \in \sem{A}\und{\w_1}$ and context
$\und{\gamma} \in \sem{\Gamma}\und{\w_1}$ which are equal, respectively, to
$\val_1$ and $\val_1'$, and $\gamma$ and $\gamma'$ (when taken to the
correct world). We then construct a
proof $\sem{\Gamma \times A} \und{\w_1}$. We now apply twice the
morphism between $e_2$ and $e_2'$ once in the pullback $\w_1 \sq{}{}{}{}
\und{\w_1}$ and
another on the
pullback $\und{\w_1} \sq{}{}{}{} \w_1'$, obtaining two pullbacks
$\w_2 \sq{}{}{}{} \q_2$ and $\q_2 \sq{}{}{}{} \w_2'$. From
Lemma~\ref{preo}, we can compose them where the resulting values and
heaps are equal.
\end{proof}

\begin{lemma}[fix]\label{fixlse}
  Let $\Gamma,D$ be semantic types so that for each $\w$ the predomain
  $D\w$ is a domain with least element $\bot\w$ such that
  $(\bot\w,\bot\w,r(\bot\w))\leq (d,d',p)$ holds for every proof $p\in
  D(d,d')$ and such that $x.\bot_\w=\bot_{\w'}$ holds for every
  $x:\w\rightarrow\w'$.\footnote{For example $D=A {\Rightarrow}T_\eff B$
for semantic types $A,B$.}
\begin{compactenum}
\item[i] There then exists a function
  $\textit{fix}$ so that whenever $e:\Gamma\times D\rightarrow D$ then
  $\textit{fix}(e):\Gamma\rightarrow 
  D$
\item[ii] If $e\sim e'$ then $\textit{fix}(e)\sim
  \textit{fix}(e')$. Furthermore, the fixpoint and unrolling equations
  from Lemma~\ref{fixlse} hold. 
\item[iii] Finally, if $f\Vdash^{\Gamma\times D\rightarrow D}e$ then
$f^\dagger\Vdash \textit{fix}(e)$.    
\end{compactenum}
\end{lemma}
\begin{proof}
For every $\w$ we have $e_0\w:\Gamma\w\times D\w
\rightarrow D\w$. We can thus form $\textit{fix}(e)_0\w:=(e_0\w)^\dagger :
\Gamma \w\rightarrow D\w$.  It remains to define $\textit{fix}(e)_1$. To do that, we
recall that we 
have an ascending chain of elements 
$\textit{fix}^n(e)_0\w(\gamma)\in D\w$ given by 
$\textit{fix}^0(e)_0\w(\gamma)=\bot_\w$ and 
$\textit{fix}^{n+1}(e)_0\w(\gamma)=e_0\w(\gamma,\textit{fix}
^n(e)_0\w(\gamma))$ and have
$\textit{fix}(e)_0\w(\gamma)=\sup_n\textit{fix}^n(e)_0\w\gamma$. 
Now suppose that $\gamma\in\Gamma\w$ and $x:\w\rightarrow \w'$ and
$\gamma'\in\Gamma\w'$ and $p\in\Gamma\w'(x.\gamma,\gamma')$. Write
$d_n=\textit{fix}^n_0\w(\gamma)$ and $d_n'=\textit{fix}^n_0\w'(\gamma')$.
Inductively, we get proofs $p_n\in
D\w'(x.d_n,d_n')$ where $p_0=r(\bot_{\w'})$ (note that
$x.\bot_\w=\bot_{\w'}$) and $p_{n+1}=e_1(p,p_n)$. Since
$(x.\bot_\w,\bot_{\w'},r(\bot_{\w'}))\leq (x.d_1,d_1',p_1)$ we obtain by
monotonicity of $e_1$ and induction that $(x.d_n,d_n',p_n)$ is an ascending
chain with supremum $(x.\sup_n d_n,\sup_n d_n',q)$ for some proof $q$
which we take as $\textit{fix}(e)_1(p)$. Note that the passage from $p$ to
$q$ is continuous. 
\end{proof}

\subsection{Applications}
\label{sec:app-applications}\mbox{}

The following lemma formalizes our intuition that 

\paragraph{Lemma~\ref{lemmasix}}
\begin{proof}
The proof that the values are equal in $\und{\w}$ follows directly from 
the definition of computations and effects.

For the first part,  we use the following relation $R$ defined for all
worlds $\w_1$, such that $u : \w \to \w_1$:

\(
\begin{array}{l}
 \{(\sigma, \sigma') \mid \sigma \sim_{\rdsin{\eff, \w}} \sigma' 
\land \forall \loc \in \w.\\
 \qquad (\sigma.u, \sigma_0) \in \loc^R \land
(\sigma'.u, \sigma_0') \in \loc^R \lor (\sigma.u, \sigma'.u) \in
\loc^R\}
\end{array}
\)

\noindent
Otherwise, for the worlds $\w_2$ not reachable from \w, the relation
$R\w_2$ is the trivial set. Notice that $R \in \Rscr(\eff)$ and it is
contravariant. The claim then 
follows directly.

The proof of the second part follows in a similar fashion, but we use the
following relation:

\(
\{(\sigma, \sigma') \mid \sigma \sim_{\rdsin{\eff, \w}} \sigma' 
\land \sigma \sim_{\nwrs(\eff, \w)} \sigma_0.u \}
\)

\noindent
And we use a similar relation for showing that $\sigma_0'$ and
$\sigma_1'.u'$ agree on the not written locations $\nwrs(\eff, \w)$.

For the third property, first, we show that there is an
isomorphism between $\w(\regid)$ and $\und{\w}(\regid)$ for all regions
$\regid \notin \alsin{\regid}$ by using the following relation:

\(
\{(\sigma, \sigma') \mid \sigma \sim \sigma' \land \forall
\regid \notin \alsin{\eff}. \#_\regid(\sigma), \#_\regid(\sigma')
\leq \#_\regid(\w) \} 
\)

\noindent
where $\#_\regid$ denotes the number of abstract locations coloured with 
$\regid$.
Clearly, $R \in
\Rscr(\eff)$ as $\eff$ does not contain any allocation effects. This gives
us one direction, while the other direction is obtained by using the
inclusion morphisms. Given this property, one can easily construct the 
function $\cval'$.
\end{proof}

\begin{proposition}\label{compr}
(commuting computations) Suppose that:
$
 \Gamma \vdash e_1 : \ety{\tau_1}{\eff_1}$ and
$\Gamma \vdash e_2 :
\ety{\tau_2}{\eff_2}
$,
where  $\rdsin{\eff_1} \cap \wrsin{\eff_2} = \rdsin{\eff_2} \cap
\wrsin{\eff_1}= 
\wrsin{\eff_1} \cap \wrsin{\eff_2} = \emptyset$. Let 
\[
 \begin{array}{l}
  e = \letin{x}{e_1} \letin{y}{e_2} (x, y) \quad \textrm{and} \quad
  e' = \letin{y}{e_2} \letin{x}{e_1} (x, y)
 \end{array}
\]
then $\sem{\Gamma \vdash e : \ety{ \tau_1 \times \tau_2 }{\eff_1\cup
\eff_2}} \sim \sem{\Gamma \vdash e' : \ety{ \tau_1 \times \tau_2
}{\eff_1\cup \eff_2}}$.
\end{proposition}

\begin{proof}
Assume a world \w and a context $\gamma \in \sem{\Gamma}\w$.
Let $\cval_i = \sem{\Gamma \vdash e_i : \ety{\tau_i}{\eff_i}}$ for
$i = 1,2$. 

It is enough to assume a pullback 
$\w \sq{1}{1}{1}{1} \w$, and an abstract heap $\sigma_0 \in
\Astores \w$. 
Assume that these functions are defined as follows:

\(
 \begin{array}{l}
 \cval_1(\w)(\gamma)\sigma_0 = (\w \uplus \w_1, u_1, \sigma_1, \val_1)\\
 \cval_2(\w\uplus\w_1)(u_1.\gamma) \sigma_1 = (\w \uplus \w_1 \uplus
\w_2, u_2, \sigma_2, \val_2)\\
 \cval_2'(\w)(\gamma)\sigma_0 = (\w \uplus \w_1', u_1', \sigma_1',
\val_1')\\
 \cval_1'(\w\uplus\w_2')(u_1'.\gamma) \sigma_1' = (\w \uplus \w_1'
\uplus \w_2', u_2', \sigma_2',\val_2')
 \end{array}
\)

\noindent
One can easily show that when one of these functions is undefined, then 
the corresponding function is also undefined. 

We need to show that there is a proof $\w \uplus \w_1 \uplus \w_2
\sq{x}{v}{x'}{v'} \w \uplus \w_1' \uplus \w_2'$ such that
$p : \sigma_2.v \sim \sigma_2'.v'$ and $p_1 : xu_2.\val_1 \sim x'.\val_2'$
and $p_2 : x.\val_2 \sim x'u_2'.\val_1'$.
Decompose $\w = \w_0 \uplus \q_1 \uplus \q_2$, where
$\w(\wrsin{\eff_i}) \subseteq \q_i$. 
The existence of such decomposition follows from the disjointness of write
effects in $\eff_1$ and $\eff_2$.

From Lemma~\ref{lem:invariant} and from 
the disjointness of reads and writes, it is the case that $\sigma_0$ and
$\sigma_1'$ agree on the locations in $\w_0 \uplus \q_1$.
That is, there is a proof $p : \sigma_0.1 \sim \sigma_1'.x_1$, defined 
using the proof $\w_0 \uplus \q_1 \sq{x_1}{1}{1}{x_1} \w_0
\uplus \q_1 \uplus \w_2'
$, where $x_1 : \w_0 \uplus \q_1 \to \w_0 \uplus \q_1 \uplus \w_2'$.
Applying $(e_1)_1$ to the objects above,
we get the pullback $\w_0 \uplus \q_1 \uplus \w_1 \sq{x_2}{v_2}{x_2'}{v_2'}
\w_0 \uplus \q_1
\uplus
\w_2' \uplus \w_1'$, and  proof $q : x_2.\val_1 \sim
x_2'.\val_2'$. Symmetrically, we obtain the proofs 
$\w_0 \uplus \q_2 \uplus \w_2 \sq{x_3}{v_3}{x_3'}{v_3'} \w_0 \uplus \q_2
\uplus \w_1' \uplus \w_2'$, and  $q' : x_3.\val_2 \sim
x_3'.\val_1'$. Hence, there is also a proof in the larger world $cod(x)$.

To see informally that the final heaps $\sigma_2$ and $\sigma_2'$ are
equal, we use
the following facts obtained using Lemma~\ref{lem:invariant}: $\sigma_2$
and $\sigma_1$ agree on the locations in $\w_0 \uplus \q_1$; moreover, 
$\sigma_2'$ and $\sigma_1$ agree on the locations in $\w_0 \uplus \q_1$; 
hence $\sigma_2$ and $\sigma_2'$ agree on the locations in $\w_0 \uplus
\q_1$. Symmetrically, we can also argue that $\sigma_2$ and $\sigma_2'$
agree on the locations in $\w_0 \uplus \q_2$. Composing these proofs 
(see comment after Lemma~\ref{lem:unique-upto-sim} why this is allowed), 
we get that $\sigma_2$ and $\sigma_2'$ agree on the locations in $\w$. 
Finally, since the locations allocated by one computation are not used by
the other computation, the final heaps are equal at the
apex world.
%
%
\end{proof}

The following propositions are also provable. All propositions are proved
in a similar way as the soundness proof of the commuting case, using 
Lemma~\ref{lem:invariant} when needed. For instance, the soundness proof
of the duplicated computation uses the third case in
Lemma~\ref{lem:invariant}.

\begin{proposition}[dead computation]
Suppose that
$\Gamma \vdash e : \ety{\unittype}{\eff}$,
that \iffull $\wrsin{\eff}=\emptyset$ \else 
$\eff$ contains no writes \fi and that $\sem{\Gamma\vdash
e: \ety{\unittype}{\eff}}\w(\gamma)(\sigma)$ is defined for all
$\w,\gamma\in\sem{\Gamma}\w, \sigma\in\Astores\w$. 
Then if for all worlds \w, all contexts $\gamma \in
\sem{\Gamma}\w$, and abstract heaps $\sigma \in \Astores\w$,
the function $\sem{\Gamma \vdash e}(\w)(\gamma)(\sigma)$ is defined, 
then $\sem{\Gamma \vdash e : \ety{\unittype}{\eff}} \sim \sem{\Gamma
\vdash \unitval : \ety{\unittype}{\eff}}$.
\end{proposition}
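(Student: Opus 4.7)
The plan is to construct an explicit equivalence proof $\mu$ between $\sem{\Gamma\vdash e:\ety{\unittype}{\eff}}$ and $\sem{\Gamma\vdash \unitval:\ety{\unittype}{\eff}}$ in the sense of Definition~\ref{teffde}. Fix $\w$, $\gamma\in\sem{\Gamma}\w$, and $\sigma\in\Astores\w$. By the definedness hypothesis, $\sem{e}\w(\gamma)\sigma=(\w_1,u,\sigma_1,v_1)$, while $\sem{\unitval}\w(\gamma)\sigma = (\w,1_\w,\sigma,())$ since $\unitval$ performs neither world extension nor heap modification. Because $v_1\in\sem{\unittype}$ can only be $()$, value-equality at the apex is trivial and the entire proof reduces to exhibiting a pullback square $\w_1\sq{x}{v}{x'}{v'}\w$ with $xu=x'\cdot 1_\w$ together with a proof that $\sigma_1.v\sim\sigma.v'$ at its low point.

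First I would choose the natural candidate pullback $\w_1\sq{1_{\w_1}}{u}{u}{1_\w}\w$. This is indeed a pullback (with apex $\w_1$ and low point $\w$) because every morphism in a category of worlds is a monomorphism, so factorisation through $u$ is unique. Its compatibility with the computation morphisms is immediate since $1_{\w_1}\cdot u = u = u\cdot 1_\w$. The only remaining obligation is to establish $\sigma_1.u\sim\sigma$ in $\Astores\w$.

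For this, I would invoke Lemma~\ref{lemmasix} with $\sigma_0=\sigma_0'=\sigma$, which are trivially related by $\sim_{\rdsin{\eff,\w}}$. Part~(2) of that lemma immediately yields $(\sigma,\sigma_1.u)\in\loc^R$ for every $\loc\in\w(\nwrs(\eff))$, and since $\wrsin{\eff}=\emptyset$ we have $\nwrs(\eff)=\regs{\eff}$, covering every region tracked by the effect. For abstract locations $\loc\in\w(\regid)$ with $\regid\notin\regs{\eff}$, I would first use Lemma~\ref{lemmasix}(3) to replace $\sem{e}$ by an equivalent morphism that extends the world only inside $\alsin{\eff}\subseteq\regs{\eff}$, and then instantiate the second component of $\sem{e}$ with the diagonal relation $\Delta\w_2=\{(\sigma_a,\sigma_b)\mid\sigma_a\sim\sigma_b\text{ in }\Astores\w_2\}$, which lies in $\Rscr(\eff)$ precisely because $\wrsin{\eff}=\emptyset$ (the allocation clause being satisfied since newly allocated abstract locations lie in regions in $\alsin{\eff}$ and the required agreement on $\loc^R$ transfers along $u$).

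The main obstacle will be the untracked-region case: the statement of Lemma~\ref{lemmasix} only explicitly addresses locations in $\w(\nwrs(\eff))$, so one must argue separately that a well-typed computation with effect $\eff$ cannot observably disturb abstract locations in regions outside $\regs{\eff}$. Packaging the diagonal-relation argument with the factorisation of $u$ as isomorphism followed by inclusion (guaranteed by the instantiation) should deliver the missing case, and everything assembles into the continuous proof $\mu$ required by the definition of fibred-setoid morphism equality, completing the proof.
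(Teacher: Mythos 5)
Your skeleton coincides with the paper's own proof: the same square $\w_1\sq{1}{u}{u}{1}\w$ with apex $\w_1$ and low point $\w$, trivial value equality at $\unittype$, and Lemma~\ref{lemmasix}(2) for agreement on the locations tagged by regions of $\eff$. The genuine gap is in your last step, the locations in regions outside $\regs{\eff}$: preservation of the diagonal relation $\Delta$ cannot deliver what you need. Instantiating the second component of $\sem{e}\w(\gamma)$ with $R=\Delta$ and the pair $(\sigma,\sigma)$ only yields that the two resulting abstract stores are $\sim$-related at the low point of some square --- but both components are the \emph{same} run of $e$ on $\sigma$, so this is vacuous and says nothing about how the final store $\sigma_1$ relates to the \emph{initial} store $\sigma$. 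Relation preservation can only detect changes relative to the starting state if the relation itself mentions that fixed starting state; the diagonal does not. (The appeal to Lemma~\ref{lemmasix}(3) does not repair this: replacing $\sem{e}$ by an equivalent morphism only controls where the world is extended, not whether existing abstract locations are disturbed.)

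The standard fix is the device used in the paper's proof of Lemma~\ref{lemmasix}: a relation \emph{anchored} at the initial store, e.g.\ for $u:\w\to\w_1$ put $R\w_1=\{(\sigma_a,\sigma_b)\mid \sigma_a\sim_{\rdsin{\eff,\w}}\sigma_b\ \wedge\ \forall\loc\in\w.\,(\sigma_a.u,\sigma)\in\loc^R\wedge(\sigma_b.u,\sigma)\in\loc^R\}$ (trivial on worlds not reachable from $\w$). Since $\wrsin{\eff}=\emptyset$ no closure under guarantee functions is required, the first conjunct discharges the read clause, and the allocation clause is unaffected because the constraint only concerns $\loc\in\w$; hence $R\in\Rscr(\eff)$, and preservation of $R$ gives $(\sigma_1.u,\sigma)\in\loc^R$ for \emph{every} $\loc\in\w$, untracked regions included, which is exactly the obligation $\sigma_1.u\sim\sigma$ at the low point. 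This is in effect how the paper argues: its own proof of the proposition simply asserts, from $\wrsin{\eff}=\emptyset$, that $\sigma_1$ and $\sigma_0$ agree on all locations of $\w$, the relational argument being the one in the proof of Lemma~\ref{lemmasix}. With such an anchored relation in place of $\Delta$, the rest of your construction of the equivalence proof $\mu$ goes through as written.
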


\begin{proof}
Assume a world \w and a context $\gamma \in \sem{\Gamma}\w$.
Let $\cval = \sem{\Gamma \vdash e : \ety{\tau}{\eff}}$. 
It is enough to assume a pullback 
$\w \sq{1}{1}{1}{1} \w$, and an abstract heap $\sigma_0 \in
\Astores \w$. Let $\cval(\w)(\gamma)\sigma_0  =  (\w, 1, \sigma_1,
\val_1)$. We need to construct a pullback such that $\val_1$ is equivalent
to $\unitval$ in its apex and $\sigma_1$ is equivalent to $\sigma_0$ in
its low point. Consider the pullback $\w_1 \sq{1}{u}{u}{1} \w$. Clearly
$\val_1 = \unitval$, and therefore the values are equivalent in $\w_1$.
Moreover, from the fact that $\wrsin{\eff} =
\emptyset$, $\sigma_1$ and $\sigma_0$ agree on all locations in $\w$.
Hence, $\sigma_1.u \sim \sigma_0$, which finishes the proof. 
\end{proof}

\begin{proposition}[duplicated computation]\label{dupl}
  Suppose that $\Gamma \vdash e:\ety{\tau}{\eff}$ and suppose
that
  $\reads(\eff)\cap\writes(\eff)=\allocs(\eff)=\emptyset$. Thus, $e$
  reads and writes on disjoint portions of the store and makes no
  allocations. The the terms  $e_1$ and $e_2$ below
\[\begin{array}{l}
\letin{x}{e}{(x,x)} ~\textrm{ and }~ \letin{x}{e} \letin{y}{e}{(x,y)}
\end{array}
\]
 are contextually equivalent. That is
formally $\sem{\Gamma \vdash e_1 : \ety{\tau \times \tau}{\eff}} \sim
\sem{\Gamma \vdash e_2 : \ety{\tau \times \tau}{\eff}}$.
\end{proposition}

\begin{proof}
Assume a world \w and a context $\gamma \in \sem{\Gamma}\w$.
Let $\cval = \sem{\Gamma \vdash e : \ety{\tau}{\eff}}$. 
It is enough to assume a pullback 
$\w \sq{1}{1}{1}{1} \w$, and an abstract heap $\sigma_0 \in
\Astores \w$. 
From Lemma~\ref{lemmasix} and since these functions do not allocate, we
can assume that they do not cause any world extension and are therefore
defined as follows:
\[
 \begin{array}{lcl}
 \cval(\w)(\gamma)\sigma_0  =  (\w, 1, \sigma_1, \val_1) &
\textrm{and}&
  \cval(\w)(\gamma)\sigma_1  =  (\w, 1, \sigma_2, \val_2).
 \end{array}
\]
We need to show that the values $\val_1$ and $\val_2$ are equivalent and 
the heaps $\sigma_1$, obtained by applying once $e$, and $\sigma_2$,
obtained by applying twice $e$, are also equal. 

Decompose $\w = \w_0 \uplus \w_r \uplus \w_w $, where $\w_r$ contains all
the regions read by $e$ and $\w_w$ all the regions written by $e$. This is
possible because of the disjointness of of $e$'s read and write effects. 
From Lemma~\ref{lemmasix} and the disjointness of  $e$'s read and write
effects, we have that $\sigma_0$ and $\sigma_1$ agree on the regions read
by $e$, that is, $\sigma_0 \sim_{\rdsin{\eff, \w}} \sigma_1$. Hence, again
from Lemma~\ref{lemmasix}, we have that the values $\val_1$ and $\val_2$
are equal. Moreover, the locations in $\w_w$ are equaly written, while the
locations in $\w_0 \uplus \w_r$ are left unchanged, that is,
$\sigma_1$ and $\sigma_2$ agree on the location in $\w$.
\end{proof}

\begin{proposition}[pure lambda hoist]\label{hoist}
Suppose that $
\Gamma\vdash e: \ety{Z}{\emptyset}$ and 
$\Gamma,x{:}X,y{:}Z\vdash e': \ety{Y}{\eff}$
Let $e_1$ and $e_2$ be respectively $
\lambda x.\letin{y}{e} {e'}$ and
$\letin{y}{e}{\lambda x.e'}$. 
Then $\sem{\Gamma\vdash  e_1 :
\ety{(\valty{X\effto\eff
Y})}{\emptyset}} \sim 
\sem{\Gamma\vdash  e_2 : \ety{(\valty{X\effto\eff
Y})}{\emptyset}}$.
\end{proposition}

\begin{proof}
Assume a world \w and a context $\gamma \in \sem{\Gamma}\w$.
Let $\cval = \sem{\Gamma \vdash e : \ety{\tau}{\eff}}$ and 
$\cval' = \sem{\Gamma, x: X, y: Z \vdash e' : \ety{\tau}{\eff}}$. 
It is enough to assume a pullback 
$\w \sq{1}{1}{1}{1} \w$, and an abstract heap $\sigma_0 \in
\Astores \w$. Since $e$ has no effects, we have no world extension:
\[
 \cval(\w)(\gamma)\sigma_0  =  (\w, 1, \sigma_1', \val_1') 
\]
Moreover, from Lemma~\ref{lemmasix}, $\sigma_1$ and $\sigma_0$ agree on
all locations. We now show that 
\[
\sem{\Gamma\vdash  \lambda x.\letin{y}{e} {e'(x,y)} : (\valty{X\effto\eff
Y})}
\sim
\sem{\Gamma\vdash \lambda x.e'(x,\val_1') : (\valty{X\effto\eff
Y})}
\]
In order to prove this, assume a morphism $v : \w \to \w_1$ and $a \in
\sem{X}\w_1$. We need then to prove that the computations resulting from 
applying $a$ to the functions above are equivalent in the pullback
$\w_1\sq{1}{1}{1}{1}\w_1$. For this, assume an abstract heap $\sigma \in
\Astores \w_1$. Since $e$ has no effect, we have no world extension:
\[
\begin{array}{l}
 \cval(\w_1)(\gamma)\sigma  =  (\w_1, 1, \sigma_1, \val_1) \\
 \cval'(\w_1)(\gamma, a, \val_1 )\sigma_1  =  (\w_2, 1, \sigma_2, \val_2)\\
\cval'(\w_1)(\gamma, a, \val_1' )\sigma  =  (\w_2', 1, \sigma_1', \val_2)
\end{array}
\]
Since $e$ is pure, we have $\val_1 = v.\val_1'$ and from
Lemma~\ref{lemmasix} we have that $\sigma_1$ and $\sigma$ agree on all
locations in $\w_1$ and in particular on locations read by $e'$. Hence, 
again by Lemma~\ref{lemmasix} the pullback proof exists where $\sigma_2$
and 
$\sigma_1'$ are equal in its low point and the resulting values are equal
in its apex.
\end{proof}

\paragraph{Masking}
We now justify soundness of the masking rule shown below:
\[
 \infer[\textrm{Masking}]{\Gamma \vdash t :
\ety{\tau}{\eff\setminus\{\rEff{\regid},\wEff{\regid}, \aEff{\regid}\}}}
{\Gamma \vdash t : \ety{\tau}{\eff} 
\quad \regid \notin \regs{\Gamma} \cup \regs{\tau}}
\]
which allows one to mask effects, that is, allowing it to behave closer to
pure functions. As discussed in \cite{DBLP:conf/ppdp/BentonKBH07}, as the
effect-dependent equations can be applied only if some conditions on the
set of effects is satisfied, the masking of effects may enable the use of
such equations. (See the commutation computation equation.)

Assume that for 
for every set of regions $R$, we take a different instantiation $\world_R$
where all abstract locations get colors from $R$. Within $\world_R$ we can
interpret app, lambda, fix, etc.
If $R\subseteq R'$ and $X$ is a semantic type over $\world_{R'}$ denote $X
| R$ its restriction to $\world_R$. 
In our setting, we prove of the soundness of the
masking rule by providing morphisms between the objects in
$\world_R$ and objects in $\world_{R'}$ when restricted to $R$, where $R
\subseteq R'$. This corresponds in our
setting to the Masking Lemma in \cite{DBLP:conf/ppdp/BentonKBH07} and is
formalized by introducing the notion of matching pairs:
 Let $X$ be a semantic
type over $\world_R$ and $X'$ be a semantic type over
$\world_{R'}$. The two form a \emph{matching pair} if there are morphisms
$i: X \to X'|R$ and $j: X'|R \to X$ both tracked by the identity on the
level of values and isomorphisms w.r.t. $\sim$. The idea is that if $\tau$
only mentions regions in $R$ then $\sem{\tau}$ with respect to $R$ and
$\sem{\tau}$ with respect to $R'$ will be a matching pair.


Suppose that $\w\in \world_R$. 
If $\sigma \in \Astores \w$ then, since $\w$ can be viewed also over $R'$,
we can understand $\sigma$ as living in $\world_{R'}$. Conversely, if $\w
\in \world_{R'}$ and $\sigma\in \Astores \w$, then we also have 
$\sigma \in \Astores \w|R$ by coarsening. This is because if $\sigma$
satisfies all the contracts in the larger worlds involving the regions
$R'$, then it also satisifies the contracts for the regions in the smaller
set $R$. In fact, every world $\w \in \world_{R'}$ induces a world $\w|R
\in \world_{R}$.

We now prove that if only regions from $R$ are mentioned in $\tau$ then
$\sem{\tau}R$ and $\sem{\tau}R'$ form a matching pair where $\sem{\cdot}R$
denotes the interpretation with respect to $\world_R$: Suppose that $\eff$
mentions all of $R'$ and that $(\Gamma,\Gamma'), (A,A')$
are matching pairs and that $e:\Gamma' \to T_\eff A'$ is a morphism tracked
by $f:\Values \to \Comps$. There then exists a morphism $mask(e) : \Gamma
\to T_{\eff|R} A$  also tracked by $f$ and if $e\sim e'$ then $mask(e)
\sim mask(e')$.

Let the morphisms $i_\Gamma$ and $j_\Gamma$ due to the fact that
$(\Gamma, \Gamma')$ form a matching pair and $i_A$ and $j_A$ due to the
fact that $(A, A')$ form a matching pair. It is then easy to prove the
soundness of masking by using the morphism
$mask(e)\w(\gamma)(\sigma) =
\letin{(\sigma_1,v)}{e(i_\Gamma(\gamma))(\sigma)} (\sigma_1, j_A(v))$.

\iffull
\begin{figure*}[t]
\begin{center}
\begin{small}
\begin{tabular}{l@{\quad}l@{\quad}l@{\quad}l}
Body of Loop & Prolog & Steady Program & Epilogue\\
\texttt{x := load(p);} & \texttt{p1 := p;} &
\texttt{store(p1, y);}
\textbf{[$\wEff{\regid_1}$]} & \texttt{store(p1, y);}
\textbf{[$\wEff{\regid_1}$]}\\ 

\texttt{y := x * c;} & \texttt{p2 := p;} & \texttt{p1 := p2 + 8;} &
\texttt{y := x2 * c;} \\

\texttt{store(p, y);} & \texttt{x1 := x;} & \texttt{y := x2 * c;}
& \texttt{store(p2, y);} \textbf{[$\wEff{\regid_2}$]}\\

\texttt{p := p + 8;} & \texttt{x2 := x;} & \texttt{x1 := load(p1);
\textbf{[$\rEff{\regid_1}$]}} & \texttt{x := x2;}\\

\texttt{i := i + 1;} & \texttt{x1 := load(p1);}
\textbf{[$\rEff{\regid_1}$]} &
\texttt{store(p2, y);} \textbf{[$\wEff{\regid_2}$]}
& \texttt{p := p2;}\\

& \texttt{p2 := p1 + 8;} & \texttt{p2 = p1 + 8;}\\
& \texttt{x2 := load(p2);} \textbf{[$\rEff{\regid_2}$]} &
\texttt{y
= x1 * c;}\\

& \texttt{y := x1 * c;} & \texttt{y = load(p2);
\textbf{[$\rEff{\regid_2}$]}}\\

& \texttt{i := i + 2;} & \texttt{i := i + 2;}\\
\end{tabular}
\end{small}
\vspace{-2mm}
\end{center}
\caption{Program obtained from the loop unrolling technique. Here
\texttt{p}, \texttt{p1} and \texttt{p2} are pointers and all \texttt{load}
and \texttt{store} operations are on 64 bit numbers (float).}
\label{fig:loop-unrolling}
\vspace{-4mm}
\end{figure*}

\paragraph{Example: Loop Unrolling}
Loop unrolling is a software pipelining technique used to enhance the use
of parallel processing. The idea is instead of iterating a loop 
in a sequential manner, one attempts to process a number of iterations
of the loop at the same time using multiple processors.

As described in \cite{tristan10popl} implementing and proving the
correctness of loop unrolling techniques is hard as
one needs to demonstrate that the program resulting from loop unrolling
that can be executed in parallel is equivalent to the original sequential
program. We briefly illustrate the power of our system with regions and
effects by one of the running examples in \cite{tristan10popl}. Consider
a loop program whose body is depicted in Figure~\ref{fig:loop-unrolling}.
Intuitively, this program is multiplying all the elements of an array of 
float values by the value \texttt{c}. Clearly, instead of executing this
program sequentially, we can execute different iterations in parallel. 
In particular, after applying the loop unrolling optimization to a program,
one obtains a program that is divided in three parts: the prolog, that
initializes all the variables, the steady state, that is iterated, and the
epilogue, that is executed when the loop condition is no longer true and
the loop is over. Figure~\ref{fig:loop-unrolling} contains the program
obtained by loop unrolling two iterations of the program above. The Prolog
and the Epilogue are executed at the beginning and the end, respectively,
while the Steady Program may be executed several times. 

The task is to show that the optimized program is equivalent to the
sequential program above. Using the unrolling equations from
Lemmas~\ref{fixlse} we can unroll the loop twice
($n=2$) and extract a prologue. We can then conclude with
effect-dependent equivalences, in particular Prop.~\ref{compr} as
follows. We use two regions $\regid_1$
and $\regid_2$. All even elements of the array, that is, \texttt{p},
\texttt{p + 16}, \texttt{p + 32}, \ldots, belong to the region $\regid_1$,
while all odd elements, that is, \texttt{p + 8},
\texttt{p + 24}, \texttt{p + 40}, \ldots, belong to the region $\regid_2$.
Given this setting, the read and write effects are as shown in
Figure~\ref{fig:loop-unrolling}. It is now a simple exercise to show that
any execution of the optimized program is equivalent to an execution of
the sequential program. For instance, any instruction with a read effect 
on $\regid_1$ can be permuted so that it appears immediately before the
following instruction with write effect $\regid_1$ on the same region
$\regid_1$. This is possible because the only effect between these two
instructions is a read on the other region $\regid_2$. The same is true
for permuting instructions that read on $\regid_2$. 
\fi
\end{document}